\documentclass[letterpaper, 10 pt, conference]{ieeeconf}%

\IEEEoverridecommandlockouts
\usepackage{cite}
\usepackage{amsmath,amssymb,amsfonts}
\usepackage{algorithmic}
\usepackage{graphicx}
\usepackage{textcomp}
\usepackage{xcolor}
\usepackage{cuted}
\usepackage{lipsum}
\usepackage{mathtools}
\usepackage{stfloats}
\usepackage{bbm}
\usepackage{multirow}

\newcommand{\R}{\mathbb{R}}
\newcommand{\N}{\mathbb{N}}
\newcommand{\T}{\top}
\newcommand{\I}{\mathbf{I}}
\newcommand{\0}{\mathbf{0}}
\newcommand{\E}{\mathcal{E}}

\newcommand{\diag}{\text{diag}}
\newcommand{\tsup}[1]{\textsuperscript{#1}}
\newcommand{\mb}[1]{\mathbf{#1}}

\newcommand{\bm}[1]{\begin{bmatrix}#1\end{bmatrix}}
\def\BibTeX{{\rm B\kern-.05em{\sc i\kern-.025em b}\kern-.08em
    T\kern-.1667em\lower.7ex\hbox{E}\kern-.125emX}}

\newtheorem{assumption}{\textbf{Assumption}}
\newtheorem{theorem}{\textbf{Theorem}}

\newtheorem{lemma}{\textbf{Lemma}}
\newtheorem{proposition}{\textbf{Proposition}}
\newtheorem{remark}{\textbf{Remark}}
\newtheorem{definition}{\textbf{Definition}}

\usepackage{hyperref}
    
\begin{document}

\title{
{\LARGE \textbf{
{Dissipativity-Based Distributed Control and Communication Topology Co-Design for Nonlinear DC Microgrids}
}}}



\author{Mohammad Javad Najafirad and Shirantha Welikala 
\thanks{The authors are with the Department of Electrical and Computer Engineering, School of Engineering and Science, Stevens Institute of Technology, Hoboken, NJ 07030, \texttt{{\small \{mnajafir,swelikal\}@stevens.edu}}.}}

\maketitle


\begin{abstract}
This paper presents a dissipativity-based distributed droop-free control and communication topology co-design framework for voltage regulation and current sharing in DC microgrids (MGs), where constant-power loads (CPLs) and voltage-source converter (VSC) input saturation introduce significant nonlinearities. In particular, CPLs introduce an inherently destabilizing nonlinearity, while VSC input saturation imposes hard amplitude constraints on applicable control input at each distributed generator (DG), collectively making the DC MG control system design extremely challenging. 
To this end, the DC MG is modeled as a networked system of DGs, transmission lines, and loads coupled through a static interconnection matrix. Each DG is equipped with a local PI-based controller with an anti-windup compensator and a distributed consensus-based global controller, from which a nonlinear networked error dynamics model is derived. The CPL nonlinearity is characterized via sector-boundedness with the S-procedure applied directly to yield tight LMI conditions, while the VSC input saturation is handled via a dead-zone decomposition and sector-boundedness, with both nonlinearities simultaneously absorbed into the dissipativity analysis. Both nonlinearities are simultaneously absorbed into the dissipativity analysis using the S-procedure. Subsequently, local controller gains and passivity indices, and distributed controller gains and the communication topology are co-designed by solving a sequence of local and global Linear Matrix Inequality (LMI) problems, enabling a one-shot co-design process that avoids iterative procedures. The effectiveness of the proposed framework is validated through simulation of an islanded DC MG under multiple operating scenarios, demonstrating robust performance superior to conventional control approaches.
\end{abstract}


\section{Introduction} 
The rapid proliferation of renewable energy resources and DC-native loads has fundamentally reshaped modern power systems, positioning DC microgrids (DC MGs) as a cornerstone technology for next-generation distributed energy infrastructure \cite{liu2023resilient}. Unlike AC systems, DC MGs eliminate unnecessary power conversion stages and the complexity of frequency regulation, resulting in superior efficiency, reduced hardware costs, and a simplified system architecture \cite{dou2022distributed}. Applications such as data centers, electric vehicle charging stations, LED lighting, and consumer electronics have further accelerated the adoption of DC MGs as a practical and economically viable solution for modern energy networks. Nevertheless, the fast closed-loop dynamics inherent to DC MGs place significant demands on the control system, and the presence of heterogeneous loads and distributed generation units introduces nonlinearities and uncertainties that render classical control techniques insufficient. Consequently, the development of robust, scalable, and theoretically rigorous control frameworks for DC MGs has attracted considerable research interest in the power systems and control communities \cite{guerrero2010hierarchical, khorsandi2014decentralized}.

The two primary control objectives in DC MGs are voltage regulation and current sharing. Centralized control strategies, while effective in achieving these objectives, suffer from a single point of failure and lack scalability \cite{guerrero2010hierarchical}. Decentralized approaches eliminate the need for inter-DG communication, yet the lack of coordination fundamentally compromises current-sharing accuracy \cite{khorsandi2014decentralized}. Distributed control, by contrast, allows each DG to exchange state information with its neighbors over a communication network and has demonstrated a superior ability to achieve both objectives simultaneously \cite{dehkordi2016distributed}. Within distributed control, droop-based methods remain the most widely adopted solution in the literature. However, droop control introduces an inherent trade-off between voltage regulation and current sharing, and its performance is sensitive to line impedance mismatch and droop coefficient selection \cite{nasirian2014distributed}. These fundamental limitations have motivated a shift toward droop-free distributed control strategies that rely entirely on inter-DG communication to coordinate voltage and current objectives without the structural compromises inherent in droop mechanisms \cite{dissanayake2019droop, zhang2022droop}.

ZIP loads, and specifically their CPL component, introduce a destabilizing nonlinearity due to the negative incremental impedance characteristic, posing significant challenges to stability analysis and controller synthesis \cite{kwasinski2010dynamic}, \cite{hassan2022dc}. Our prior work \cite{najafirad2025dissipativity} addressed CPL nonlinearities within a dissipativity-based co-design framework by incorporating sector-boundedness via the S-procedure indirectly, which resulted in conservative LMI conditions with loose upper bounds on the CPL sector. The current paper improves upon this by applying the S-procedure directly to the CPL sector constraint, yielding tighter bounds and a less conservative design. Moreover, the elevated control effort demanded by CPL compensation tends to push VSC inputs toward their saturation limits, making the treatment of input saturation a natural and necessary extension of the CPL handling problem.

Beyond CPLs, the VSCs that interface each DG to the DC bus introduce an additional, practically significant nonlinearity: input saturation. In practice, every VSC operates within a finite voltage command range imposed by its physical design and safety constraints, and the voltage command signal applied to each DG is therefore subject to hard amplitude limits \cite{vafamand2022dual}. When these limits are active, the actual control input deviates from its intended value, and a controller that does not explicitly account for this discrepancy can exhibit significant performance degradation or even closed-loop instability \cite{mardani2018design}. The treatment of input saturation as a sector-bounded nonlinearity via a dead-zone decomposition has been established as a systematic and LMI-compatible approach in the broader nonlinear control literature \cite{huff2021stability}. In practice, however, input saturation in PI-based controllers inevitably triggers integrator windup, necessitating an explicit anti-windup mechanism to prevent unbounded accumulation of the integral state and ensure closed-loop stability. Despite this, the rigorous co-design of anti-windup compensators alongside distributed controllers and communication topologies within a dissipativity-based framework for networked DC MGs remains an open problem that this paper addresses.

Significant research has addressed these two nonlinearities separately. For CPL stabilization, passivity-based control \cite{hassan}, sliding mode control \cite{alipour2022observer}, and various other nonlinear techniques \cite{hassan2022dc} have been proposed, though these works primarily consider single-converter or decentralized settings and do not address distributed current sharing or communication topology design. For VSC input saturation, dead-zone decomposition combined with sector-boundedness has been established as an LMI-compatible approach in the broader control literature \cite{huff2021stability}, yet its integration into a distributed co-design framework for networked DC MGs remains largely unexplored. Handling both nonlinearities simultaneously within a unified dissipativity-based co-design framework is the gap this paper addresses.

The joint co-design of distributed controller parameters and communication topology is another challenge in DC MG control, one that our prior work \cite{najafirad2025dissipativity} addressed within a dissipativity-based framework. While co-design has intrinsic benefits such as avoiding over-provisioned communication infrastructure and exploiting the coupling between network connectivity and closed-loop performance \cite{hu2021cost, lou2018optimal}, it also offers a distinct advantage in the context of nonlinear DC MGs. Specifically, the additional degrees of freedom introduced by treating the topology as a design variable provide multiple avenues to absorb and compensate for the destabilizing effects of nonlinearities such as CPLs and VSC input saturation. This flexibility motivates retaining the co-design structure in the present paper, where it is extended to simultaneously and rigorously handle both nonlinearities within a unified dissipativity-based framework.

To address the aforementioned challenges in a unified and rigorous manner, this paper adopts dissipativity theory as the central analytical framework. Dissipativity theory offers a powerful and flexible foundation for the analysis and synthesis of robust control systems for large-scale networked systems, as it characterizes the input-output energy behavior of each subsystem without requiring explicit knowledge of the global system dynamics \cite{arcak2022}. By focusing on the fundamental energy exchanges between interconnected subsystems, dissipativity-based approaches can certify stability and robustness of the overall networked system even when individual subsystems exhibit complex nonlinear behaviors \cite{datadriven}. Crucially, the dissipativity framework is compatible with LMI-based synthesis tools, which enable the formulation of computationally tractable convex optimization problems for controller design. The sector-bounded characterizations of both the CPL nonlinearity and the VSC input saturation are naturally accommodated within this framework via the S-procedure and Young's inequality, enabling simultaneous treatment of both nonlinearities without sacrificing convexity or introducing conservatism in the resulting synthesis conditions. This combination of theoretical rigor, computational tractability, and ability to handle multiple nonlinearities makes dissipativity theory a particularly well-suited tool for the problem considered in this paper.


In this paper, the DC MG is modeled as a networked system comprising DGs, transmission lines, and ZIP loads interconnected via a static matrix, where each DG is equipped with a hierarchical controller including a local PI-based controller with an anti-windup compensator and a distributed consensus-based global controller. A hierarchical controller comprising steady-state, local, and distributed global components is developed for each DG, and a thorough equilibrium analysis is conducted to identify the operating point and establish the feasibility conditions on the saturation constraints. The closed-loop error dynamics of the DC MG are then derived as a nonlinear networked error system that incorporates both disturbance inputs and regulated performance outputs to ensure robust achievement of the desired control objectives. The CPL nonlinearity and the VSC input saturation are each characterized via sector-boundedness, and both are simultaneously incorporated into the dissipativity analysis through the S-procedure and Young's inequality, certifying an input feedforward output feedback passivity (IF-OFP) property for each DG subsystem. Building on this, the local controller gains and  passivity indices, and distributed controller gains and the communication topology are co-designed by solving a sequence of local LMI problems followed by a global LMI problem. The local and global LMI problems are solved in a tractable one-shot co-design procedure that avoids iterative schemes \cite{almihat2023overview}.


The present paper is part of a broader research program on dissipativity-based distributed control of microgrids. Prior works in this line established the co-design framework for DC MGs with progressively richer features, from basic voltage regulation \cite{ACCNajafi} to ZIP load handling and current sharing \cite{najafirad2025dissipativity}, and most recently to AC MGs \cite{najafirad2025dissipativityACMG}. The present paper advances this line by introducing VSC input saturation with anti-windup compensation and applying the S-procedure directly to the CPL sector constraint for tighter LMI conditions.

The main contributions of this paper can be summarized as follows:
\begin{enumerate}
\item We extend the dissipativity-based co-design framework established in our prior work, which addressed DC MG networked system modeling, multi-objective control, hierarchical control architecture, and one-shot control-topology co-design, to remain valid and rigorous in the presence of CPL and VSC input saturation nonlinearities.
\item We improve upon our prior treatment of CPL nonlinearities by applying the S-procedure directly to the sector constraint, yielding tighter LMI conditions and a less conservative design compared to the indirect approach used in \cite{najafirad2025dissipativity}.
\item We explicitly model VSC input saturation via a dead-zone decomposition and establish its sector-bounded characterization, incorporating it into the dissipativity analysis simultaneously with the CPL nonlinearity within a unified framework.
\item We rigorously integrate an anti-windup compensator into the local controller design and analysis process, explicitly accounting for its effect within the LMI-based synthesis to prevent integrator windup under saturating conditions.
\end{enumerate}

The remainder of this paper is structured as follows. The essential concepts of dissipativity theory and networked systems are presented in Sec. \ref{Preliminaries} to establish the theoretical foundation. The DC MG model with detailed physical topology and component dynamics is introduced in Sec. \ref{problemformulation}. A novel hierarchical control architecture that eliminates traditional droop mechanisms is developed in Sec. \ref{Sec:Controller}, where the equilibrium analysis and VSC saturation feasibility conditions are also established. The nonlinear networked error dynamics, incorporating both CPL and VSC input saturation nonlinearities, are derived in Sec. \ref{Sec:ControlDesign}. The dissipativity-based methodology for controller and communication topology co-design, including the extended LMI formulations that account for both nonlinearities, is presented in Sec. \ref{Passivity-based Control}. Numerical simulations demonstrating the effectiveness of the proposed framework under multiple operating scenarios, including saturation-active conditions, are provided in Sec. \ref{Simulation}. Finally, Sec. \ref{Conclusion} offers concluding remarks and directions for future research.

\section{Preliminaries}\label{Preliminaries}

\subsection{Notations}
The notation $\mathbb{R}$ and $\mathbb{N}$ signify the sets of real and natural numbers, respectively. 
For any $N\in\mathbb{N}$, we define $\mathbb{N}_N\triangleq\{1,2,..,N\}$.
An $n \times m$ block matrix $A$ is denoted as $A = [A_{ij}]_{i \in \mathbb{N}_n, j \in \mathbb{N}_m}$. Either subscripts or superscripts are used for indexing purposes, e.g., $A_{ij} \equiv A^{ij}$.
$[A_{ij}]_{j\in\mathbb{N}_m}$ and $\diag([A_{ii}]_{i\in\mathbb{N}_n})$ represent a block row matrix and a block diagonal matrix, respectively.
$\0$ and $\I$, respectively, are the zero and identity matrices (dimensions will be clear from the context). A symmetric positive definite (semi-definite) matrix $A\in\mathbb{R}^{n\times n}$ is denoted by $A>0\ (A\geq0)$. The symbol $\star$ represents conjugate blocks inside block symmetric matrices. $\mathcal{H}(A)\triangleq A + A^\T$,  $\mb{1}_{\{ \cdot \}}$ is the indicator function and $\mathbf{1}_N$ is a vector in $\R^N$ containing only ones.

\subsection{Dissipativity}
Consider a nonlinear dynamic system:
\begin{equation}\label{dynamic}
\begin{aligned}
    \dot{x}(t)=f(x(t),u(t)),\\
    y(t)=h(x(t),u(t)),
    \end{aligned}
\end{equation}
where $x(t)\in\mathbb{R}^n$, $u(t)\in\mathbb{R}^q$, $y(t)\in\mathbb{R}^m$, and $f:\mathbb{R}^n\times\mathbb{R}^q\rightarrow\mathbb{R}^n$ and $h:\mathbb{R}^n\times\mathbb{R}^q\rightarrow\mathbb{R}^m$ are continuously differentiable and $f(\0,\0)=\0$ and $h(\0,\0)=\0$.


\begin{definition}\cite{arcak2022}
The system \eqref{dynamic} is dissipative under supply 
rate $s:\mathbb{R}^q\times\mathbb{R}^m\rightarrow\mathbb{R}$ 
if there exists a continuously differentiable storage 
function $V:\mathbb{R}^n\rightarrow\mathbb{R}$ such that 
$V(x)>0$, $\forall x \neq \mathbf{0}$, $V(\mathbf{0})=0$, 
and $\dot{V}(x)=\nabla_x V(x)f(x,u) \leq s(u,y)$, 
for all $(x,u)\in\mathbb{R}^n\times\mathbb{R}^q$.
\end{definition}

\begin{definition}
The system (\ref{dynamic}) is $X$-dissipative if it is dissipative under the quadratic supply rate:
\begin{center}
$
s(u,y)\triangleq
\begin{bmatrix}
    u \\ y
\end{bmatrix}^\top
\begin{bmatrix}
    X^{11} & X^{12}\\ X^{21} & X^{22}
\end{bmatrix}
\begin{bmatrix}
    u \\ y
\end{bmatrix}.
$
\end{center}
\end{definition}

\begin{remark}\label{Rm:X-DissipativityVersions}
The system \eqref{dynamic} is $X$-dissipative with 
specific properties depending on the structure of $X$ \cite{welikala2022line}:\\
1) passive, if $X = \begin{bmatrix} \mathbf{0} & \frac{1}{2}\mathbf{I} \\ 
\frac{1}{2}\mathbf{I} & \mathbf{0} \end{bmatrix}$;\\
2) input feedforward and output feedback passive with indices $\nu$ and $\rho$, i.e., IF-OFP($\nu,\rho$), if $X = \begin{bmatrix} -\nu\mathbf{I} & 
\frac{1}{2}\mathbf{I} \\ \frac{1}{2}\mathbf{I} & -\rho\mathbf{I} 
\end{bmatrix}$;\\
3) $L_2$-stable with gain $\gamma$, i.e., L2G($\gamma$), if $X = \begin{bmatrix} \gamma^2\mathbf{I} & 
\mathbf{0} \\ \mathbf{0} & -\mathbf{I} \end{bmatrix}$;
\end{remark}

If the system \eqref{dynamic} is linear time-invariant (LTI), a necessary and sufficient condition for $X$-dissipativity is provided in the following proposition as a linear matrix inequality (LMI) problem.

\begin{proposition}\label{Prop:linear_X-EID} \cite{welikala2023platoon}
The LTI system
\begin{equation*}\label{Eq:Prop:linear_X-EID_1}
\begin{aligned}
    \dot{x}(t)=Ax(t)+Bu(t),\quad
    y(t)=Cx(t)+Du(t),
\end{aligned}
\end{equation*}
is $X$-dissipative if and only if there exists $P>0$ such that
\begin{equation*}\label{Eq:Prop:linear_X-EID_2}
\scriptsize
\begin{bmatrix}
-\mathcal{H}(PA)+C^\top X^{22}C & -PB+C^\top X^{21}+C^\top X^{22}D\\
\star & X^{11}+\mathcal{H}(X^{12}D)+D^\top X^{22}D
\end{bmatrix}
\normalsize
\geq0.
\end{equation*}
\end{proposition}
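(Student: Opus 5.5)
We take the quadratic storage function $V(x)=x^\top P x$ with $P>0$, which meets the positivity and normalization requirements of the dissipativity definition. Along trajectories, $\dot V = x^\top \mathcal{H}(PA)x + 2x^\top P B u$. Substituting $y=Cx+Du$ into the supply rate and stacking $\xi=[x^\top\ u^\top]^\top$ gives
\begin{equation*}
s(u,y)-\dot V = \xi^\top M \xi,
\end{equation*}
where $M$ is precisely the block matrix in the statement. To verify this, we expand $y^\top X^{22} y$, $2u^\top X^{12} y$ (using the symmetry $X^{21}=X^{12\top}$), and $u^\top X^{11}u$, then collect the $xx$, $xu$ and $uu$ terms. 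The $xx$ block is $-\mathcal{H}(PA)+C^\top X^{22}C$. The $xu$ block is $-PB + C^\top X^{21} + C^\top X^{22}D$, where the middle term comes from $2u^\top X^{12}Cx$. The $uu$ block is $X^{11}+\mathcal{H}(X^{12}D)+D^\top X^{22}D$.

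For sufficiency, $M\ge0$ gives $\xi^\top M\xi\ge0$ for all $(x,u)$, so $\dot V\le s(u,y)$ everywhere. Hence the system is $X$-dissipative with storage $V$.

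For necessity, suppose the system is $X$-dissipative with some storage function $V$. We first need to show that a quadratic storage $x^\top P x$ with $P>0$ can be taken without loss of generality. This is the main obstacle. The standard route goes through the KYP lemma and available storage: for LTI systems with quadratic supply, the available storage $V_a(x)=\sup\{-\int_0^T s\,dt\}$ is finite when a dissipative storage exists. It is quadratic by homogeneity and linearity, since $V_a(\lambda x)=\lambda^2V_a(x)$ and $V_a$ satisfies the parallelogram law via superposition. It is also bounded above by $V$.

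Positive definiteness of $P$ needs an observability-type argument, or we can accept $P\ge0$ and perturb. Following \cite{welikala2023platoon}, I would cite this step rather than reprove it, noting that the paper only uses the sufficiency direction in its synthesis.

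Once a quadratic storage is in hand, the inequality $\xi^\top M\xi\ge0$ holds for all $\xi\in\mathbb{R}^{n+q}$, since $x$ and $u$ are independent free variables. This is exactly $M\ge0$, which completes the equivalence.
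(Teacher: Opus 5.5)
The paper gives no proof of this proposition; it imports it from \cite{welikala2023platoon}.

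\textbf{Sufficiency.} Your sufficiency direction is correct and complete. With $V=x^\top Px$, the identity $s(u,y)-\dot V=\xi^\top M\xi$ holds exactly as you expand it. This is the same quadratic-storage expansion the paper carries out directly in the proof of Theorem~\ref{Th:Local_CPL}.

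\textbf{Necessity: the gap.} What you wrote does not establish the ``only if'' direction, and the steps you sketch would not close it.
\begin{enumerate}
\item ``Accept $P\ge 0$ and perturb'' is not justified. Since $M$ is affine in $P$, you have $M(P+\epsilon\I)=M(P)-\epsilon N$ with $N\triangleq\left[\begin{smallmatrix}\mathcal{H}(A) & B\\ B^\top & \0\end{smallmatrix}\right]$. Because $N$ has a zero $(2,2)$ block, it is indefinite whenever $B\neq\0$. Feasibility is therefore lost as soon as a null vector $\xi_0$ of $M(P)$ has $\xi_0^\top N\xi_0>0$. Nothing in your argument rules this out, and extremal storages such as the available storage typically make $M(P)$ singular.
\item The available storage is the \emph{smallest} storage, $V_a\le V$. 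That is the wrong end of the ordering for proving positive definiteness.
\item Your parallelogram/superposition argument only shows that each fixed-horizon value $W_T(x)=\sup_u\bigl(-\int_0^T s\,dt\bigr)$ is a quadratic form. But $V_a=\sup_{T\ge 0}W_T$ is a supremum of quadratic forms, which is not automatically quadratic. You never handle the horizon.
\end{enumerate}

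\textbf{A repair under an extra hypothesis.} The standard fix uses the required supply $V_r(x)=\inf\{\int_{-T}^{0}s\,dt:\ x(-T)=\0,\ x(0)=x\}$.
\begin{itemize}
\item $V_r$ is itself a storage, by concatenation of trajectories.
\item Since $V(\0)=0$, integrating the dissipation inequality gives $V_r\ge V>0$ away from $\0$.
\item Hence, once $V_r=x^\top P_rx$ is known to be quadratic, $P_r>0$ comes for free, and your final step ($x,u$ free implies $M(P_r)\ge0$) finishes the proof.
\end{itemize}
However, $V_r$ is finite on all of $\R^n$ only if every state is reachable from $\0$. Its quadratic character is Willems' classical result for controllable systems. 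So this route needs controllability of $(A,B)$, which the statement does not assume.

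\textbf{What to do.} Either add controllability and run this argument, or state plainly that necessity is taken from \cite{welikala2023platoon}. You should not conclude that your sketch ``completes the equivalence.''

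Also, your remark that the paper uses only sufficiency is not quite right. Lemma~\ref{Lm:LineDissipativityStep} is phrased via the ``only if'' direction: the assumed line indices must make the LMI feasible. There the line is scalar with $\bar{B}_l=1/L_l\neq 0$, so the controllable case covers it.
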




\subsection{Networked Systems} \label{SubSec:NetworkedSystemsPreliminaries}

Consider the networked system $\Sigma$ in Fig. \ref{Networked}, consisting of dynamic subsystems $\Sigma_i,i\in\mathbb{N}_N$, $\Bar{\Sigma}_i,i\in\mathbb{N}_{\Bar{N}}$ and a static interconnection matrix $M$ that characterizes interconnections among subsystems, exogenous inputs $w(t)\in\mathbb{R}^r$ (e.g. disturbances) and interested outputs $z(t)\in\mathbb{R}^l$ (e.g. performance). 

\begin{figure}
    \centering
    \includegraphics[width=0.6\columnwidth]{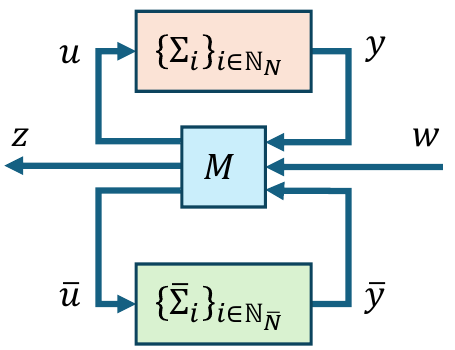}
    \caption{A generic networked system $\Sigma$.}
    \label{Networked}
\end{figure}

The dynamics of each subsystem $\Sigma_i,i\in\mathbb{N}_N$ are given by
\begin{equation}
    \begin{aligned}
        \dot{x}_i(t)=f_i(x_i(t),u_i(t)), \quad y_i(t)=h_i(x_i(t),u_i(t)),
    \end{aligned}
\end{equation}
where $x_i(t)\in\mathbb{R}^{n_i}$, $u_i(t)\in\mathbb{R}^{q_i}$, $y_i(t)\in\mathbb{R}^{m_i}$ and $f_i(\0,\0)=\0$ and $h_i(\0,\0)=\0$. In addition, each subsystem $\Sigma_i$ is assumed to be $X_i$-dissipative , where $X_i \triangleq [X_i^{kl}]_{k,l\in\N_2}$.
Regarding each subsystem $\bar{\Sigma}_i, i\in\N_{\bar{N}}$, we use similar assumptions and notations, but include a bar symbol to distinguish between the two types of subsystems, e.g.,  $\bar{\Sigma}_i$ is assumed to be $\bar{X}_i$-dissipative where $\bar{X}_i \triangleq [\bar{X}_i^{kl}]_{k,l\in\N_2}$.

Defining $u\triangleq[u_i^\top]^\top_{i\in\mathbb{N}_N}$, $y\triangleq[y_i^\top]^\top_{i\in\mathbb{N}_N}$, $\Bar{u}\triangleq[\Bar{u}_i^\top]^\top_{i\in\mathbb{N}_{\Bar{N}}}$ and $\Bar{y}\triangleq[y_i^\top]^\top_{i\in\mathbb{N}_{\Bar{N}}}$, the interconnection matrix $M$ and the corresponding interconnection relationship are given by
\begin{equation}\label{interconnectionMatrix}
\scriptsize
\begin{bmatrix}
    u \\ \bar{u} \\ z
\end{bmatrix}=M
\normalsize
\scriptsize
\begin{bmatrix}
    y \\ \bar{y} \\ w
\end{bmatrix}
\normalsize
\equiv
\scriptsize
\begin{bmatrix}
    M_{uy} & M_{u\bar{y}} & M_{uw}\\
    M_{\bar{u}y} & M_{\bar{u}\bar{y}} & M_{\bar{u}w}\\
    M_{zy} & M_{z\bar{y}} & M_{zw}
\end{bmatrix}
\begin{bmatrix}
    y \\ \bar{y} \\ w
\end{bmatrix}.
\normalsize
\end{equation}

The following proposition exploits the $X_i$-dissipative and $\bar{X}_i$-dissipative properties of the subsystems $\Sigma_i,i\in\mathbb{N}_N$ and $\Bar{\Sigma}_i,i\in\mathbb{N}_{\Bar{N}}$ to formulate an LMI problem for synthesizing the interconnection matrix $M$ (\ref{interconnectionMatrix}), ensuring the networked system $\Sigma$ is $\textbf{Y}$-dissipative for a prespecified/optimized $\textbf{Y}$ under two mild assumptions (see \cite[Rm. 4,5]{welikala2023non}).

\begin{assumption}\label{As:NegativeDissipativity}
    For the networked system $\Sigma$, the interested \textbf{Y}-dissipative specification is such that $\textbf{Y}^{22}<0$.
\end{assumption}


\begin{assumption}\label{As:PositiveDissipativity}
    In the networked system $\Sigma$, each subsystem $\Sigma_i$ is $X_i$-dissipative with $X_i^{11}>0, \forall i\in\mathbb{N}_N$, and similarly, each subsystem $\bar{\Sigma}_i$ is $\bar{X}_i$-dissipative with $\bar{X}_i^{11}>0, \forall i\in\N_{\bar{N}}$.
\end{assumption}



\begin{proposition}\label{synthesizeM}\cite{welikala2023non}
    Under As. \ref{As:NegativeDissipativity}-\ref{As:PositiveDissipativity}, the network system $\Sigma$ can be made \textbf{Y}-dissipative (from $w(t)$ to $z(t)$) by synthesizing the interconnection matrix $M$ \eqref{interconnectionMatrix} via solving the LMI problem:
\begin{equation}
\begin{aligned}
	&\mbox{Find: } 
	L_{uy}, L_{u\bar{y}}, L_{uw}, L_{\bar{u}y}, L_{\bar{u}\bar{y}}, L_{\bar{u}w}, M_{zy}, M_{z\bar{y}}, M_{zw}, \\
	&\mbox{Sub. to: } p_i \geq 0, \forall i\in\N_N, \ \ 
	\bar{p}_i \geq 0, \forall i\in\N_{\bar{N}},\ \text{and} \  \eqref{NSC4YEID},
\end{aligned}
\end{equation}
with
\scriptsize
$\bm{M_{uy} & M_{u\bar{y}} & M_{uw} \\ M_{\bar{u}y} & M_{\bar{u}\bar{y}} & M_{\bar{u}w}} = 
\bm{\textbf{X}_p^{11} & \0 \\ \0 & \bar{\textbf{X}}_{\bar{p}}^{11}}^{-1} \hspace{-1mm} \bm{L_{uy} & L_{u\bar{y}} & L_{uw} \\ L_{\bar{u}y} & L_{\bar{u}\bar{y}} & L_{\bar{u}w}}$.
\normalsize
where $\textbf{X}_p^{kl} \triangleq \diag(\{p_iX_i^{kl}:i\in\N_N\}), \forall k,l\in\N_2$, $\textbf{X}^{12} \triangleq \diag((X_i^{11})^{-1}X_i^{12}:i\in\N_N)$, and 
$\textbf{X}^{21} \triangleq (\textbf{X}^{12})^\T$ (terms $\bar{\textbf{X}}_{\bar{p}}^{kl}$, $\bar{\textbf{X}}^{12}$ and 
$\bar{\textbf{X}}^{21}$ have analogous definitions).
\end{proposition}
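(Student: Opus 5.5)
The plan is a standard dissipativity-based small-gain/storage argument carried out at the network level. For each subsystem $\Sigma_i$ (resp. $\bar{\Sigma}_i$) I take its storage function $V_i$ (resp. $\bar{V}_i$) and form the candidate network storage
\[
V(x)\triangleq\sum_{i\in\N_N}p_iV_i(x_i)+\sum_{i\in\N_{\bar{N}}}\bar{p}_i\bar{V}_i(\bar{x}_i),
\]
with the scalars $p_i,\bar{p}_i\geq0$ as decision variables. Summing the individual dissipation inequalities gives
\[
\dot V\leq \bm{u\\y}^\T\bm{\textbf{X}_p^{11}&\textbf{X}_p^{12}\\ \star&\textbf{X}_p^{22}}\bm{u\\y}+\bm{\bar u\\ \bar y}^\T\bm{\bar{\textbf{X}}_{\bar p}^{11}&\bar{\textbf{X}}_{\bar p}^{12}\\ \star&\bar{\textbf{X}}_{\bar p}^{22}}\bm{\bar u\\ \bar y}.
\]
The goal is to show $\dot V\leq \bm{w\\z}^\T\textbf{Y}\bm{w\\z}$ for all trajectories. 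Substituting the interconnection relation \eqref{interconnectionMatrix} expresses $u,\bar u,z$ in terms of $(y,\bar y,w)$. The sufficient condition is then a quadratic-form inequality in $(y,\bar y,w)$:
\[
\bm{y\\ \bar y\\ w}^\T\Big(\Psi(M,p,\bar p)\Big)\bm{y\\ \bar y\\ w}\leq 0,
\]
where $\Psi$ collects the weighted subsystem supply rates under $M$ minus the $\textbf{Y}$ supply rate. This is a matrix inequality that is bilinear in $(p,\bar p)$ and the blocks of $M$, and quadratic in $M_{zy},M_{z\bar y},M_{zw}$ and the $u$-blocks of $M$.

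Convexification comes next.
\begin{itemize}
\item[(a)] The products $\textbf{X}_p^{11}M_{u\cdot}$ and $\bar{\textbf{X}}^{11}_{\bar p}M_{\bar u\cdot}$ are replaced by the new variables $L_{u\cdot},L_{\bar u\cdot}$. This is precisely the change of variables in the statement, and it is invertible because As.~\ref{As:PositiveDissipativity} gives $X_i^{11}>0$, so $\textbf{X}_p^{11}>0$ whenever $p_i>0$.
\item[(b)] The remaining quadratic terms are $M_{u\cdot}^\T\textbf{X}_p^{11}M_{u\cdot}=L^\T(\textbf{X}_p^{11})^{-1}L$ and $M_{z\cdot}^\T\textbf{Y}^{22}M_{z\cdot}$. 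By As.~\ref{As:NegativeDissipativity} ($\textbf{Y}^{22}<0$), I rewrite each with the appropriate sign as a Schur complement with a positive-definite pivot: $\textbf{X}_p^{11}$ for the $u$-terms and $-(\textbf{Y}^{22})^{-1}$ for the $z$-terms.
\item[(c)] The cross terms $\textbf{X}_p^{12}M_{u\cdot}$ are written as $\textbf{X}^{21}L$ using the block-diagonal identity $p_iX_i^{12}=(X_i^{11})^{-1}X_i^{12}\cdot p_iX_i^{11}$, hence the definition $\textbf{X}^{12}=\diag((X_i^{11})^{-1}X_i^{12})$.
\end{itemize}
After these steps the condition is affine in $(p,\bar p,L,M_{z\cdot})$, which yields the LMI \eqref{NSC4YEID}.

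The main obstacle is step (b) together with the cross terms: the bookkeeping has to produce a single symmetric block matrix whose Schur-complement reduction exactly recovers $-\Psi\geq0$. The quadratic terms in $u$ are positive, coming from $X^{11}>0$, so they must be moved into a pivot block. I would first write the full $3\times3$ block $\Psi$, isolate the terms that are quadratic in $L$, and apply the Schur complement once with pivot $\mathrm{diag}(\textbf{X}_p^{11},\bar{\textbf{X}}_{\bar p}^{11},-(\textbf{Y}^{22})^{-1})$. This pivot is positive definite by both assumptions, which is exactly why they are needed.

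Finally, feasibility of the LMI implies $\dot V\leq s_{\textbf{Y}}(w,z)$. Positivity of $V$ follows if all $p_i,\bar p_i>0$; strictness can be imposed or argued from the invertibility required in (a). This gives $\textbf{Y}$-dissipativity of $\Sigma$ from $w$ to $z$, and $M$ is recovered from $L$ via the stated inverse.
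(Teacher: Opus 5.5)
Your plan matches the standard proof of this result. The paper gives no argument of its own and only cites \cite{welikala2023non}, whose derivation is the one you describe: a $p$-weighted sum of subsystem storages, substitution of the interconnection, the change of variables $L=\textbf{X}_p^{11}M$, and one Schur complement on a pivot built from $\textbf{X}_p^{11}$, $\bar{\textbf{X}}_{\bar p}^{11}$ and $\textbf{Y}^{22}$.

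Step (c), however, is stated incorrectly.
\begin{itemize}
\item The cross term produced by $2u^\T\textbf{X}_p^{12}y$ is $M_{u\cdot}^\T\textbf{X}_p^{12}$, equivalently $\textbf{X}_p^{21}M_{u\cdot}$. It is not $\textbf{X}_p^{12}M_{u\cdot}$.
\item Your identity $p_iX_i^{12}=(X_i^{11})^{-1}X_i^{12}\cdot p_iX_i^{11}$ has its factors in the wrong order. It is well-dimensioned only when $u_i$ and $y_i$ have the same size. Even then it holds only if $X_i^{11}$ and $X_i^{12}$ commute. That is true for the scalar IF-OFP blocks used later in the paper, but not in the generality of the proposition.
\item The identity you need is $p_iX_i^{12}=p_iX_i^{11}\cdot(X_i^{11})^{-1}X_i^{12}$, i.e.\ $\textbf{X}_p^{12}=\textbf{X}_p^{11}\textbf{X}^{12}$. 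This gives $M_{u\cdot}^\T\textbf{X}_p^{12}=L_{u\cdot}^\T\textbf{X}^{12}$. Transposing, using symmetry of $\textbf{X}_p^{11}$, gives $\textbf{X}_p^{21}M_{u\cdot}=\textbf{X}^{21}L_{u\cdot}$.
\end{itemize}
With this correction, the $p_i$ cancel exactly as you intended.

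Some bookkeeping points for when you write out the block matrix:
\begin{itemize}
\item Your pivot $-(\textbf{Y}^{22})^{-1}$, with off-diagonal blocks $M_{z\cdot}$, reproduces \eqref{NSC4YEID} only after a congruence by $-\textbf{Y}^{22}$ on that block row and column. Alternatively, use the pivot $-\textbf{Y}^{22}$ with off-diagonal blocks $-\textbf{Y}^{22}M_{z\cdot}$ directly.
\item Carried out consistently, this gives $-\textbf{Y}^{22}M_{zw}$ in block $(3,6)$, which matches the printed $(6,3)$ block $-M_{zw}^\T\textbf{Y}^{22}$. The printed $+\textbf{Y}^{22}M_{zw}$ is therefore a sign typo. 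It is harmless in the paper's application, where $M_{zw}=\0$.
\item You need not impose $p_i>0$ separately. The strict LMI forces its $(1,1)$ block $\textbf{X}_p^{11}=\diag(p_iX_i^{11})$ to be positive definite, which with As.~\ref{As:PositiveDissipativity} gives $p_i>0$ (and likewise $\bar p_i>0$). That yields both the invertibility you use in (a) and the positive definiteness of $V$.
\end{itemize}
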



\begin{figure*}[!hb]
\vspace{-5mm}
\centering
\hrulefill
\begin{equation}\label{NSC4YEID}
\scriptsize
 \bm{
		\textbf{X}_p^{11} & \0 & \0 & L_{uy} & L_{u\bar{y}} & L_{uw} \\
		\0 & \bar{\textbf{X}}_{\bar{p}}^{11} & \0 & L_{\bar{u}y} & L_{\bar{u}\bar{y}} & L_{\bar{u}w}\\
		\0 & \0 & -\textbf{Y}^{22} & -\textbf{Y}^{22} M_{zy} & -\textbf{Y}^{22} M_{z\bar{y}} & \textbf{Y}^{22} M_{zw}\\
		L_{uy}^\T & L_{\bar{u}y}^\T & - M_{zy}^\T\textbf{Y}^{22} & -L_{uy}^\T\textbf{X}^{12}-\textbf{X}^{21}L_{uy}-\textbf{X}_p^{22} & -\textbf{X}^{21}L_{u\bar{y}}-L_{\bar{u}y}^\T \bar{\textbf{X}}^{12} & -\textbf{X}^{21}L_{uw} + M_{zy}^\T \textbf{Y}^{21} \\
		L_{u\bar{y}}^\T & L_{\bar{u}\bar{y}}^\T & - M_{z\bar{y}}^\T\textbf{Y}^{22} & -L_{u\bar{y}}^\T\textbf{X}^{12}-\bar{\textbf{X}}^{21}L_{\bar{u}y} & 		-(L_{\bar{u}\bar{y}}^\T \bar{\textbf{X}}^{12} + \bar{\textbf{X}}^{21}L_{\bar{u}\bar{y}}+\bar{\textbf{X}}_{\bar{p}}^{22}) & -\bar{\textbf{X}}^{21} L_{\bar{u}w} + M_{z\bar{y}}^\T \textbf{Y}^{21} \\ 
		L_{uw}^\T & L_{\bar{u}w}^\T & -M_{zw}^\T \textbf{Y}^{22}& -L_{uw}^\T\textbf{X}^{12}+\textbf{Y}^{12}M_{zy} & -L_{\bar{u}w}^\T\bar{\textbf{X}}^{12}+ \textbf{Y}^{12} M_{z\bar{y}} & M_{zw}^\T\textbf{Y}^{21} + \textbf{Y}^{12}M_{zw} + \textbf{Y}^{11}
	}\normalsize > 0
\end{equation}
\end{figure*} 

\subsection{Linear Algebraic Preliminaries}

Before concluding this section, we recall three linear algebraic results that will be useful in the sequel.

\begin{lemma}\label{Lm:Schur_comp}
\textbf{(Schur Complement)} For matrices $P > 0, Q$ and $R$, the following statements are equivalent:
\begin{subequations}
\begin{align}
1)\ &\begin{bmatrix} P & Q \\ Q^\T & R \end{bmatrix} > 0, \label{Lm:Schur1}\\
2)\ &P > 0, R-Q^\T P^{-1}Q > 0, \label{Lm:Schur2}\\
3)\ &R > 0, P-QR^{-1}Q^\T > 0 \label{Lm:Schur3}.
\end{align}
\end{subequations}
\end{lemma}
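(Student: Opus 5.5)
The plan is the standard block Gaussian elimination (congruence) argument. It rests on two facts: positive definiteness is preserved under congruence by an invertible matrix, and the inverse of a symmetric positive definite matrix is again symmetric positive definite. Throughout, the block matrix in \eqref{Lm:Schur1} is assumed symmetric, so $P$ and $R$ are symmetric and the off-diagonal blocks are $Q$ and $Q^\T$.

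For 1) $\Leftrightarrow$ 2), use $P>0$ to introduce the invertible matrix
\begin{equation*}
T_1 \triangleq \begin{bmatrix} \I & -P^{-1}Q \\ \0 & \I \end{bmatrix}.
\end{equation*}
A direct multiplication gives
\begin{equation*}
T_1^\T \begin{bmatrix} P & Q \\ Q^\T & R \end{bmatrix} T_1 = \begin{bmatrix} P & \0 \\ \0 & R-Q^\T P^{-1}Q \end{bmatrix}.
\end{equation*}
Since $T_1$ is nonsingular, $v^\T(\cdot)v>0$ for all $v\neq\0$ holds for the original matrix if and only if it holds for the transformed one; take $v=T_1\xi$ in one direction and $\xi=T_1^{-1}v$ in the other. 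A block diagonal symmetric matrix is positive definite if and only if each diagonal block is. One sees this by testing with vectors supported on a single block for necessity, and by summing the two quadratic forms for sufficiency. This yields exactly \eqref{Lm:Schur2}.

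For 1) $\Leftrightarrow$ 3), the argument is symmetric, with one preliminary step. If 1) holds, then restricting the quadratic form to vectors of the form $[\0^\T,\eta^\T]^\T$ shows $R>0$, so $R^{-1}$ exists. Then use
\begin{equation*}
T_2 \triangleq \begin{bmatrix} \I & \0 \\ -R^{-1}Q^\T & \I \end{bmatrix},
\end{equation*}
whose congruence produces $\diag(P-QR^{-1}Q^\T,\, R)$. The same block diagonal argument then gives \eqref{Lm:Schur3}. Conversely, if $R>0$ and $P-QR^{-1}Q^\T>0$, the same congruence in reverse shows the block matrix is positive definite.

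The only real subtlety is ordering the invertibility assumptions correctly. In 3) one must first establish $R>0$, either from 1) or from the hypothesis, before $R^{-1}$ is used. The standing assumption $P>0$ covers the corresponding need in 2). The remaining step is routine verification of the block products, which I would check once explicitly, for example confirming that the $(2,2)$ block of $T_1^\T(\cdot)T_1$ equals $R-Q^\T P^{-1}Q-Q^\T P^{-1}Q+Q^\T P^{-1}PP^{-1}Q = R-Q^\T P^{-1}Q$.
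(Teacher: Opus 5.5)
Your proof is correct. It rests on the same idea as the paper's: completing the square, which is block Gaussian elimination. The difference is that you carry it out as an exact congruence, whereas the paper's vector-level version does not hold up as written.

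Write $G$ for the block matrix in 1). The paper sets $y = z_2 + P^{-1}Qz_1$, asserts $z^\T G z = z_1^\T(P - QP^{-1}Q^\T)z_1 + y^\T(R - Q^\T P^{-1}Q)y$, and then uses ``$P>0$ implies $P - QP^{-1}Q^\T \geq 0$''. This has several problems:
\begin{itemize}
\item The substitution is only dimensionally consistent when $Q$ is square.
\item The identity fails already for scalars: $P=1$, $Q=1$, $R=2$ leaves a residual $z_2^2$.
\item The inequality is false: $P=1$, $Q=2$ gives $-3$.
\item A lower bound on $z^\T G z$ can only give one direction of the equivalence.
\item The argument is phrased for $\geq 0$ rather than $>0$.
\item 1)$\Leftrightarrow$3) is dismissed as symmetric without first extracting $R>0$.
\end{itemize}

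The correct vector identity is $z^\T G z = (z_1 + P^{-1}Qz_2)^\T P (z_1 + P^{-1}Qz_2) + z_2^\T (R - Q^\T P^{-1}Q) z_2$. This is exactly your $T_1^\T G T_1 = \diag(P, R - Q^\T P^{-1}Q)$ evaluated at $\xi = T_1^{-1}z$. So the repaired paper argument and yours are the same computation. The vector form avoids mentioning congruence and is marginally more elementary. Your matrix form makes everything explicit at once: the exact equality, both implications, strictness, rectangular $Q$, and the ordering of the invertibility assumptions (establishing $R>0$ before forming $R^{-1}$).
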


\begin{proof}
Let $z = \begin{bmatrix} z_1 & z_2 \end{bmatrix}^\top$ be any non-zero vector and $y = z_2 + P^{-1}Qz_1$. Then:
\begin{align}
    &z^\top\begin{bmatrix} P & Q \\ Q^\top & R \end{bmatrix}z \nonumber\\
    &= z_1^\top P z_1 + z_1^\top Q z_2 
       + z_2^\top Q^\top z_1 + z_2^\top R z_2 \nonumber\\
    &= z_1^\top(P - QP^{-1}Q^\top)z_1 
       + y^\top(R - Q^\top P^{-1}Q)y \nonumber\\
    &\geq y^\top(R - Q^\top P^{-1}Q)y, \nonumber
\end{align}
where the third step follows by completing the square using $y$, and the last step holds since $P > 0$ implies $P - QP^{-1}Q^\top \geq 0$. Therefore, the expression is non-negative if and only if $R - Q^\top P^{-1}Q \geq 0$, establishing the equivalence of \eqref{Lm:Schur1} and \eqref{Lm:Schur2}. The equivalence of \eqref{Lm:Schur1} and \eqref{Lm:Schur3} follows by symmetric arguments.
\end{proof}

\begin{lemma}\label{Lm:InvSchur}
For any $P > 0$ and a square matrix $Q$:  
$$Q^\T P^{-1}Q \geq Q^\T + Q - P.$$ 
\end{lemma}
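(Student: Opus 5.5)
The plan is to prove Lemma \ref{Lm:InvSchur} by completing the square. Since $P>0$, the inverse $P^{-1}$ exists and is also positive definite. For any matrix $A$ of compatible dimensions, $A^\T P^{-1} A \geq 0$. The idea is to choose $A$ so that expanding this quadratic form yields exactly the difference between the two sides of the claimed inequality.

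The natural choice is $A = Q - P$. Using $P^\T = P$ and $P^{-1}P = PP^{-1} = \I$, the expansion is
\begin{equation*}
(Q-P)^\T P^{-1}(Q-P) = Q^\T P^{-1}Q - Q^\T P^{-1}P - P P^{-1} Q + P P^{-1} P = Q^\T P^{-1}Q - Q^\T - Q + P.
\end{equation*}
The left-hand side is positive semidefinite, so $Q^\T P^{-1}Q - Q^\T - Q + P \geq 0$. Rearranging gives $Q^\T P^{-1}Q \geq Q^\T + Q - P$, which is the claim.

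An alternative route uses Lemma \ref{Lm:Schur_comp}. The matrix
\begin{equation*}
\bm{P & Q \\ Q^\T & Q^\T P^{-1} Q} = \bm{\I \\ Q^\T P^{-1}} P \bm{\I & P^{-1}Q}
\end{equation*}
is positive semidefinite. Pre- and post-multiplying it by $\bm{-\I & \I}$ and its transpose yields the same inequality. The direct completion of squares is shorter, however, and I will present that.

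No step is a real obstacle. The only points to check are that $P^{-1}$ is symmetric, which follows from $P$ being symmetric positive definite, and that $Q$ is square, so that $Q^\T + Q - P$ is well defined. The result holds with non-strict inequality; equality holds exactly when $Q = P$.
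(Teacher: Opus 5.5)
Your proof is correct and is essentially the paper's argument: the paper expands $(S-\I)^\T P(S-\I)\geq 0$ and then substitutes $S=P^{-1}Q$, and since $S-\I=P^{-1}(Q-P)$ this is exactly your quadratic form $(Q-P)^\T P^{-1}(Q-P)\geq 0$. Your equality characterization ($Q=P$) also matches the one the paper states.
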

\begin{proof}
For any arbitrary matrix $S$, since $P>0$: we have 
$(S-\I)^\T P (S-\I) \geq 0$, which simplifies to  
$$S^\T P S - PS - S^\T P + P \geq 0.$$
The required result follows by applying the change of 
variables $S = P^{-1}Q$ and rearranging the terms, 
with equality holding if and only if $Q = P$.
\end{proof}


\begin{lemma}\label{Lm:Woodbury}
For an invertible $R \in \mathbb{R}^{n \times n}$ and $\rho \in \R_{>0}$:
$$
(R + \rho \I)^{-1} = R^{-1} - \rho R^{-1} \left( \I + \rho R^{-1} \right)^{-1} R^{-1}.
$$
\end{lemma}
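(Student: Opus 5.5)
We verify the identity directly: multiply the claimed right-hand side by $(R+\rho\I)$ and show the product is the identity. Since $R+\rho\I$ is square, a one-sided inverse is automatically two-sided.

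Before doing so, we must ensure the right-hand side is well defined, i.e., that $\I+\rho R^{-1}$ is invertible. This follows from the factorization
\[
\I+\rho R^{-1}=R^{-1}(R+\rho\I),
\]
which is invertible exactly when $R+\rho\I$ is. The statement implicitly presumes this, since it refers to $(R+\rho\I)^{-1}$; for instance it holds automatically when $R>0$, which is the setting where the lemma is used later. This invertibility point is the only genuine subtlety in the argument.

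For the computation, set $S \triangleq (\I+\rho R^{-1})^{-1}$ and observe that $R^{-1}S R^{-1} = (R+\rho\I)^{-1}R^{-1}$, using $S=(R+\rho\I)^{-1}R$. Then
\[
(R+\rho\I)\bigl(R^{-1}-\rho R^{-1}SR^{-1}\bigr)
= \I+\rho R^{-1} - \rho (R+\rho\I)(R+\rho\I)^{-1}R^{-1}
= \I+\rho R^{-1}-\rho R^{-1} = \I.
\]
Equivalently, one could invoke the Woodbury identity $(A+UCV)^{-1}=A^{-1}-A^{-1}U(C^{-1}+VA^{-1}U)^{-1}VA^{-1}$ with $A=R$, $U=V=\I$, $C=\rho\I$. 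This gives $R^{-1}-R^{-1}(\rho^{-1}\I+R^{-1})^{-1}R^{-1}$, and pulling out the scalar via $(\rho^{-1}\I+R^{-1})^{-1}=\rho(\I+\rho R^{-1})^{-1}$ yields the stated form.
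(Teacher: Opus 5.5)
Your proof is correct, but your main argument takes a different route from the paper's. The paper proves the lemma in one line by citing the Woodbury identity $(R+UV^\T)^{-1}=R^{-1}-R^{-1}U(\I+V^\T R^{-1}U)^{-1}V^\T R^{-1}$ with $U=V=\sqrt{\rho}\,\I$. Your closing alternative ($U=V=\I$, $C=\rho\I$, then factoring out $\rho$) is the same argument with the scalar placed differently.

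Your primary route instead checks directly that the right-hand side is a right inverse of $R+\rho\I$. This is self-contained, and it exposes the hypothesis the statement leaves implicit. Invertibility of $R$ alone is not enough (take $R=-\rho\I$); one needs $R+\rho\I$, equivalently $\I+\rho R^{-1}$, to be invertible. The paper's citation silently needs exactly the same condition, since Woodbury requires $\I+V^\T R^{-1}U=\I+\rho R^{-1}$ to be invertible. So your caveat applies equally to the paper's proof.

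One step of your computation should be made explicit. From $\I+\rho R^{-1}=R^{-1}(R+\rho\I)$ you get $S=(R+\rho\I)^{-1}R$, hence $R^{-1}SR^{-1}=R^{-1}(R+\rho\I)^{-1}$. Writing this as $(R+\rho\I)^{-1}R^{-1}$ uses the fact that $R$ commutes with $R+\rho\I$. Alternatively, start from the factorization $\I+\rho R^{-1}=(R+\rho\I)R^{-1}$, which gives $S=R(R+\rho\I)^{-1}$ directly. The step is immediate but worth stating.

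Finally, the paper never actually invokes this lemma later. Your remark that $R>0$ is the setting in which it is used therefore has no basis, though it does not affect the proof.
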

\begin{proof}
The result follows directly from applying the well-known Woodbury Matrix Identity \cite{horn2012matrix}:
$$
(R + UV^T)^{-1} = R^{-1} - R^{-1}U \left( \I + V^T R^{-1}U \right)^{-1} V^T R^{-1}
$$
with the choices $U = \sqrt{\rho} \I$ and $V = \sqrt{\rho} \I$.
\end{proof}

\begin{lemma}(\textbf{Matrix S-Lemma}, 
\cite{datadriven})\label{Lm:Sprocedure}
Let $\Pi$ and $\Gamma$ be symmetric matrices. 
If there exists $\zeta$ such that 
$\zeta^\top\Gamma\zeta > 0$, then:
\begin{equation}\label{Eq:Sprocedure}
\zeta^\top\Gamma\zeta \geq 0 
\implies 
\zeta^\top\Pi\zeta \geq 0,
\quad \forall \zeta,
\end{equation}
if and only if there exists $\lambda \geq 0$ 
such that:
\begin{equation}\label{Eq:Sprocedure_LMI}
\Pi - \lambda\Gamma \geq 0.
\end{equation}
\end{lemma}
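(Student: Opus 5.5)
The plan is to prove the two directions of Lemma \ref{Lm:Sprocedure} separately. Sufficiency is elementary and proceeds directly. Necessity is the lossless S-procedure and rests on a convexity (hidden convexity) argument, using the strict feasibility hypothesis $\zeta_0^\top\Gamma\zeta_0>0$ as a Slater-type condition.

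\emph{Sufficiency.} Suppose $\lambda\geq 0$ satisfies \eqref{Eq:Sprocedure_LMI}, and take any $\zeta$ with $\zeta^\top\Gamma\zeta\geq 0$. Then
\[
\zeta^\top\Pi\zeta=\zeta^\top(\Pi-\lambda\Gamma)\zeta+\lambda\,\zeta^\top\Gamma\zeta\geq 0,
\]
since both terms are nonnegative. This gives \eqref{Eq:Sprocedure}, and this direction needs no regularity assumption.

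\emph{Necessity.} Assume \eqref{Eq:Sprocedure} holds. Consider the joint numerical range
\[
\mathcal{W}\triangleq\{(\zeta^\top\Pi\zeta,\ \zeta^\top\Gamma\zeta):\zeta\}\subset\R^2.
\]
By Dines' theorem, $\mathcal{W}$ is a convex cone. Hypothesis \eqref{Eq:Sprocedure} says that $\mathcal{W}$ does not meet the open convex cone
\[
\mathcal{C}\triangleq\{(a,b):a<0,\ b\geq 0\}.
\]
Because $\mathcal{W}$ and $\mathcal{C}$ are disjoint convex sets, a separating hyperplane through the origin exists. So there is $(\mu,\lambda_0)\neq(0,0)$ with $\mu a-\lambda_0 b\geq 0$ on $\mathcal{W}$ and $\mu a-\lambda_0 b\leq 0$ on $\mathcal{C}$. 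Testing points of $\mathcal{C}$ gives $\mu\geq 0$ and $\lambda_0\geq 0$.

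The main obstacle is to rule out $\mu=0$, and this is exactly where $\zeta_0$ enters. If $\mu=0$, then $\lambda_0>0$ and $-\lambda_0\,\zeta^\top\Gamma\zeta\geq 0$ for all $\zeta$. This contradicts $\zeta_0^\top\Gamma\zeta_0>0$. Hence $\mu>0$. Setting $\lambda=\lambda_0/\mu\geq 0$ gives $\zeta^\top(\Pi-\lambda\Gamma)\zeta\geq 0$ for all $\zeta$, which is \eqref{Eq:Sprocedure_LMI}.

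The step I consider most delicate is the convexity of $\mathcal{W}$, since it is what makes the S-procedure lossless for a single constraint. If citing Dines' theorem is not acceptable, I would prove it directly. Take two points of $\mathcal{W}$ generated by $\zeta_1$ and $\zeta_2$. Study the two quadratic forms along the two-dimensional span $\{\alpha\zeta_1+\beta\zeta_2\}$. This reduces the claim to the image of the unit circle under a pair of $2\times 2$ quadratic forms, whose image is an ellipse (possibly degenerate) through the origin's cone. The convex combination of the two points can then be realized by an intermediate angle, using the intermediate value theorem.

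Finally, if the statement is meant for matrix-valued variables $\zeta$ (the ``matrix'' S-lemma), I would note that the statement as written quantifies over vectors. In that case the same argument applies verbatim after vectorization.
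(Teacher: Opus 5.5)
The paper gives no proof of this lemma; it simply imports it from \cite{datadriven}. Your proposal is therefore a different, self-contained route, and it is correct. Sufficiency is the one-line decomposition. Necessity, via Dines' convexity of $\mathcal{W}$, separation from $\mathcal{C}$, and the point $\zeta_0$ excluding $\mu=0$, is the classical proof of Yakubovich's lossless S-lemma.

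Two small fixes are needed:
\begin{itemize}
\item $\mathcal{C}=\{(a,b):a<0,\ b\geq 0\}$ is convex but not open. This is harmless, since disjoint convex sets in $\R^2$ can always be weakly separated. Alternatively, use the open cone $\{a<0,\ b>0\}$ and recover $\mu,\lambda_0\geq 0$ by limits.
\item Putting the separating line through the origin needs a one-line justification. $\mathcal{W}$ is a cone containing $0$ and $\mathcal{C}$ is a cone, and this forces the separating constant to be $0$.
\end{itemize}

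The citation buys brevity. Your argument makes explicit where the regularity point $\zeta_0$ enters: only in necessity. The direction the paper actually uses, in the proof of Theorem~\ref{Th:Local_CPL} with the three constraints $\Gamma_i^g$, $\Gamma_i^\phi$, $\Gamma_i^\delta$, is exactly your elementary sufficiency step. That step extends verbatim to several multipliers. The losslessness you get from Dines' theorem, by contrast, is specific to a single constraint and fails in general with more.

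Finally, drop or qualify your closing vectorization remark. If $\zeta$ were a matrix $Z$ and the constraints were semidefinite, e.g.\ $\begin{bmatrix}\I\\ Z\end{bmatrix}^\top N\begin{bmatrix}\I\\ Z\end{bmatrix}\geq 0$, they would not be single scalar quadratic forms in $\mathrm{vec}(Z)$. Your two-form argument would then not apply verbatim, and such matrix versions need a separate proof.
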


\section{Problem Formulation}\label{problemformulation}
This section presents the dynamic modeling of the DC MG, which consists of multiple DGs, loads, and transmission lines. Specifically, our modeling approach is motivated by \cite{nahata}, which highlights the role and impact of communication and physical topologies in DC MGs.

\subsection{DC MG Physical Interconnection Topology}
The physical interconnection topology of a DC MG is modeled as a directed connected graph $\mathcal{G}^p =(\mathcal{V},\mathcal{E})$ where $\mathcal{V} \triangleq \mathcal{D} \cup \mathcal{L}$ is bipartite: $\mathcal{D} \triangleq \{\Sigma_i^{DG}, i\in\N_N\}$ (DGs) and $\mathcal{L} \triangleq \{\Sigma_l^{line}, l\in\N_L\}$ (lines). The DGs are interconnected with each other through transmission lines. The interface between each DG and the DC MG is through a point of common coupling (PCC). For simplicity, the loads are assumed to be connected to the DG terminals at the respective PCCs \cite{dorfler2012kron}. Indeed loads can be moved to PCCs using Kron reduction even if they are located elsewhere \cite{dorfler2012kron}.


To represent the DC MG's physical topology, we use its bi-adjacency matrix $\mathcal{A} \triangleq  \scriptsize 
\begin{bmatrix}
\0 & \mathcal{B} \\
\mathcal{B}^\T & \0
\end{bmatrix},
\normalsize$  
where $\mathcal{B} \in \mathbb{R}^{N \times L}$ is the bi-adjacency matrix of $\mathcal{G}^p$. In particular, $\mathcal{B} \triangleq [\mathcal{B}_{il}]_{i \in \N_N, l \in \N_L}$ with
$\mathcal{B}_{il} \triangleq \mb{1}_{\{l\in\mathcal{E}_i^+\}} - \mb{1}_{\{l\in\mathcal{E}_i^-\}},$ where $\mathcal{E}_i^+$ and $\mathcal{E}_i^-$ represent the sets of out- and in-edges, respectively, under an arbitrary but fixed orientation assigned to the edges of $\mathcal{G}^p$.
Here, $\mathcal{E}_i \triangleq \mathcal{E}_i^+ \cup \mathcal{E}_i^-$ denotes the set of all transmission lines connected to $\Sigma_i^{DG}$, $\forall i \in \mathbb{N}_N$, and $\mathcal{E}_l \triangleq \{i \in \mathbb{N}_N : l \in \mathcal{E}_i\}$ denotes the set of DGs connected to transmission line 
$\Sigma_l^{line}$, $\forall l \in \mathbb{N}_L$.


\subsection{Dynamic Model of a Distributed Generator (DG)}
Each DG consists of a DC voltage source, a voltage source converter (VSC), and some RLC components. Each DG $\Sigma_i^{DG},i\in\N_N$ supplies power to a specific ZIP load at its PCC (denoted $\text{PCC}_i$). Additionally, it interconnects with other DG units via transmission lines $\{\Sigma_l^{line}:l \in \mathcal{E}_i\}$. Figure \ref{DCMG} illustrates the schematic diagram of $\Sigma_i^{DG}$, including the local load, a connected transmission line, and the steady state, local, and distributed global controllers.

By applying Kirchhoff's Current Law (KCL) and Kirchhoff's Voltage Law (KVL) at $\text{PCC}_i$ on the DG side, we get the following equations for $\Sigma_i^{DG},i\in\N_N$:
\begin{equation}
\begin{aligned}\label{DGEQ}
\Sigma_i^{DG}:
\begin{cases}
    C_{ti}\frac{dV_i}{dt} &= I_{ti} - I_{Li}(V_i) - I_i + w_{vi}, \\
L_{ti}\frac{dI_{ti}}{dt} &= -V_i - R_{ti}I_{ti} + \text{sat}(V_{ti}) + w_{ci},
\end{cases}
\end{aligned}
\end{equation}
where the parameters $R_{ti}$, $L_{ti}$, and $C_{ti}$ 
represent the internal resistance, internal inductance, and filter capacitance of $\Sigma_i^{DG}$, respectively. The state variables are selected as $V_i$ and $I_{ti}$, where $V_i$ is the $\text{PCC}_i$ voltage and $I_{ti}$ is the internal current. $V_{ti}$ is the input command signal applied to the VSC and $\text{sat}(V_{ti})$ is the VSC saturation function. $I_{Li}(V_i)$ is the total current drawn by the local load and $I_i$ is the total current injected to the DC MG by $\Sigma_i^{DG}$, given by:
\begin{equation}
\label{Eq:DGCurrentNetOut}
I_i
=\sum_{l\in\mathcal{E}_i}\mathcal{B}_{il}I_l,
\end{equation}
where $I_l$, $l\in\mathcal{E}_i$ are line currents. We have also included $w_{vi}$ and $w_{ci}$ terms in \eqref{DGEQ} to represent 
unknown disturbances (assumed bounded and zero mean) resulting from external effects or modeling imperfections.


\subsection{Dynamic Model of a Transmission Line}
As shown in Fig. \ref{DCMG}, the power line $\Sigma_l^{line}$ can be represented as an RL circuit with resistance $R_l$ and inductance $L_l$. By applying KVL to $\Sigma_l^{line}$, we obtain:
\begin{equation}\label{line}
    \Sigma_l^{line}: 
    \begin{cases}  
        L_l\frac{dI_l}{dt}=-R_lI_l+\Bar{u}_l + \bar{w}_l,
    \end{cases}
\end{equation}
where $I_l$ is the line current (i.e., the state), $\Bar{u}_l=V_i-V_j=\sum_{i\in \mathcal{E}_l}\mathcal{B}_{il}V_i$ is the voltage differential (i.e., the line input), and $\bar{w}_l(t)$ represents the unknown disturbance (assumed bounded and zero mean) that affects the line dynamics.


\begin{figure}
    \centering
    \includegraphics[width=1\columnwidth]{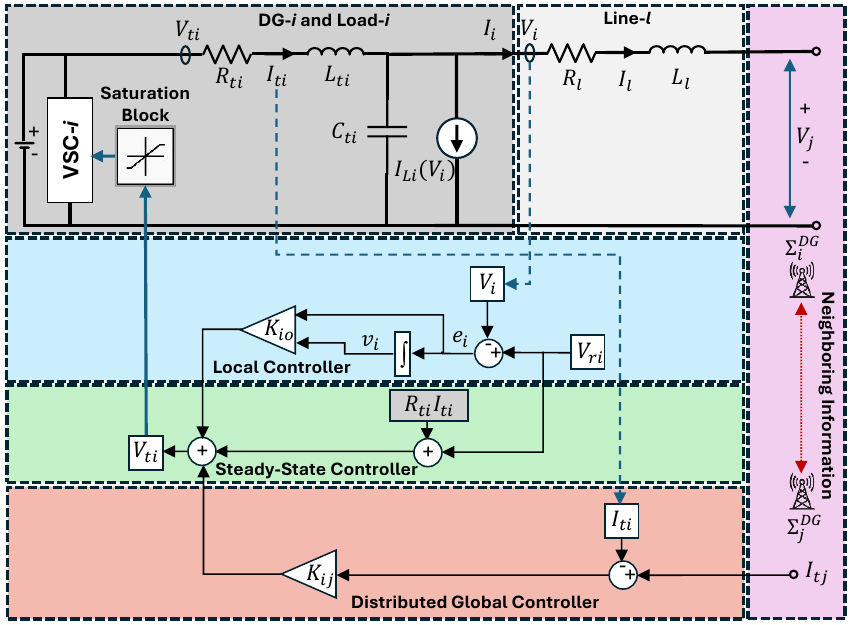}
    \caption{The electrical schematic of DG-$i$, load-$i$, $i\in\N_N$, local controller, distributed global controller, and line-$l$, $l\in\N_L$.}
    \label{DCMG}
\end{figure}

\subsection{Dynamic Model of a ZIP Load} 
Recall that $I_{Li}(V_i)$ in \eqref{DGEQ} (see also Fig. \ref{DCMG}) is the total current drawn by the load at $\Sigma_i^{DG}, i\in\N_N$. As the load is assumed to be a generic ``ZIP'' load, $I_{Li}(V_i)$ takes the form:
\begin{equation}\label{Eq:LoadModel}
I_{Li}(V_i) = I_{Li}^Z(V_i) + I_{Li}^I(V_i) + I_{Li}^P(V_i).
\end{equation}

Here, the ZIP load's components are: 
(i) a constant impedance load: $I_{Li}^{Z}(V_i)=Y_{Li}V_i$, where  $Y_{Li}=1/R_{Li}$ is the conductance of this load component;
(ii) a constant current load: $I_{Li}^{I}(V_i)=\bar{I}_{Li}$, where $\bar{I}_{Li}$ is the current demand of this load component; and
(iii) a constant power load (CPL): $I_{Li}^{P}(V_i)=V_i^{-1}P_{Li}$, where $P_{Li}$ represents the power demand of this load component. 

\subsection{Nonlinearities in DC MGs}
As opposed to $I_{Li}^{Z}(V_i)$ and $I_{Li}^{I}(V_i)$ (that take an affine linear form in the chosen state variables), the CPL $I_{Li}^{P}(V_i)$ introduces significant stability challenges due to its inherent negative impedance characteristic. This can be observed by examining the small-signal impedance of the CPL:
\begin{equation*}
    Z_{CPL} = \frac{\partial V_i}{\partial I_{Li}^P} = \frac{\partial V_i}{\partial (P_{Li}/V_i)} = -\frac{V_i^2}{P_{Li}} < 0.
\end{equation*}

This negative impedance characteristic creates a destabilizing effect in the DC MG, as it tends to amplify voltage perturbations rather than dampen them \cite{emadi2006constant}. When a small voltage drop occurs, the CPL draws more current to maintain constant power, further reducing the voltage and potentially leading to voltage collapse if not properly controlled.

The nonlinear nature of CPLs also introduces complexities for the control design. In particular, the nonlinear term $I_{Li}^P(V_i) = V_i^{-1}P_{Li}$ appears in the voltage dynamics (not in current dynamics) channel in \eqref{DGEQ}, and hence cannot be directly canceled using state feedback linearization techniques.  
Consequently, this nonlinearity must be carefully accounted for to ensure system stability and robustness, as often CPLs constitute a significant portion of the total ZIP load. 

The second nonlinearity arises from VSC input saturation, 
where the voltage command $V_{ti}$ applied to each DG is 
subject to hard amplitude limits imposed by the physical 
design and safety constraints of the VSC:
\begin{equation*}
\text{sat}(V_{ti}) \triangleq
\begin{cases}
    V_{ti}^{\max} & \text{if} \ V_{ti} > V_{ti}^{\max},\\
    V_{ti} &  \text{if} \ V_{ti}^{\min} \leq V_{ti} \leq V_{ti}^{\max},\\
    V_{ti}^{\min} & \text{if} \ V_{ti} < V_{ti}^{\min},
\end{cases}
\end{equation*}
with $V_{ti}^{\min}$ and $V_{ti}^{\max}$ denoting the lower 
and upper saturation limits of the VSC of $\Sigma_i^{DG}$, 
$i \in \mathbb{N}_N$, respectively. When these limits are 
active, the actual control input deviates from its intended 
value, and a controller that does not explicitly account for 
this discrepancy can exhibit significant performance 
degradation or even closed-loop instability 
\cite{huff2021stability}. Moreover, the elevated control 
effort demanded by CPL compensation tends to push the VSC 
inputs closer to their saturation limits, making these two 
nonlinearities inherently coupled in practice. Additionally, 
saturation in the PI-based local controllers inevitably 
triggers integrator windup, causing unbounded accumulation 
of the integral state and further degrading closed-loop 
performance. As we will see in the subsequent sections, 
the proposed control framework handles VSC input saturation 
via a dead-zone decomposition and sector-bounded 
characterization, with an anti-windup compensator 
rigorously integrated into the controller design and 
analysis to address the windup phenomenon.

\section{Proposed Hierarchical Control Architecture}\label{Sec:Controller}

The primary control objective of the DC MG is to ensure that the $\text{PCC}_i$ voltage $V_i$ at each $\Sigma_i^{DG}, i\in\N_N $ closely follows a specified reference voltage $V_{ri}$ while maintaining a proportional current sharing among DGs (with respect to their power ratings). In the proposed control architecture, these control objectives are achieved through the complementary action of local and distributed controllers. The local controller at each $\Sigma_i^{DG}$ is a PI controller responsible for voltage regulation. On the other hand, the distributed global controller at each $\Sigma_i^{DG}$ is a consensus-based controller that ensures proper current sharing among DGs.

\subsection{Local Voltage Regulating Controller}
At each $\Sigma_i^{DG}, i\in\N_N$, for its PCC$_i$ voltage $V_i(t)$ to effectively track the assigned reference voltage $V_{ri}(t)$, it is imperative to ensure that the tracking error $e_i(t)\triangleq V_i(t)-V_{ri}(t)$ converges to zero, i.e. $\lim_{t \to \infty} (V_i(t) - V_{ri}) = 0$. To this end, motivated by \cite{tucci2017}, we first include each $\Sigma_i^{DG}, i\in\N_N$ with an integrator state $v_i$ defined as $v_i(t) \triangleq \int_0^t (V_i(\tau) - V_{ri})d\tau$ (see also Fig. \ref{DCMG}) that follows the dynamics 
\begin{equation}\label{error}
    \frac{dv_i(t)}{dt}=e_i(t)=V_i(t)-V_{ri} - K_{aw,i}\phi_i(u_i),
\end{equation}
where $u_i(t) \triangleq V_{ti}(t)$ is the overall control 
input applied to the VSC of $\Sigma_i^{DG}$, $K_{aw,i} > 0$ is the anti-windup gain, and $\phi_i(u_i)$ is the dead-zone 
nonlinearity defined as
\begin{equation}\label{Eq:Saturation}
\phi_i(u_i) \triangleq
    \begin{cases}
        V^{\max}_{ti} - u_i & \text{if}\ u_i > V^{\max}_{ti},\\
        0 & \text{if}\ V^{\min}_{ti} \leq u_i \leq V^{\max}_{ti},\\
        V^{\min}_{ti} - u_i & \text{if}\ u_i < V^{\min}_{ti}.
    \end{cases}
\end{equation}
When saturation is inactive, $\phi_i(u_i) = 0$ and \eqref{Eq:Saturation} 
reduces to the standard integrator, otherwise, $-K_{aw,i}\phi_i(u_i)$ prevents unbounded accumulation of $v_i$ \cite{tarbouriech2009anti}. Each $\Sigma_i^{DG}, i\in\N_N$ is equipped with a local state feedback PI controller:
\begin{equation}\label{Controller}
  u_{iL}(t)\triangleq  k_{i0}^P (V_i-V_{ri}) + k_{i0}^I v_i(t) = K_{i0}x_i(t) - k_{i0}^P V_{ri},
\end{equation}
where 
\begin{equation}\label{Eq:DGstate}
x_i \triangleq \begin{bmatrix}
    V_i &  I_{ti} & v_i
\end{bmatrix}^\top,
\end{equation}
denotes the augmented state (henceforth referred to as the state) of $\Sigma_i^{DG}$ and $K_{i0}\triangleq \begin{bmatrix}
    k_{i0}^P & 0 & k_{i0}^I
\end{bmatrix}\in\mathbb{R}^{1\times3}$ is the local controller gain matrix.

\subsection{Distributed Global Controller}
We implement distributed global controllers at each DG, and task them with maintaining a proportional current sharing among the DGs. In particular, their objective is to ensure:
\begin{equation}\label{Eq:PropCurrSharing}
\frac{I_{ti}(t)}{P_{ni}} = \frac{I_{tj}(t)}{P_{nj}} = I_s, \quad \forall i,j\in\mathbb{N}_N,
\end{equation}
where $P_{ni}$ and $P_{nj}$ represent the power ratings of DGs  $\Sigma_i^{DG}$ and $\Sigma_j^{DG}$ respectively, and $I_s$ represents the common current sharing ratio that emerges from balancing the total load demand among DGs according to their power ratings.

To address the current sharing, as shown in Fig. \ref{DCMG}, we employ a consensus-based distributed controller. The communication topology is denoted as a directed graph $\mathcal{G}^c = (\mathcal{D}, \mathcal{F})$ where 
$\mathcal{D} \triangleq \{\Sigma_i^{DG}, i\in\mathbb{N}_N\}$ 
and $\mathcal{F}$ represents the set of communication links 
among DGs, with $\mathcal{F}_i^+$ and $\bar{\mathcal{F}}_i^-$ 
denoting the communication-wise out- and in-neighbors of 
$\Sigma_i^{DG}$, respectively. The distributed global 
controller is given by:
\begin{equation}\label{ControllerG}
    u_{iG}(t) \!\triangleq\! \sum_{j\in\bar{\mathcal{F}}_i^-} 
    k_{ij}^c\left(\frac{I_{ti}(t)}{P_{ni}} \!-\! 
    \frac{I_{tj}(t)}{P_{nj}}\right) \!=\! 
    \sum_{j\in\mathbb{N}_N} K_{ij} x_j(t),
\end{equation}
where $k_{ij}^c \in \mathbb{R}$ is the consensus controller 
gain, and $K_{ij}$ is the distributed consensus controller gain matrix defined as:
\begin{equation}\label{k_ij}
    K_{ij} \triangleq
    \frac{1}{L_{ti}}
    \begin{bmatrix}
        0 & 0 & 0 \\
        0 & [K_I]_{ij} & 0 \\
        0 & 0 & 0
    \end{bmatrix}, \quad \forall i,j \in \mathbb{N}_N,
\end{equation}
where $[K_I]_{ij}$ is the $(i,j)$-th element of $K_I \in \mathbb{R}^{N\times N}$, defined as:
\begin{equation}
    [K_I]_{ij} \triangleq
    \begin{cases}
        \dfrac{\sum_{j\in\mathcal{F}_i^-} k_{ij}^c}{P_{ni}} 
        & \text{if } i = j, \\[10pt]
        \dfrac{-k_{ij}^c}{P_{nj}} 
        & \text{if } i \neq j \text{ and } 
        j \in \bar{\mathcal{F}}_i^-, \\[10pt]
        0 & \text{otherwise.}
    \end{cases}
\end{equation}

Note that only the $(2,2)$-th element in each block 
$K_{ij}$ is non-zero by construction. The matrix 
$K_I \in \mathbb{R}^{N \times N}$ satisfies the 
weighted Laplacian property:
\begin{equation}
    K_I P_n \mathbf{1}_N = 0,
\end{equation}
where $P_n = \text{diag}([P_{ni}]_{i\in\mathbb{N}_N})$ 
and $\mathbf{1}_N \in \mathbb{R}^N$ is the vector of 
ones. This ensures that the distributed control vanishes 
when proportional current sharing is achieved among all 
DGs.



Finally, the overall control input $u_i(t)$ applied to the VSC of $\Sigma_i^{DG}$ (i.e., as $V_{ti}(t)$ in \eqref{DGEQ}) can be expressed as
\begin{equation}\label{controlinput}
    u_i(t) \triangleq V_{ti}(t) =  u_{iS} + u_{iL}(t) + u_{iG}(t),
\end{equation}
where $u_{iL}$ is given by \eqref{Controller}, $u_{iG}$ is given by \eqref{ControllerG} and $u_{iS}$ represents the steady-state control input.

As we will see in the sequel, steady-state control input $u_{iS}$ in \eqref{controlinput} also plays a crucial role in achieving the desired equilibrium point of the DC MG. In particular, this steady-state component ensures that the system maintains its operating point that satisfies both voltage regulation and current-sharing objectives, irrespective of the existing CPS and VSC input saturation nonlinearities. The specific structure and properties of $u_{iS}$ will be characterized through our stability analysis presented in Sec. \ref{Sec:Equ_Analysis}. 



\subsection{Closed-Loop Dynamics of the DC MG}

By combining \eqref{DGEQ} and \eqref{error}, and using 
the dead-zone decomposition $\text{sat}(u_i) = u_i + 
\phi_i(u_i)$, the overall dynamics of 
$\Sigma_i^{DG}$, $i\in\mathbb{N}_N$ can be written as
\begin{subequations}\label{statespacemodel}
\begin{align}
       \frac{dV_i}{dt}& =\frac{1}{C_{ti}}\big(I_{ti} \!-\! I_{Li}(V_i) \!-\! I_i \!+\! w_{vi}(t)\big), \label{Eq:ss:voltages}\\
        \frac{dI_{ti}}{dt}& =\frac{1}{L_{ti}}\big(\!-\!V_i \!-\! R_{ti}I_{ti} \!+\! u_i \!+\! \phi_i(u_i) \!+\! w_{ci}(t)\big), \label{Eq:ss:currents}\\
        \frac{dv_i}{dt}& = V_i \!-\! V_{ri} \!-\! K_{aw,i}\phi_i(u_i) \label{Eq:ss:ints},
\end{align}
\end{subequations}
where the terms  $I_i$, $I_{Li}(V_i)$, and $u_i$ can be substituted from \eqref{Eq:DGCurrentNetOut}, \eqref{Eq:LoadModel}, and \eqref{controlinput}, respectively. We can restate \eqref{statespacemodel} as
\begin{equation}\label{Eq:DGCompact}
\begin{aligned}
\dot{x}_i(t) &= A_ix_i(t) + B_iu_i(t) + B_i^{aw}\phi_i(u_i(t)) \\
&+ E_id_i(t) + \xi_i(t) + g_i(x_i(t)),
\end{aligned}
\end{equation}
where $x_i(t)$ is the DG state as defined in \eqref{Eq:DGstate}, $\phi_i(u_i(t))$ is the dead-zone nonlinearity defined in \eqref{Eq:Saturation} applied to the VSC input $u_i(t)=V_{ti}$, $d_i(t)$ is the exogenous input (disturbance) defined as:
\begin{equation}
d_i(t) \triangleq 
\bar{w}_i + w_i(t),
\end{equation}
with 
$\bar{w}_i \triangleq \bm{
    -\Bar{I}_{Li}  & 0 & -V_{ri}
}^\T$ representing the fixed (mean) known disturbance and
$w_i(t) \triangleq \bm{w_{vi}(t) & w_{ci}(t) & 0}^\T$ representing the zero-mean unknown disturbance, $\xi_i(t)$ is the transmission line coupling input defined as 
$\xi_i(t) \triangleq \begin{bmatrix}
    -C_{ti}^{-1}\sum_{l\in \mathcal{E}_i} \mathcal{B}_{il}I_l(t) & 0 & 0
\end{bmatrix}^\top$, $g_i(x_i(t))$ represents the nonlinear vector field due to the CPL defined as 
$$
g_i(x_i(t)) \triangleq C_{ti}^{-1}\begin{bmatrix} -\frac{P_{Li}}{V_i} & 0 & 0 \end{bmatrix}^\T,
$$
and $A_i$ and $B_i$ are system matrices respectively defined as 
\begin{equation}\label{Eq:DG_Matrix_definition}
A_i \!\triangleq\! 
\begin{bmatrix}
  -\frac{Y_{Li}}{C_{ti}} & \frac{1}{C_{ti}} & 0\\
-\frac{1}{L_{ti}} & -\frac{R_{ti}}{L_{ti}} & 0 \\
1 & 0 & 0
\end{bmatrix}, 
B_i \!\triangleq\!
\begin{bmatrix}
 0 \\ \frac{1}{L_{ti}} \\ 0
\end{bmatrix}, 
B_i^{aw} \!\triangleq\! \begin{bmatrix} 0 \\ \frac{1}{L_{ti}} \\ -K_{aw,i} \end{bmatrix},
\end{equation}
and $E_i \triangleq \diag\left(C_{ti}^{-1}, L_{ti}^{-1}, 1\right)$ is the disturbance input matrix.

Similarly, using \eqref{line}, the state space representation of the transmission line $\Sigma_l^{Line}$ can be written in a compact form:
\begin{equation}\label{Eq:LineCompact}
    \dot{\bar{x}}_l(t) = \bar{A}_l\bar{x}_l(t) + \bar{B}_l\bar{u}_l(t) + \bar{E}_l\bar{w}_l(t),
 \end{equation}
where $\bar{x}_l \triangleq I_l$ is the transmission line state,  $\bar{E}_l \triangleq \bm{\frac{1}{L_{l}}}$ is the disturbance matrix, and $\bar{A}_l$ and $\Bar{B}_l$ are the system matrices respectively defined as
\begin{equation}\label{Eq:LineMatrices}
\bar{A}_l \triangleq 
\begin{bmatrix}
-\frac{R_l}{L_l}
\end{bmatrix}\ 
\mbox{ and }\ 
\Bar{B}_l \triangleq  \begin{bmatrix}
\frac{1}{L_l}
\end{bmatrix}.  
\end{equation}

\subsection{Networked System Model}\label{Networked System Model}

Let us define $u\triangleq[u_i]_{i\in\N_N}^\T$ and $\Bar{u}\triangleq[\Bar{u}_l]_{l\in\N_L}^\T$ respectively as vectorized control inputs of DGs and lines, 
$x\triangleq[x_i^\T]_{i\in\N_N}^\T$ and $\Bar{x}\triangleq[\Bar{x}_l]_{l\in\N_L}^\T$ respectively as the full states of DGs and lines, $w\triangleq[w_i^\T]_{i\in\N_N}^\T$ and $\bar{w}\triangleq[\bar{w}_l]_{l\in\N_L}^\T$ respectively as disturbance inputs of DGs and lines.

Using these notations, we can now represent the closed-loop DC MG as two sets of subsystems (i.e., DGs and lines) interconnected with disturbance inputs through a static interconnection matrix $M$ as shown in Fig. \ref{netwoked}. From comparing Fig. \ref{netwoked} with Fig. \ref{Networked}, it is clear that the DC MG takes a similar form to a standard networked system discussed in Sec. \ref{SubSec:NetworkedSystemsPreliminaries}.

\begin{figure}
    \centering
    \includegraphics[width=0.9\columnwidth]{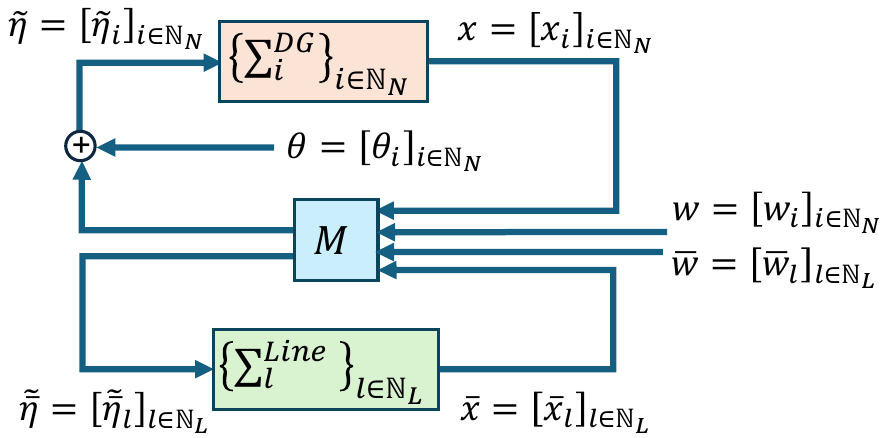}
    \caption{DC MG dynamics as a networked system configuration.}
    \label{netwoked}
\end{figure}

To identify the specific structure of the interconnection matrix $M$ in Fig. \ref{netwoked} (i.e., for DC MG), we need to closely observe how the inputs and outputs of DGs and lines are interconnected, how they are coupled with disturbance inputs, and how they define performance goals (outputs) of the DC MG.

To this end, we first use \eqref{Eq:DGCompact} and \eqref{controlinput} to state the closed-loop dynamics of $\Sigma_i^{DG}$ as:
\begin{equation}\label{closedloopdynamic}
    \dot{x}_i(t) \!=\! (A_i+B_iK_{i0})x_i(t) \!+\! g_i(x_i(t)) \!+\! B_i^{aw}\phi_i(u_i(t)) \!+\! \eta_i(t),
\end{equation}
where $\eta_i$ is defined as 
\begin{equation}
\label{Eq:DGClosedLoopDynamics_varphi}
\eta_i(t) \triangleq E_iw_i(t) +\sum_{l\in\mathcal{E}_i}\Bar{C}_{il}\Bar{x}_l(t)+\sum_{j\in\bar{\mathcal{F}}_i^-}K_{ij}x_j(t) + \theta_i,
\end{equation}
with
$\Bar{C}_{il} \triangleq -C_{ti}^{-1}\begin{bmatrix}
\mathcal{B}_{il} & 0 & 0
\end{bmatrix}^\top$, 
$$\theta_i \triangleq  E_i\bar{w}_i + B_iu_{iS} - B_ik_{i0}^P V_{ri},$$ where $K_{ij}$ is as defined in \eqref{k_ij}.

By vectorizing \eqref{Eq:DGClosedLoopDynamics_varphi} over all $i\in\N_N$, we get 
\begin{equation}\label{Eq:DGClosedLoopInputVector}
    \eta \triangleq  Ew+\Bar{C}\Bar{x}+Kx+\theta,
\end{equation}
where $\eta \triangleq [\eta_i^\T]_{i\in \N_N}^\T$ represents the effective input vector to the DGs (see Fig. \ref{netwoked}), $E \triangleq \diag([E_i]_{i\in\N_N})$ represents the disturbance matrix of DGs, $\Bar{C}\triangleq[\Bar{C}_{il}]_{i\in\mathbb{N}_N,l\in\mathbb{N}_L}$, $K \triangleq [K_{ij}]_{i,j\in\mathbb{N}_N}$, and $\theta\triangleq [\theta_i]_{i\in\N_N}^\T$ represents a constant (time-invariant) input vector applied to the DGs.

\begin{remark}
The block matrices $K$ and $\Bar{C}$ in \eqref{Eq:DGClosedLoopInputVector} are indicative of the communication and physical topologies of the DC MG, respectively. In particular, the $(i,j)$\tsup{th} block in $K$, i.e., $K_{ij}$, indicates a communication link from $\Sigma_j^{DG}$ to $\Sigma_i^{DG}$. Similarly, $(i,l)$\tsup{th} block in $\bar{C}$, i.e., $\bar{C}_{i,l}$ indicates a physical link between $\Sigma_i^{DG}$ and $\Sigma_l^{Line}$.
\end{remark}


Similarly to DGs, using \eqref{Eq:LineCompact}, we state the closed-loop dynamics of $\Sigma_l^{Line}$ as  
\begin{equation}
    \dot{\bar{x}}_l(t) = \bar{A}_l\bar{x}_l(t) + \bar{\eta}_l(t),
\end{equation}
where $\bar{\eta}_l(t)$ is defined as 
\begin{equation}\label{Eq:linecontroller}
    \Bar{\eta}_l(t) \triangleq \sum_{i\in\mathcal{E}_l}C_{il}x_i(t) + \bar{E}_l\bar{w}_l(t),
\end{equation}
with $C_{il}\triangleq \begin{bmatrix}
    \mathcal{B}_{il} & 0 & 0
\end{bmatrix}$ (note also that $C_{il} = -C_{ti}\bar{C}_{il}^\T$). By vectorizing \eqref{Eq:linecontroller} over all $l\in\N_L$, we get:
\begin{equation}\label{ubar}
    \Bar{\eta} \triangleq Cx + \bar{E}\bar{w},
\end{equation}
where $\bar{\eta} \triangleq [\bar{\eta}_l]_{l\in\N_L}^\T$ represents the effective input vector to the lines, $C\triangleq[C_{il}]_{l\in\mathbb{N}_L,i\in\mathbb{N}_N}$ (note also that $C = - \bar{C}^\T C_t$ where $C_t \triangleq \diag([C_{ti}\I_{3}]_{i\in\N_N})$), $\bar{E} \triangleq \diag([\bar{E}_l]_{l\in\N_L})$, and $\bar{w} \triangleq [\bar{w}_l]_{l\in\N_L}$.





Finally, using \eqref{Eq:DGClosedLoopInputVector} and \eqref{ubar}, we can identify the interconnection relationship:  
\begin{equation}
\begin{bmatrix} \eta \\ \bar{\eta} \end{bmatrix}
= M
\begin{bmatrix} x \\ \bar{x} \\ w \\ \bar{w} \end{bmatrix}
+ \begin{bmatrix} \theta \\ \0 \end{bmatrix},
\end{equation}
where the interconnection matrix $M$ takes the form:
\begin{equation}\label{Eq:MMatrix}
M \triangleq 
\begin{bmatrix}
    K & \Bar{C} & E & \0 \\
    C & \0 & \0 & \bar{E}\\
\end{bmatrix}.
\end{equation}

When the physical topology $\mathcal{G}^p$ is predefined, so are the block matrices $\Bar{C}$ and $C$ (recall $C = -\bar{C}^\T C_t$). This leaves only the block matrix $K$ within the block matrix $M$ as a tunable quantity for optimizing the desired properties of the closed-loop DC MG system. Note that synthesizing $K$ simultaneously determines the distributed global controllers \eqref{ControllerG} and the communication topology $\mathcal{G}^c$. In the following two sections, we formulate the networked system's error dynamics (around a desired operating point) and provide a systematic dissipativity-based approach to synthesize the block matrix $K$ that enforces dissipativity of the closed-loop error dynamics (from disturbance inputs to a given performance output).

\section{Nonlinear Networked Error Dynamics}\label{Sec:ControlDesign}

This section establishes the mathematical foundation for the stability and robustness analysis of the DC MG with CPL and VSC input saturation nonlinearities. First, a rigorous equilibrium point analysis is conducted to characterize steady-state behavior and derive necessary conditions for simultaneous voltage regulation and current sharing. Next, error dynamics are developed around the identified equilibrium point, explicitly accounting for CPL and VSC input saturation nonlinearities. These error dynamics are then cast into a standard networked system structure with clearly defined performance outputs and disturbance inputs, providing a complete state-space representation of both DG and line error subsystems.

\subsection{Equilibrium Point Analysis of the DC MG}\label{Sec:Equ_Analysis}

In this section, we analyze the equilibrium conditions of the DC MG to establish mathematical relationships between system parameters and steady-state (equilibrium) behavior. This analysis is crucial for identifying the necessary conditions for achieving both voltage regulation and proportional current sharing simultaneously. We pay particular attention to the impact of CPL and VSC saturation nonlinearities, which potentially can lead to instability in the DC MG system.

\begin{lemma}\label{Lm:equilibrium}
Assuming all zero mean unknown disturbance components to be zero, i.e., $w_i(t)=\0, \forall i\in\N_N$ and $\bar{w}_l(t)=0,\forall l\in\N_L$, for a given reference voltage vector $V_r$, under a fixed control input $u(t) = u_E \triangleq[u_{iE}]_{i\in\N_N}$ defined as
\begin{equation}
    u_E \triangleq [\I + R_t(\mathcal{B}R^{-1}\mathcal{B}^\T + Y_L)]V_r + R_t(\bar{I}_L + \diag(V_r)^{-1}P_L), 
\end{equation}
there exists an equilibrium point for the DC MG characterized by reference voltage vector $V_r \triangleq [V_{ri}]_{i\in\N_N}^\T$, constant current load vector $\bar{I}_L \triangleq [\bar{I}_{Li}]_{i\in\N_N}$, and CPL vector $P_L \triangleq [P_{Li}]_{i\in\N_N}^\T$, given by:
\begin{equation}\label{Eq:Equilibrium}
\begin{aligned}
V_E &= V_{r},\\
I_{tE} &= (\mathcal{B}R^{-1}\mathcal{B}^\T + Y_L)V_r + \bar{I}_L + \diag(V_r)^{-1}P_L,\\
\bar{I}_E &= R^{-1}\mathcal{B}^\T V_r,
\end{aligned}
\end{equation}
provided that the equilibrium control input $u_E$ satisfies the VSC saturation constraints:
\begin{equation*}
    V_{ti}^{\min} \leq u_{iE} \leq V_{ti}^{\max}, \ \forall i \in\N_N,
\end{equation*}
where we define the state equilibrium vectors $V_E\triangleq[V_{iE}]_{i\in\N_N}$,
$I_{tE}\triangleq[I_{tiE}]_{i\in\N_N}$,
$\bar{I}_E\triangleq[\bar{I}_{lE}]_{l\in\N_L}$, 
and the system parameters 
$Y_L\triangleq\diag([Y_{Li}]_{i\in\N_N})$,
$R_t\triangleq\diag([R_{ti}]_{i\in\N_N})$,
and $R\triangleq\diag([R_l]_{l\in\mathbb{N}_L})$.
\end{lemma}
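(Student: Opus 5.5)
We verify directly that the proposed triple $(V_E, I_{tE}, \bar{I}_E)$, together with the constant input $u(t)=u_E$ and a suitable (arbitrary constant) integrator state $v_{iE}$, makes every right-hand side in \eqref{statespacemodel} and \eqref{line} vanish when $w_i=\0$ and $\bar{w}_l=0$. The first step is to dispose of the saturation nonlinearity. By hypothesis $V_{ti}^{\min}\le u_{iE}\le V_{ti}^{\max}$, so from \eqref{Eq:Saturation} we get $\phi_i(u_{iE})=0$, i.e., $\text{sat}(u_{iE})=u_{iE}$. Consequently both the anti-windup term in \eqref{Eq:ss:ints} and the dead-zone term in \eqref{Eq:ss:currents} drop out at the candidate equilibrium, and the equilibrium equations become those of the unsaturated model.

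Next we impose stationarity component by component, working in vector form over $i\in\N_N$ and $l\in\N_L$.
\begin{itemize}
\item[(i)] Integrator: \eqref{Eq:ss:ints} with $\phi_i=0$ gives $V_{iE}=V_{ri}$, hence $V_E=V_r$. Since the input $u$ is held fixed at $u_E$ rather than fed back through $v_i$, the value $v_{iE}$ is free and does not affect the other equations.
\item[(ii)] Lines: \eqref{line} with $\dot I_l=0$ gives $R_l\bar{I}_{lE}=\sum_{i\in\mathcal{E}_l}\mathcal{B}_{il}V_{iE}$. In vector form this is $R\bar{I}_E=\mathcal{B}^\T V_r$, so $\bar{I}_E=R^{-1}\mathcal{B}^\T V_r$, using $R_l>0$.
\item[(iii)] Voltage channel: \eqref{Eq:ss:voltages} with \eqref{Eq:DGCurrentNetOut} and \eqref{Eq:LoadModel} gives $I_{tE}=Y_LV_r+\bar{I}_L+\diag(V_r)^{-1}P_L+\mathcal{B}\bar{I}_E$. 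Substituting (ii) yields exactly the stated $I_{tE}$. This step requires $V_{ri}\neq0$ so that the CPL current $P_{Li}/V_{ri}$ is well defined, which holds for a physically meaningful (positive) reference voltage.
\item[(iv)] Current channel: \eqref{Eq:ss:currents} with $\phi_i=0$ gives $u_E=V_E+R_tI_{tE}$. Substituting (i) and (iii) reproduces the stated formula $u_E=[\I+R_t(\mathcal{B}R^{-1}\mathcal{B}^\T+Y_L)]V_r+R_t(\bar I_L+\diag(V_r)^{-1}P_L)$.
\end{itemize}

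Since all derivatives vanish, the point is an equilibrium, which proves existence. The argument is essentially routine linear algebra. The only delicate point is logical consistency: $u_E$ is defined through the unsaturated relation, so we must argue that the saturation-feasibility condition is exactly what makes $\text{sat}(u_{iE})=u_{iE}$ and $\phi_i(u_{iE})=0$. Without that condition, \eqref{Eq:ss:currents} would carry the extra term $\phi_i(u_{iE})\neq0$. The integrator equation would then force $V_{iE}-V_{ri}=K_{aw,i}\phi_i(u_{iE})\neq0$, and voltage regulation would fail. We will make this point explicit in the proof and note that it is the reason the constraint appears in the statement.
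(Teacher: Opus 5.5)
Your proposal is correct and follows the paper's proof essentially step for step. Both impose stationarity row by row: the integrator gives $V_E=V_r$, the lines give $\bar{I}_E=R^{-1}\mathcal{B}^\top V_r$, the voltage row then yields $I_{tE}$, and the current row gives $u_E=V_r+R_tI_{tE}$. The only real difference is that you explicitly use the saturation constraint to conclude $\phi_i(u_{iE})=0$, whereas the paper uses this only implicitly by writing its equilibrium equation without the $B_i^{aw}\phi_i$ term.
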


\begin{proof}
The equilibrium state of the closed-loop dynamic $\Sigma_i^{DG}$ \eqref{Eq:DGCompact} satisfies:
\begin{equation}\label{Eq:EqDGCompact}
A_ix_{iE}+B_iu_{iE}+g_i(x_{iE})+E_id_{iE}+\xi_{iE} = 0,
\end{equation}
where $x_{iE} \triangleq \bm{V_{iE} & I_{tiE} & v_{iE}}^\T$ represents the equilibrium state components of DG, and $w_{iE}\triangleq \bar{w}_i$ and $\xi_{iE}$ represent the equilibrium values of disturbance and interconnection terms, respectively. Thus, we get
\begin{equation}
\begin{aligned}
&\begin{bmatrix} -\frac{Y_{Li}}{C_{ti}} & \frac{1}{C_{ti}} & 0\\ 
-\frac{1}{L_{ti}} & -\frac{R_{ti}}{L_{ti}} & 0 \\ 
1 & 0 & 0\end{bmatrix} 
\begin{bmatrix} V_{iE} \\ I_{tiE} \\ v_{iE} \end{bmatrix} 
+\begin{bmatrix} 0 \\ \frac{1}{L_{ti}} \\ 0 \end{bmatrix}u_{iE}\\
&+\begin{bmatrix} -\frac{P_{Li}}{C_{ti}V_{iE}} \\ 0 \\ 0 \end{bmatrix}
+ E_i\bar{w}_i + \xi_{iE} = 0.
\end{aligned}
\end{equation}

From the respective rows of this matrix equation, we get:
\begin{align}
\label{Eq:steadystatevoltage}
&\frac{1}{C_{ti}}\left(-Y_{Li}V_{iE} \!+\! I_{tiE} \!-\! \bar{I}_{Li} \!-\! \frac{P_{Li}}{V_{iE}} \!-\! \sum_{l\in \mathcal{E}_i} \mathcal{B}_{il}\bar{I}_{lE}\right) \!=\! 0,\\
\label{Eq:steadystatecurrent}
&\frac{1}{L_{ti}}\left(-V_{iE} - R_{ti}I_{tiE} + u_{iE}\right) = 0, \\
\label{Eq:steadystateintegrator}
&V_{iE} - V_{ri} = 0.
\end{align}
To simplify the first equation further, we need to know an expression for $\bar{I}_{lE}$. However, from the last two equations above, we can obtain
\begin{align}
 \label{Eq:voltageEquil}
    &V_{iE} = V_{ri},\\  
    \label{Eq:controlEquil}
    &u_{iE} = V_{iE} + R_{ti}I_{tiE}.
\end{align}

Note that the equilibrium state of the $\Sigma_l^{Line}$ \eqref{Eq:LineCompact} satisfies:
\begin{equation}\label{Eq:EqLineCompact}
    \bar{A}_l\bar{x}_{lE} + \bar{B}_l\bar{u}_{lE} = 0,
 \end{equation}
where $\bar{x}_{lE} \triangleq \bar{I}_{lE}$ represents the equilibrium state of line, and $\bar{u}_{lE}\triangleq \sum_{i\in \mathcal{E}_l}\mathcal{B}_{il}V_{iE}$ represent the equilibrium values of control input. Therefore, we get:
\begin{equation}\label{Eq:steadystateLine}
    -\frac{R_l}{L_l}\bar{I}_{lE} + \frac{1}{L_l}\sum_{i\in \mathcal{E}_l}\mathcal{B}_{il}V_{iE} = 0,
\end{equation}
leading to
\begin{equation}
\label{Eq:LineEquil}
    \bar{I}_{lE} = \frac{1}{R_l}\sum_{i\in \mathcal{E}_l}\mathcal{B}_{il}V_{iE} = \frac{1}{R_l}\sum_{j\in \mathcal{E}_l}\mathcal{B}_{jl}V_{jE},
\end{equation}
which can be applied in \eqref{Eq:steadystatevoltage} (together with \eqref{Eq:voltageEquil}) to obtain
\begin{equation}\label{Eq:currentEquil}
    -Y_{Li}V_{ri} + I_{tiE} - \sum_{l\in\mathcal{E}_i}\mathcal{B}_{il}\bigg(\frac{1}{R_l}\sum_{j\in \mathcal{E}_l}\mathcal{B}_{jl}V_{rj}\bigg)- \bar{I}_{Li} - \frac{P_{Li}}{V_{ri}} = 0.
\end{equation}

We next vectorize these equilibrium conditions. From \eqref{Eq:voltageEquil}, we get $V_E=V_r$. From \eqref{Eq:controlEquil}, i.e., the control equilibrium equation, we get:
\begin{equation}\label{Eq:controlEquilVec}
    u_E = V_r + R_tI_{tE}.
\end{equation}
Vectorizing the voltage dynamics equation \eqref{Eq:currentEquil}, we get:
\begin{equation}\nonumber
    -Y_L V_r + I_{tE} - \mathcal{B}R^{-1}\mathcal{B}^\T V_r - \bar{I}_L - \diag(V_r)^{-1}P_L = 0,
\end{equation}
leading to
\begin{equation}\nonumber
    I_{tE} = (\mathcal{B}R^{-1}\mathcal{B}^\T + Y_L)V_r + \bar{I}_L + \diag(V_r)^{-1}P_L.
\end{equation}
Therefore, the vectorized control equilibrium equation can be expressed as:
\begin{equation}
\begin{aligned}\nonumber
    u_E &= V_r + R_t I_{tE} \\
    &= V_r + R_t((\mathcal{B}R^{-1}\mathcal{B}^\T + Y_L)V_r + \bar{I}_L + \diag(V_r)^{-1}P_L)\\
    &= [\I + R_t(\mathcal{B}R^{-1}\mathcal{B}^\T + Y_L)]V_r + R_t(\bar{I}_L + \diag(V_r)^{-1}P_L).
    \end{aligned}
\end{equation}
Finally, from \eqref{Eq:LineEquil}, i.e., the equilibrium line currents, we get 
\begin{equation}\nonumber
    \bar{I}_E = R^{-1}\mathcal{B}^\T V_r,
\end{equation}
which completes the proof, as we have derived all the required equilibrium conditions. 



\end{proof}


\begin{remark}
The uniqueness of the equilibrium point established in 
Lm. \ref{Lm:equilibrium} is mathematically guaranteed 
under the condition that $V_{ri} > 0$, $\forall i \in 
\mathbb{N}_N$, which is physically reasonable since all 
DG voltages must be positive in a DC MG. Specifically, 
the diagonal matrices $R$, $R_t$, and $Y_L$ have strictly 
positive elements, making them positive definite, while 
the incidence matrix $\mathcal{B}$ maintains full rank by 
virtue of the connected network topology. Consequently, 
the coefficient matrix $(\mathcal{B}R^{-1}\mathcal{B}^\top 
+ Y_L)$ in \eqref{Eq:Equilibrium} is invertible, ensuring 
a unique one-to-one mapping from any given reference 
voltage vector $V_r$ to all equilibrium variables under 
specified loading conditions. Note that the CPL term 
$\text{diag}(V_r)^{-1}P_L$ in \eqref{Eq:Equilibrium} 
is well-defined and finite under this condition since 
$V_{ri} > 0$, $\forall i \in \mathbb{N}_N$, and hence 
does not affect the uniqueness of the equilibrium point.
\end{remark}


\begin{remark}\label{Lm:currentsharing}
At the equilibrium, we require the condition for proportional current sharing among DGs to meet (i.e., \eqref{Eq:PropCurrSharing}), and thus, we require
\begin{equation}\label{Eq:currentsharing}
\frac{I_{tiE}}{P_{ni}} = I_s \Longleftrightarrow I_{tiE} = P_{ni}I_s,\ 0 \leq I_s \leq 1,\ \forall i\in\mathbb{N}_N, 
\end{equation}
which can be expressed in vectorized form as $I_{tE} = P_{n} \mathbf{1}_N I_s$, where $P_n\triangleq\diag([P_{ni}]_{i\in\mathbb{N}_N})$. Using this requirement in \eqref{Eq:controlEquilVec}, we get $u_E = V_r +  R_t P_{n} \mathbf{1}_N I_s,$ i.e., 
$$
u_{iE} = V_{ri} +  R_{ti} P_{ni} I_s, \quad \forall i\in\N_N. 
$$
Therefore, to achieve this particular control equilibrium (which satisfies both voltage regulation and current sharing objectives), we need to select our steady-state control input in \eqref{controlinput} as:
\begin{equation}\label{Eq:Rm:SteadyStateControl}
    u_{iS} = V_{ri} + R_{ti}P_{ni}I_s, \quad \forall i\in\N_N. 
\end{equation}
This is because at the equilibrium point, local control $u_{iL}$ \eqref{Controller} and distributed global control $u_{iG}$ \eqref{ControllerG} components are, by definition, zero for any $i\in\N_N$. Furthermore, to ensure the equilibrium control input $u_{iE} = u_{iS}$ does not violate the VSC saturation constraint, we require:
\begin{equation*}
    V_{ti}^{\min} \leq V_{ri} + R_{ti}P_{ni} I_s \leq V_{ti}^{\max}, \ \forall i\in\N_N.
\end{equation*}
\end{remark}

In conclusion, using Lm. \ref{Lm:equilibrium} and Rm. \ref{Lm:currentsharing}, for the equilibrium of DC MG to satisfy the voltage regulation and current sharing conditions, we require:
\begin{equation}
    \begin{aligned}
        u_E &= [\I + R_t(\mathcal{B}R^{-1}\mathcal{B}^\T + Y_L)]V_r + R_t(\bar{I}_L + \diag(V_r)^{-1}P_L)  \\
        &= V_r +  R_t P_{n} \mathbf{1}_N I_s = u_S\\
        V_E &= V_r, \\
        I_{tE} &= (\mathcal{B}R^{-1}\mathcal{B}^\T + Y_L)V_r + \bar{I}_L + \diag(V_r)^{-1}P_L\\
        &= P_n \textbf{1}_N I_s,\\
        \bar{I}_E &= R^{-1} \mathcal{B}^\T V_r.
    \end{aligned}
\end{equation}

The following theorem formalizes the optimization problem derived from Lm. \ref{Lm:equilibrium} and Rm. \ref{Lm:currentsharing}, for the selection of $V_r$ and $I_s$, that ensures the existence of an equilibrium state that satisfies voltage regulation and current sharing conditions while also respecting reference voltage limits $V_{\min}$ and $V_{\max}$ and the current sharing coefficient $I_s \in [0,1]$. A formal proof is omitted as the result follows directly from the previous discussion.







\begin{theorem}\label{Th:Equilibrium}
To ensure the existence of an equilibrium point that satisfies the voltage regulation and current sharing objectives, the reference voltages $V_r$ and current sharing coefficient $I_s$ should be a feasible solution in the optimization problem:
\begin{equation}\label{Eq:Th:Equilibrium}
\begin{aligned}
\min_{V_r,I_s}\ &\alpha_V\Vert V_r - \bar{V}_r\Vert^2 + \alpha_I I_s \\
\mbox{Sub. to:}\ & V_{\min} \leq V_r \leq V_{\max},
\quad 0 \leq I_s \leq 1,
\end{aligned}
\end{equation}
$$P_n \mathbf{1}_N I_s - (\mathcal{B}R^{-1}\mathcal{B}^\T + Y_L)V_r = \bar{I}_L + \diag(V_r)^{-1}P_L,$$
$$
V_t^{\min} \leq V_r + R_tP_n \mathbf{1}_N I_s \leq V_t^{\max}, 
$$
where $\bar{V}_r$ is a desired reference voltage value, $V_t^{\min} \triangleq [V_{ti}^{\min}]_{i\in\N_N}$, $V_t^{\max} \triangleq [V_{ti}^{\max}]_{i\in\N_N}$, and $\alpha_V > 0$ and $\alpha_I > 0$ are two normalizing cost coefficients.
\end{theorem}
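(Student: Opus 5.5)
The plan is to read Theorem \ref{Th:Equilibrium} as the statement that feasibility of $(V_r,I_s)$ in \eqref{Eq:Th:Equilibrium} is sufficient for the existence of an equilibrium meeting both objectives. The objective function plays no role in existence; it only selects among feasible points. So I will fix an arbitrary feasible pair $(V_r,I_s)$ and build the equilibrium explicitly, using Lemma \ref{Lm:equilibrium} and Remark \ref{Lm:currentsharing}.

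\textbf{Step 1 (candidate point).} Take the candidate equilibrium from \eqref{Eq:Equilibrium}:
\begin{itemize}
\item $V_E=V_r$,
\item $\bar I_E=R^{-1}\mathcal{B}^\T V_r$,
\item $I_{tE}=(\mathcal{B}R^{-1}\mathcal{B}^\T+Y_L)V_r+\bar I_L+\diag(V_r)^{-1}P_L$.
\end{itemize}
The equality constraint of \eqref{Eq:Th:Equilibrium}, rearranged, says exactly that $I_{tE}=P_n\mathbf{1}_N I_s$. Hence the proportional current sharing condition \eqref{Eq:PropCurrSharing} holds at this point, with common ratio $I_s\in[0,1]$. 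The box constraint $V_{\min}\le V_r\le V_{\max}$ gives $V_{ri}>0$, provided $V_{\min}>0$ as is implicit physically. This makes the CPL term $\diag(V_r)^{-1}P_L$ well defined.

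\textbf{Step 2 (control inputs).} Choose the steady-state input $u_{iS}=V_{ri}+R_{ti}P_{ni}I_s$ as in \eqref{Eq:Rm:SteadyStateControl}. Substituting $I_{tE}=P_n\mathbf{1}_NI_s$ shows that $u_S$ coincides with the $u_E$ of Lemma \ref{Lm:equilibrium}.

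Next, verify that the total input \eqref{controlinput} evaluated at the candidate state equals $u_S$:
\begin{itemize}
\item The distributed term vanishes, $u_{iG}=0$. At the candidate state $I_{ti}/P_{ni}$ is the same for all $i$, so $K_IP_n\mathbf{1}_N=0$ gives $\sum_j K_{ij}x_{jE}=0$.
\item The local term $u_{iL}=k_{i0}^P(V_{ri}-V_{ri})+k_{i0}^Iv_{iE}$ vanishes once we pick $v_{iE}=0$. This choice is admissible because the integrator state does not appear in the other equilibrium equations.
\end{itemize}

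\textbf{Step 3 (saturation is inactive).} The second inequality constraint, $V_t^{\min}\le V_r+R_tP_n\mathbf{1}_NI_s\le V_t^{\max}$, is exactly the requirement that $u_{iE}$ lies in the linear region of the VSC. Therefore $\phi_i(u_{iE})=0$, and \eqref{statespacemodel} reduces at this point to the saturation-free equations used in the proof of Lemma \ref{Lm:equilibrium}:
\begin{itemize}
\item the anti-windup term $K_{aw,i}\phi_i$ vanishes, so $\dot v_i=V_{ri}-V_{ri}=0$;
\item the current equation holds by \eqref{Eq:controlEquil};
\item the voltage equation holds by \eqref{Eq:currentEquil};
\item the line equations hold by \eqref{Eq:LineEquil}.
\end{itemize}
Hence the candidate point is an equilibrium of the closed loop. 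It satisfies voltage regulation ($V_E=V_r$) and current sharing ($I_{tE}=P_n\mathbf{1}_NI_s$), which completes the argument.

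\textbf{Main obstacle.} The delicate step is Step 3, making sure the dead-zone $\phi_i$ and the anti-windup path do not alter the equilibrium equations. This is exactly what the VSC constraint in \eqref{Eq:Th:Equilibrium} guarantees. A secondary point is that the equality constraint is nonlinear in $V_r$ because of $\diag(V_r)^{-1}P_L$. The theorem, however, only claims sufficiency of feasibility, so no solvability or convexity argument is needed.
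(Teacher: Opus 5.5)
Your proposal is correct and follows the same route as the paper, which omits a formal proof and appeals to Lemma~\ref{Lm:equilibrium} and Remark~\ref{Lm:currentsharing}: the equality constraint encodes $I_{tE}=P_n\mathbf{1}_NI_s$, the choice $u_{iS}=V_{ri}+R_{ti}P_{ni}I_s$ gives $u_S=u_E$, and the VSC constraint keeps $u_{iE}$ in the unsaturated region. Your write-up makes this explicit and fills in details the paper glosses over: fixing $v_{iE}=0$ so that $u_{iL}$ vanishes, and using $\phi_i(u_{iE})=0$ to remove the anti-windup term from the integrator equation. One small wording point: in closed loop $v_i$ does enter through $u_{iL}$, which is exactly why the choice $v_{iE}=0$ is needed.
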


It is worth noting that the above optimization problem becomes an LMI problem (convex) when the CPL is omitted (i.e., when $P_L=\0$). 
This formulation ensures proper system operation through multiple aspects. The equality constraint ensures that the current-sharing objective is satisfied across all DG units. The reference-voltage bounds maintain system operation within safe and efficient limits via inequality constraints on $V_r$. Furthermore, the constraint on $I_s$ ensures that the current-sharing coefficient remains properly normalized in practical implementations. 
As stated earlier, the nonlinear term $\text{diag}(V_r)^{-1}P_L$ renders the optimization problem \eqref{Eq:Th:Equilibrium} non-convex, which can be addressed using standard nonlinear programming solvers \cite{najafirad2025dissipativity}.




\subsection{Nonlinear Error Dynamics with CPL}
The network system representation described in Sec. \ref{Networked System Model} can be simplified by considering the error dynamics around the identified equilibrium point in Lm. \ref{Lm:equilibrium}. As we will see in the sequel, the resulting error dynamics can be seen as a networked system (called the networked error system) comprised of DG error subsystems, line error subsystems, external disturbance inputs, and performance outputs.


We first define error variables that capture deviations from the identified equilibrium:
\begin{subequations}
\begin{align}
    \tilde{V}_i &= V_i - V_{iE} = V_i - V_{ri}, \label{Eq:errorvariable1}\\
    \tilde{I}_{ti} &= I_{ti} - I_{tiE} = {I_{ti} - P_{ni}I_s}, \label{Eq:errorvariable2}\\
     \tilde{v}_i &= v_i - v_{iE}, \label{Eq:errorvariable3}\\
    \tilde{I}_l &= I_{l} - \bar{I}_{lE} = I_l - \frac{1}{R_l}\sum_{i\in\E_l}\mathcal{B}_{il}V_{ri}.
\end{align}
\end{subequations}
Now, considering the dynamics \eqref{Eq:ss:voltages}-\eqref{Eq:ss:ints}, the equilibrium point established in Lm. \ref{Lm:equilibrium}, and the proposed a hierarchical control strategy $u_i(t)$ \eqref{controlinput}, the error dynamics can then be derived as follows.

The voltage error dynamics can be derived using \eqref{Eq:ss:voltages} and \eqref{Eq:errorvariable1} as:
\begin{equation}\label{Eq:currenterrordynamic}
\begin{aligned}
    \dot{\tilde{V}}_i = &-\frac{Y_{Li}}{C_{ti}}(\tilde{V}_i+V_{ri}) + \frac{1}{C_{ti}}(\tilde{I}_{ti} +P_{ni}I_s) - \frac{1}{C_{ti}}\bar{I}_{Li}\\
    &- \frac{1}{C_{ti}}\sum_{l\in \mathcal{E}i} \mathcal{B}_{il}(\tilde{I}_l + \frac{1}{R_l}\sum_{j\in\E_l}\mathcal{B}_{jl}V_{rj})\\
    &- \frac{1}{C_{ti}}(\tilde{V}_i + V_{ri})^{-1}P_{Li}+ \frac{1}{C_{ti}}w_{vi}\\
    \equiv& \frac{1}{C_{ti}}\Big(\varphi_V + \psi_V +g_i(\tilde{V}_i)\Big) + \frac{1}{C_{ti}}w_{vi},
\end{aligned}
\end{equation}
where 
\begin{subequations}
\begin{align}
   \varphi_V &\triangleq -Y_{Li}\tilde{V}_i + \tilde{I}_{ti} - \sum_{l\in \mathcal{E}i} \mathcal{B}_{il}\tilde{I}_l,\\
    \psi_V &\!\triangleq\! -Y_{Li}V_{ri}+P_{ni}I_s \!-\! \bar{I}_{Li} \!-\! \sum_{l\in \mathcal{E}i} \frac{\mathcal{B}_{il}}{R_l}\sum_{j\in\E_l}\mathcal{B}_{jl}V_{rj} \!-\! \frac{V_{ri}}{P_{Li}}, \label{Eqvoltageerror}\\
    g_i(\tilde{V}_i) &\triangleq V_{ri}^{-1}P_{Li} - (\tilde{V}_i + V_{ri})^{-1}P_{Li}.
\end{align}
\end{subequations}

The current error dynamics can be obtained using \eqref{Eq:ss:currents} and \eqref{Eq:errorvariable2} as:
\begin{equation}\label{Eq:voltageerrordynamic}
\begin{aligned}
    &\dot{\tilde{I}}_{ti} = -\frac{1}{L_{ti}}(\tilde{V}_i+V_{ri}) 
    - \frac{R_{ti}}{L_{ti}}(\tilde{I}_{ti} +P_{ni}I_s) \!+\! \frac{1}{L_{ti}}w_{ci} \\
    &+\frac{1}{L_{ti}}\big(u_{iS} \!+\! k_{i0}^P\tilde{V}_i 
    \!+\! k_{io}^I\tilde{v}_i \!+\! 
    \sum_{j\in\bar{\mathcal{F}}_i^-} k_{ij}^c
    \big(\frac{\tilde{I}_{ti}}{P_{ni}} \!-\! 
    \frac{\tilde{I}_{tj}}{P_{nj}}\big)
    \!+\! \phi_i(u_i)\big) ,\\
    &\equiv \frac{1}{L_{ti}}\Big(\varphi_I + \psi_I\Big) 
    + \frac{1}{L_{ti}}\phi_i(u_i)
    + \frac{1}{L_{ti}}w_{ci},
\end{aligned}
\end{equation}
where
\begin{subequations}
\begin{align}
    \varphi_I \triangleq& \!-\! \tilde{V}_i \!-\! R_{ti}\tilde{I}_{ti} \!+\! k_{io}^P\tilde{V}_i \!+\! k_{io}^I\tilde{v}_i
    \!+\! \sum_{j\in\bar{\mathcal{F}}_i^-} k_{ij}(\frac{\tilde{I}_{ti}}{P_{ni}} \!-\! \frac{\tilde{I}_{tj}}{P_{nj}})),\\
     \psi_I \triangleq& -V_{ri}-R_{ti}P_{ni}I_s + u_{iS}. \label{Eqcurrenteerror}
\end{align}
\end{subequations}

The integral error dynamics can be achieved by using \eqref{Eq:ss:ints} and \eqref{Eq:errorvariable3} as:
\begin{equation}\label{Eq:integralerrordynamic}
    \dot{\tilde{v}}_i = \tilde{V}_i -K_{aw,i}\phi_i(u_i),
\end{equation}
where $\phi_i(u_i)$ denotes the dead-zone of the scalar VSC command $u_i \in \mathbb{R}$ \eqref{Controller}.


It is worth noting that, as a consequence of the equilibrium analysis and the steady state control input selection (see \eqref{Eq:currentsharing} and \eqref{Eq:Rm:SteadyStateControl}), the terms $\psi_V$ \eqref{Eqvoltageerror} and $\psi_I$ \eqref{Eqcurrenteerror} get canceled out. Therefore, for each DG error subsystem $\tilde{\Sigma}_i^{DG}, i\in\N_N$, we have an error state vector $\tilde{x}_i = \begin{bmatrix} \tilde{V}_i & \tilde{I}_{ti} & \tilde{v}_i \end{bmatrix}^\T$ with the dynamics: 
\begin{equation}\label{Eq:DG_error_dynamic}
    \dot{\tilde{x}}_i = (A_i + B_i K_{i0})\tilde{x}_i  + g_i(\tilde{x}_i) + B_i^{aw}\phi_i(u_i) + \tilde{\eta}_i, 
\end{equation}
where $\tilde{\eta}_i$ represents the interconnection input defined as
\begin{equation}
\tilde{\eta}_i = \sum_{l\in\mathcal{E}_i}\Bar{C}_{il}\tilde{\Bar{x}}_l +  \sum_{j\in\bar{\mathcal{F}}_i^-}K_{ij}\tilde{x}_j + E_iw_i,
\end{equation}
and $g_i(\tilde{x}_i)$ is the nonlinear vector due to the CPL, defined as
\begin{equation}\label{nonlinear_vector}
    g_i(\tilde{x}_i) = \bm{\frac{1}{C_{ti}}\big(V_{ri}^{-1}P_{Li} - (\tilde{V}_i + V_{ri})^{-1}P_{Li}\big) \\ 0 \\0
    },
\end{equation}
with $A_i$ and $B_i$ as defined in \eqref{Eq:DG_Matrix_definition}.


Following similar steps, we can obtain the dynamics of the transmission line error subsystem $\tilde{\Sigma}_l^{Line}, l\in\N_L$ as:
\begin{equation}\label{Eq:Line_error_dynamic}
    \dot{\tilde{\bar{x}}}_l = \bar{A}_l\tilde{\bar{x}}_l + \tilde{\bar{u}}_l,  
\end{equation}
where $\tilde{\bar{u}}_l$ represents the line interconnection input influenced by DG voltages and disturbances:
\begin{equation}
    \tilde{\bar{u}}_l = \bar{u}_l + \bar{E}_l \bar{w}_l = \sum_{i\in\mathcal{E}_l}B_{il}\tilde{V}_i  + \bar{E}_l \bar{w}_l.
\end{equation}

To ensure robust stability (dissipativity) of this networked error system, we define local performance outputs as follows. For each DG error subsystem $\tilde{\Sigma}_i^{DG}, i\in\N_N$, we define the performance output as:
\begin{equation}
z_i(t) \triangleq H_i\tilde{x}_i(t),
\end{equation}
where $H_i$ can be selected as $H_i \triangleq \I, \forall i\in\N_N$ (not necessarily). 
Similarly, for each line error subsystem $\tilde{\Sigma}_l^{Line}, l\in\N_L$, we define the performance output as:
\begin{equation}
\bar{z}_l(t) \triangleq \bar{H}_l\tilde{\bar{x}}_l(t),
\end{equation}
where $\bar{H}_l$ can be selected as $\bar{H}_l \triangleq \I, \forall l\in\N_L$ (not necessarily). 

Upon vectorizing these performance outputs over all $i\in\N_N$ and $l\in\N_L$ (respectively), we obtain: 
\begin{equation}\label{z}
\begin{aligned}
     z \triangleq H\tilde{x}\quad \mbox{ and } \quad 
     \bar{z} \triangleq \bar{H}\tilde{\bar{x}},
\end{aligned}
\end{equation} 
where $H \triangleq \diag([H_i]_{i\in\N_N})$ and $\bar{H} \triangleq \diag([\bar{H}_l]_{l\in\N_L})$. This choice of performance output mapping provides a direct correspondence between error subsystem states and the performance outputs.

Defining $z \triangleq \bm{z_i^\T}^\T_{i\in\N_N}$ and $\bar{z} \triangleq \bm{\bar{z}_l^\T}^\T_{l\in\N_L}$, we consolidate the performance outputs and disturbance inputs as
\begin{equation}
    \begin{aligned}
        z_c \triangleq \bm{z^\T & \bar{z}^\T}^\T\quad \mbox{ and } \quad 
        w_c \triangleq \bm{w^\T & \bar{w}^\T}^\T.
    \end{aligned}
\end{equation}
The consolidated disturbance vector $w_c$ affects the networked error dynamics, particularly the DG error subsystems and the line error subsystems, respectively, through the consolidated disturbance matrices $E_c$ and $\bar{E}_c$, defined as:
\begin{equation}
    \begin{aligned}
        E_c \triangleq \bm{E & \0}\quad \mbox{ and }\quad 
        \bar{E}_c \triangleq \bm{\0 & \bar{E}}.
    \end{aligned}
\end{equation}

The zero blocks in the $E_c$ and $\bar{E}_c$ indicate that line disturbances do not directly affect DG error subsystem inputs and vice versa. Analogously, the dependence of consolidated performance outputs on the networked error system states can be described using consolidated performance matrices 
\begin{equation}
    \begin{aligned}
        H_c \triangleq \bm{H \\ \0}\quad \mbox{ and }\quad 
        \bar{H}_c \triangleq \bm{\0 \\ \bar{H}}.
    \end{aligned}
\end{equation}

With these definitions and the derived error subsystem dynamics \eqref{Eq:DG_error_dynamic} and \eqref{Eq:Line_error_dynamic}, it is easy to see that the closed-loop error dynamics of the DC MG can be modeled as a networked error system as shown in Fig. \ref{Fig.DissNetError}. In there, the interconnection relationship between the error subsystems, disturbance inputs and performance outputs is described by:
\begin{equation}
    \begin{bmatrix}
        \tilde{u} & \tilde{\bar{u}} & z_c
    \end{bmatrix}^\T = M \begin{bmatrix}
        \tilde{x} & \tilde{\bar{x}} & w_c
    \end{bmatrix}^\T,
\end{equation}
where the interconnection matrix $M$ takes the form:
\begin{equation}\label{Eq:NetErrSysMMat}
    M \triangleq 
        \bm{K & \bar{C} & E_c\\
        C & \0 & \bar{E}_c\\
        H_{c} & \bar{H}_{c} & \0
        }.
\end{equation}

\begin{figure}
    \centering
    \includegraphics[width=0.99\columnwidth]{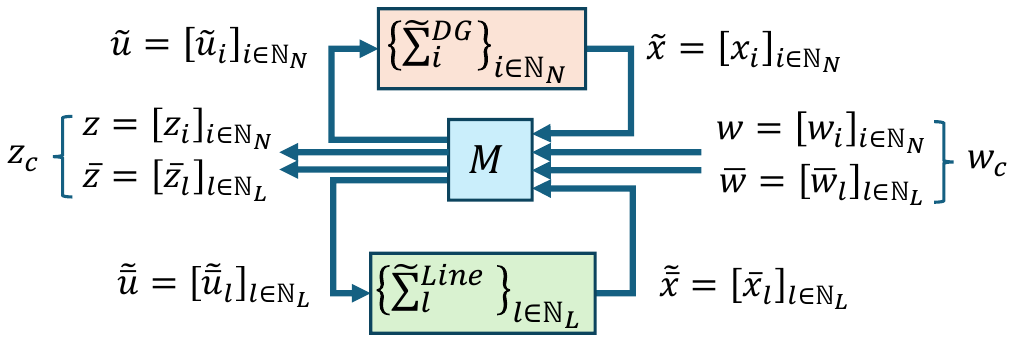}
    \caption{DC MG  error dynamics as a networked system with disturbance inputs and performance outputs.}
    \label{Fig.DissNetError}
\end{figure}

\section{Dissipativity-Based Co-Design Framework for Controllers and Communication Topology}\label{Passivity-based Control}

In this section, we first introduce the global control and topology co-design problem for the DC MG, using the networked error dynamics representation derived in the previous section (see Fig. \ref{Fig.DissNetError}). As this co-design problem exploits the dissipativity properties of the subsystem error dynamics, we next identify necessary conditions for subsystem dissipativity. Subsequently, we embed these necessary conditions in the local dissipative controller design problems implemented at the subsystems. Finally, the overall control design process is summarized.


\subsection{Error Subsystem Dissipativity Properties}\label{Sec:Diss_Property}
Consider the DG error subsystem $\tilde{\Sigma}_i^{DG},i\in\mathbb{N}_N$ \eqref{Eq:DG_error_dynamic} to be $X_i$-dissipative with
\begin{equation}\label{Eq:XEID_DG}
    X_i=\begin{bmatrix}
        X_i^{11} & X_i^{12} \\ X_i^{21} & X_i^{22}
    \end{bmatrix}\triangleq
    \begin{bmatrix}
        -\nu_i\mathbf{I} & \frac{1}{2}\mathbf{I} \\ \frac{1}{2}\mathbf{I} & -\rho_i\mathbf{I}
    \end{bmatrix},
\end{equation}
where $\rho_i$ and $\nu_i$ are the passivity indices of $\tilde{\Sigma}_i^{DG}$. In other words, consider $\tilde{\Sigma}_i^{DG}, i\in\mathbb{N}_N$ to be IF-OFP($\nu_i,\rho_i$). It is worth noting that the IF-OFP($\nu_i,\rho_i$) property assumed here will be enforced through local controller design in Sec. \ref{Sec:Local_Synth} (Th. \ref{Th:LocalControllerDesign}), where the passivity indices $\nu_i$ and $\rho_i$ are determined alongside the local controller gains $K_{i0}$.

Similarly, consider the line error subsystem $\tilde{\Sigma}_l^{Line},l\in\mathbb{N}_L$ \eqref{Eq:Line_error_dynamic} to be $\bar{X}_l$-dissipative with
\begin{equation}\label{Eq:XEID_Line}
    \bar{X}_l=\begin{bmatrix}
        \bar{X}_l^{11} & \bar{X}_l^{12} \\ \bar{X}_l^{21} & \bar{X}_l^{22}
    \end{bmatrix}\triangleq
    \begin{bmatrix}
        -\bar{\nu}_l\mathbf{I} & \frac{1}{2}\mathbf{I} \\ \frac{1}{2}\mathbf{I} & -\bar{\rho}_l\mathbf{I}
    \end{bmatrix},
\end{equation}
where $\bar{\rho}_l$ and $\bar{\nu}_l$ are the passivity indices of $\tilde{\Sigma}_l^{Line}$. Regarding these line-error subsystem passivity indices, we can state the following lemma.

\begin{lemma}\cite{najafirad2025dissipativity}\label{Lm:LineDissipativityStep}
For each line $\tilde{\Sigma}_l^{Line},\,  l\in\N_L$ \eqref{Eq:LineCompact}, its passivity indices $\bar{\nu}_l$,\,$\bar{\rho}_l$ assumed in (\ref{Eq:XEID_Line}) are such that the LMI problem: 
\begin{equation}\label{Eq:Lm:LineDissipativityStep1}
\begin{aligned}
\mbox{Find: }\ &\bar{P}_l, \bar{\nu}_l, \bar{\rho}_l\\
\mbox{Sub. to:}\ &\bar{P}_l > 0, \ 
\begin{bmatrix}
    \frac{2\bar{P}_lR_l}{L_l}-\bar{\rho}_l & -\frac{\bar{P}_l}{L_l}+\frac{1}{2}\\
    \star & -\bar{\nu}_l
\end{bmatrix}
\normalsize
\geq0, 
\end{aligned}
\end{equation}
is feasible. The maximum feasible values for $\bar{\nu}_l$ and $\bar{\rho}_l$ respectively are $\bar{\nu}_l^{\max}=0$ and $\bar{\rho}_l^{\max}=R_l$, when $\bar{P}_l =  \frac{L_l}{2}$. 
\end{lemma}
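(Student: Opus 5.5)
The plan is to apply Prop.~\ref{Prop:linear_X-EID} directly to the line error subsystem \eqref{Eq:Line_error_dynamic}. This subsystem is scalar and LTI, with $A=-R_l/L_l$, $B=1/L_l$ (taking the input as the voltage differential $\bar{u}_l$ entering through $\bar{B}_l$), $C=1$ (output is the state $\tilde{\bar{x}}_l$) and $D=0$. The supply rate is $X=\bar{X}_l$ from \eqref{Eq:XEID_Line}, so $X^{11}=-\bar{\nu}_l$, $X^{12}=X^{21}=\tfrac12$ and $X^{22}=-\bar{\rho}_l$.

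Substituting into the LMI of Prop.~\ref{Prop:linear_X-EID} with $P=\bar{P}_l>0$ gives the following entries:
\begin{itemize}
\item the $(1,1)$ entry is $-\mathcal{H}(\bar P_l A)+C^\T X^{22}C = 2\bar{P}_lR_l/L_l-\bar{\rho}_l$;
\item the $(1,2)$ entry is $-\bar P_l B + C^\T X^{21} = -\bar{P}_l/L_l+\tfrac12$;
\item the $(2,2)$ entry is $X^{11} = -\bar{\nu}_l$, since $D=0$.
\end{itemize}
This is exactly the matrix in \eqref{Eq:Lm:LineDissipativityStep1}. Because Prop.~\ref{Prop:linear_X-EID} is an equivalence, $\bar{X}_l$-dissipativity of the line holds if and only if this LMI is feasible. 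This establishes the first claim.

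For the maximal indices, I will use the necessary conditions for positive semidefiniteness of a $2\times 2$ symmetric matrix: both diagonal entries are nonnegative and the determinant is nonnegative. The $(2,2)$ entry requires $-\bar{\nu}_l\ge 0$, so $\bar{\nu}_l\le 0$, and hence $\bar{\nu}_l^{\max}\le 0$.

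If $\bar{\nu}_l=0$, the determinant condition reads $-(\tfrac12-\bar{P}_l/L_l)^2\ge 0$, which forces $\bar{P}_l=L_l/2$. The $(1,1)$ entry then gives $\bar{\rho}_l\le 2(L_l/2)R_l/L_l=R_l$, and the choice $\bar{\rho}_l=R_l$ is attained.

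It remains to argue that $\bar{\rho}_l$ cannot exceed $R_l$ even if $\bar{\nu}_l<0$ is allowed. If instead one maximizes $\bar{\rho}_l$ subject only to feasibility, then $\bar{\rho}_l\le 2\bar{P}_lR_l/L_l$ can grow with $\bar{P}_l$ when $\bar{\nu}_l$ is sufficiently negative. So the claim about the maximal pair must be read as jointly maximal, i.e., maximizing $\bar{\rho}_l$ at the largest feasible $\bar{\nu}_l=0$. I expect this interpretive point to be the main obstacle. I will make it explicit by first fixing $\bar{\nu}_l$ at its supremum $0$, which pins down $\bar{P}_l=L_l/2$, and then maximizing $\bar{\rho}_l$. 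That yields $(\bar{\nu}_l^{\max},\bar{\rho}_l^{\max})=(0,R_l)$ at $\bar{P}_l=L_l/2$, consistent with the line being passive with output-feedback index equal to its resistance.
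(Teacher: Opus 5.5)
Your argument is correct. The paper gives no proof of its own here (the lemma is imported from \cite{najafirad2025dissipativity}), and your route is the standard one that produces the stated LMI and the point $(\bar{\nu}_l,\bar{\rho}_l)=(0,R_l)$ at $\bar{P}_l=L_l/2$: apply Prop.~\ref{Prop:linear_X-EID} to the scalar line with $B=\bar{B}_l=1/L_l$, as in \eqref{Eq:LineCompact} rather than the unit input gain written in \eqref{Eq:Line_error_dynamic}, then use the $2\times 2$ semidefiniteness conditions. Your caveat about the maximum is also genuine: optimizing over $\bar{P}_l$ for fixed $\bar{\nu}_l\le 0$ gives exactly $\bar{\rho}_l\le R_l-R_l^2\bar{\nu}_l$, so $(0,R_l)$ is maximal only in the lexicographic sense you describe.
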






While we could identify the conditions required for the passivity indices of the line error dynamics \eqref{Eq:Line_error_dynamic}, achieving a similar feat for the DG error dynamics is not straightforward due to the nonlinear CPL and VSC input saturation terms involved (see \eqref{Eq:DG_error_dynamic}). This challenge is addressed in the following subsection.

\subsection{Sector-Bounded Characterization of Nonlinearities}
Fig. \ref{Fig:SectorBounds} illustrates example instances 
of the CPL nonlinearity $g_i(\tilde{V}_i)$ and the 
dead-zone nonlinearity $\phi_i(u_i)$, graphically 
highlighting their sector-bounded nature.

\begin{figure}[t]
    \centering
    \includegraphics[width=\columnwidth]{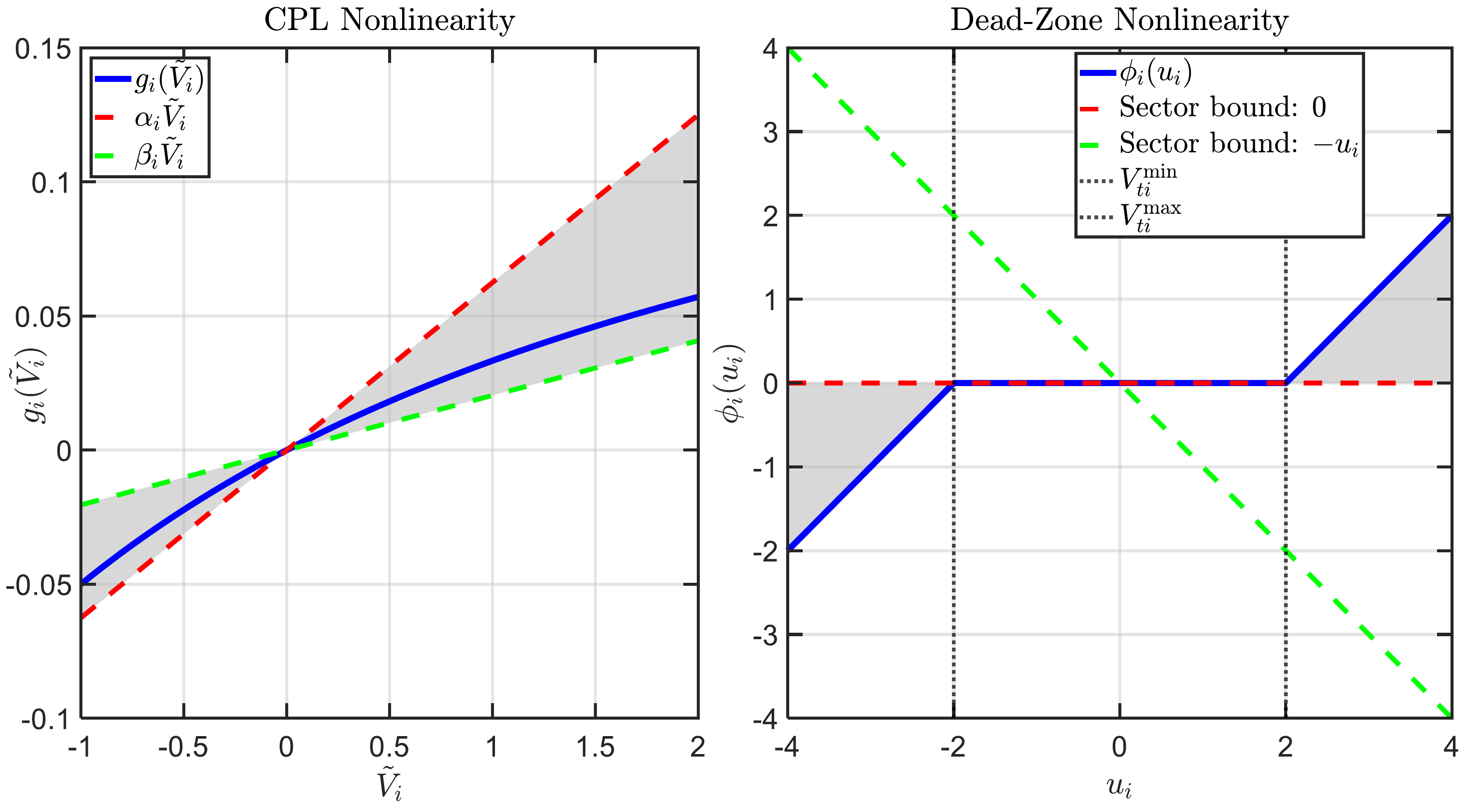}
    \caption{Sector-bounded characterizations of (left) 
    the CPL nonlinearity $g_i(\tilde{V}_i)$ bounded 
    between slopes $\alpha_i$ and $\beta_i$, and (right) 
    the dead-zone nonlinearity $\phi_i(u_i)$ bounded 
    within the sector $[0, -1]$.}
    \label{Fig:SectorBounds}
\end{figure}

The destabilizing negative-impedance characteristics of CPLs pose significant challenges to controller design, necessitating specialized mathematical treatment of their nonlinearities. In the following discussion, we present a systematic approach to incorporate CPL nonlinearities into our dissipativity-based control framework using sector-boundedness concepts. 

First, for notational convenience, we denote the first (and only non-zero) component of the CPL nonlinearity $g_i(\tilde{x}_i) \in \R^3$ \eqref{nonlinear_vector} as $g_i(\tilde{V}_i) \in \R$ (with a little abuse of notation), where 
\begin{equation} \label{nonlinear_scalar}
g_i(\tilde{V}_i) \triangleq \frac{P_{Li}}{C_{ti}}\left(\frac{1}{V_{ri}} - \frac{1}{\tilde{V}_i + V_{ri}}\right).
\end{equation}

The following lemma establishes the sector boundedness of this $g_i(\tilde{V}_i)$.

\begin{lemma}(Lm. 6, \cite{najafirad2025dissipativity})\label{Lm:sector_bounded}
The CPL nonlinearity $g_i(\tilde{V}_i)$ 
\eqref{nonlinear_scalar} satisfies the 
quadratic constraint:
\begin{equation}\label{Eq:Lm:sector_bounded}
\begin{bmatrix} \tilde{V}_i \\ g_i(\tilde{V}_i) 
\end{bmatrix}^\top 
\begin{bmatrix} 
-\alpha_i\beta_i & \frac{\alpha_i+\beta_i}{2} \\ 
\frac{\alpha_i+\beta_i}{2} & -1 
\end{bmatrix} 
\begin{bmatrix} \tilde{V}_i \\ g_i(\tilde{V}_i) 
\end{bmatrix} \geq 0,
\end{equation}
for all $\tilde{V}_i \in [\tilde{V}_i^{\min}, 
\tilde{V}_i^{\max}]\backslash\{0\}$, where 
$\alpha_i \triangleq \frac{P_{Li}}{V_{\max}^2}$, 
$\beta_i \triangleq \frac{P_{Li}}{V_{\min}^2}$, 
$\tilde{V}_i^{\min} \triangleq V_{\min} - V_{ri}$, 
and $\tilde{V}_i^{\max} \triangleq V_{\max} - V_{ri}$.
\end{lemma}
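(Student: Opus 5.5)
The plan is to reduce the quadratic constraint \eqref{Eq:Lm:sector_bounded} to a bound on the slope $g_i(\tilde{V}_i)/\tilde{V}_i$ over the admissible voltage range. The key identity is that, for any scalars $\alpha_i,\beta_i$, the quadratic form in \eqref{Eq:Lm:sector_bounded} factors as
\begin{equation*}
-\alpha_i\beta_i\tilde{V}_i^2+(\alpha_i+\beta_i)\tilde{V}_i g_i - g_i^2 = -\big(g_i-\alpha_i\tilde{V}_i\big)\big(g_i-\beta_i\tilde{V}_i\big),
\end{equation*}
which is checked by expanding the right-hand side. For $\tilde{V}_i\neq 0$, set $s_i\triangleq g_i(\tilde{V}_i)/\tilde{V}_i$. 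The form then equals $\tilde{V}_i^2(s_i-\alpha_i)(\beta_i-s_i)$. Hence \eqref{Eq:Lm:sector_bounded} holds as soon as $\alpha_i\le s_i\le\beta_i$, i.e., as soon as $g_i$ lies in the sector $[\alpha_i,\beta_i]$. This is why the origin is excluded in the statement: there $s_i$ is undefined, although the form is trivially zero.

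Next, compute the slope explicitly from \eqref{nonlinear_scalar}. Putting the bracket over a common denominator gives
\begin{equation*}
g_i(\tilde{V}_i)=\frac{P_{Li}}{C_{ti}}\cdot\frac{\tilde{V}_i}{V_{ri}(\tilde{V}_i+V_{ri})},
\end{equation*}
so that $s_i = \frac{P_{Li}}{C_{ti}V_{ri}V_i}$ with $V_i=\tilde{V}_i+V_{ri}$. By definition of $\tilde{V}_i^{\min},\tilde{V}_i^{\max}$, the condition $\tilde{V}_i\in[\tilde{V}_i^{\min},\tilde{V}_i^{\max}]$ is exactly $V_i\in[V_{\min},V_{\max}]$. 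Theorem \ref{Th:Equilibrium} enforces $V_{\min}\le V_{ri}\le V_{\max}$, and $V_{\min}>0$ holds physically. Both factors of the denominator therefore lie in $[V_{\min},V_{\max}]$, so $V_{\min}^2\le V_{ri}V_i\le V_{\max}^2$. With $P_{Li}\ge 0$, monotonicity of $1/x$ on $x>0$ gives
\begin{equation*}
\frac{P_{Li}}{V_{\max}^2}\le \frac{P_{Li}}{V_{ri}V_i}\le\frac{P_{Li}}{V_{\min}^2}.
\end{equation*}
Combining this with the factorization yields the quadratic constraint.

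The step I expect to be the main obstacle is the $C_{ti}^{-1}$ factor. The slope $s_i$ carries a factor $1/C_{ti}$, whereas $\alpha_i,\beta_i$ as stated do not. I would handle this by reading the sector bounds in the normalization used in \eqref{nonlinear_scalar}, i.e., with $\alpha_i,\beta_i$ scaled by $C_{ti}^{-1}$, equivalently by applying the argument to $C_{ti}g_i$. Since $C_{ti}>0$, the scaling preserves the ordering $\alpha_i\le s_i\le\beta_i$, so the argument is unchanged. I would state explicitly which of the two conventions is being used. The remaining points are routine: the sign assumption $P_{Li}\ge0$ (a CPL consuming power), and the fact that the bound is tight at the endpoints $V_{ri}=V_i=V_{\max}$ and $V_{ri}=V_i=V_{\min}$.
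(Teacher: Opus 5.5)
Your argument is correct and is the standard route: the factorization of the form as $-(g_i-\alpha_i\tilde V_i)(g_i-\beta_i\tilde V_i)$, the explicit secant slope $g_i/\tilde V_i=P_{Li}/(C_{ti}V_{ri}V_i)$, and the bound $V_{\min}^2\le V_{ri}V_i\le V_{\max}^2$, which follows from $V_{\min}>0$ and $V_{ri}\in[V_{\min},V_{\max}]$ (Theorem~\ref{Th:Equilibrium}). The paper gives no proof of its own; the lemma is imported from the authors' earlier work, so there is no competing argument to compare against. You were right to make the implicit hypotheses $V_{\min}>0$ and $V_{ri}\in[V_{\min},V_{\max}]$ explicit.

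What you should change is how you treat the $C_{ti}^{-1}$ factor. It is not a choice of convention; it is an error in the statement as written. Take $g_i$ from \eqref{nonlinear_scalar} and $\alpha_i=P_{Li}/V_{\max}^2$, $\beta_i=P_{Li}/V_{\min}^2$ literally. Your own computation then shows that, for $P_{Li}>0$, the constraint holds on the whole interval if and only if $V_{\min}\le C_{ti}V_{ri}\le V_{\max}$. For a realistic millifarad filter capacitor this fails badly: as $V_i\to V_{\min}$ the slope tends to $\beta_iV_{\min}/(C_{ti}V_{ri})\gg\beta_i$, and the form becomes negative. So you are proving a corrected lemma, with $\alpha_i,\beta_i$ replaced by $\alpha_i/C_{ti},\beta_i/C_{ti}$ (equivalently, the literal bounds applied to $C_{ti}g_i$), and you should say so outright.

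This correction is not cosmetic. Theorem~\ref{Th:Local_CPL} applies the constraint to exactly the $g_i$ that enters $\dot{\tilde V}_i$ with unit coefficient (the $P_ie_1$ block of $\Pi_i$). The correction therefore rescales $\lambda_i=1/(\alpha_i\beta_i)$ by $C_{ti}^2$ and the coefficient $(\alpha_i+\beta_i)/(2\alpha_i\beta_i)$ in $\hat\Phi_i$ by $C_{ti}$.

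Two smaller points:
\begin{itemize}
\item The assumption $P_{Li}\ge 0$ is unnecessary. The form is symmetric in $(\alpha_i,\beta_i)$, and for $P_{Li}<0$ you simply get $\beta_i\le s_i\le\alpha_i$.
\item Your closing tightness remark is mistaken. The extreme slopes require $V_i=V_{ri}$, which is the excluded point $\tilde V_i=0$. For fixed $V_{ri}$ the slopes actually lie in the smaller sector $[P_{Li}/(C_{ti}V_{ri}V_{\max}),\,P_{Li}/(C_{ti}V_{ri}V_{\min})]$, which you could use instead.
\end{itemize}
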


Next, we establish the sector-bounded characterization of the VSC input saturation nonlinearity. Recall that the dead-zone nonlinearity $\phi_i(u_i) = \text{sat}(u_i) - u_i$ captures the deviation of the saturated input from the unsaturated one. As illustrated in Fig. \ref{Fig:SectorBounds}, $\phi_i(u_i)$ and $u_i$ always have opposite signs or are both zero, meaning $\phi_i(u_i)$ is sector bounded within the sector $[0, -1]$ with respect to $u_i$. The following lemma formalizes this observation.

\begin{lemma}\label{Lm:Saturation_Sector_condition}
For the dead-zone nonlinearity $\phi_i(u_i)$ defined in 
\eqref{Eq:Saturation}, the following quadratic constraint 
holds for all $u_i \in \mathbb{R}$:
\begin{equation}
\begin{bmatrix} u_i \\ \phi_i(u_i) \end{bmatrix}^\top
\begin{bmatrix} 0 & -\frac{1}{2} \\ -\frac{1}{2} & -1 
\end{bmatrix}
\begin{bmatrix} u_i \\ \phi_i(u_i) \end{bmatrix} \geq 0.
\end{equation}
\end{lemma}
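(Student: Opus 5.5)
The plan is to expand the quadratic form and reduce the claim to a sign condition that can be checked piecewise using the definition \eqref{Eq:Saturation}. Expanding with the given matrix gives
\begin{equation*}
\begin{bmatrix} u_i \\ \phi_i(u_i) \end{bmatrix}^\top
\begin{bmatrix} 0 & -\frac{1}{2} \\ -\frac{1}{2} & -1 \end{bmatrix}
\begin{bmatrix} u_i \\ \phi_i(u_i) \end{bmatrix}
= -u_i\phi_i(u_i) - \phi_i(u_i)^2 = -\phi_i(u_i)\big(u_i + \phi_i(u_i)\big).
\end{equation*}
By the dead-zone decomposition $\text{sat}(u_i) = u_i + \phi_i(u_i)$ used in deriving \eqref{statespacemodel}, this equals $-\phi_i(u_i)\,\text{sat}(u_i)$. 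It therefore suffices to show that
\begin{equation*}
\phi_i(u_i)\,\text{sat}(u_i) \leq 0 \quad \text{for all } u_i \in \mathbb{R}.
\end{equation*}

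I will verify this in the three regions of \eqref{Eq:Saturation}.

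\emph{Linear region} ($V_{ti}^{\min} \leq u_i \leq V_{ti}^{\max}$). Here $\phi_i(u_i) = 0$, so the product vanishes and the inequality holds with equality.

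\emph{Upper saturation} ($u_i > V_{ti}^{\max}$). Here $\phi_i(u_i) = V_{ti}^{\max} - u_i < 0$ and $\text{sat}(u_i) = V_{ti}^{\max}$. The product is $(V_{ti}^{\max} - u_i)V_{ti}^{\max}$, which is nonpositive exactly when $V_{ti}^{\max} \geq 0$.

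\emph{Lower saturation} ($u_i < V_{ti}^{\min}$). Here $\phi_i(u_i) = V_{ti}^{\min} - u_i > 0$ and $\text{sat}(u_i) = V_{ti}^{\min}$. The product is $(V_{ti}^{\min} - u_i)V_{ti}^{\min}$, which is nonpositive exactly when $V_{ti}^{\min} \leq 0$.

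The main obstacle is this last step. The inequality holds for all $u_i$ only if the saturation band contains the origin, $V_{ti}^{\min} \leq 0 \leq V_{ti}^{\max}$. This is the standard implicit assumption behind the sector $[0,-1]$ description in Fig. \ref{Fig:SectorBounds}: the dead-zone and $u_i$ have opposite signs only when the zero command lies in the linear region.

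I would state this condition explicitly in the proof. If the physical limits do not satisfy it, I would fall back to one of two fixes, and check which one is consistent with how the lemma is used later.
\begin{enumerate}
\item Apply the same argument to the shifted variable $u_i - u_{iE}$, which is the deviation actually entering the error dynamics \eqref{Eq:DG_error_dynamic}. Rm. \ref{Lm:currentsharing} guarantees $V_{ti}^{\min} \leq u_{iE} \leq V_{ti}^{\max}$, so the shifted limits $V_{ti}^{\min} - u_{iE} \leq 0 \leq V_{ti}^{\max} - u_{iE}$ make the case analysis above go through verbatim.
\item Alternatively, restrict the claim to the regional version in which $u_i$ ranges over a set where the sign condition holds.
\end{enumerate}
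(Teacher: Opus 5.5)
Your proof takes the same route as the paper's: expand the form to $-\phi_i(u_i)\,\text{sat}(u_i)$ using $\text{sat}(u_i)=u_i+\phi_i(u_i)$, then argue from the signs of $\phi_i$ and $\text{sat}$. Your case analysis is correct and sharper than the paper's: its claim that $\text{sat}(u_i)$ and $\phi_i(u_i)$ ``always have opposite signs or are both zero'' silently needs $V_{ti}^{\min}\le 0\le V_{ti}^{\max}$ (met in the simulations, where $V_{ti}^{\min}=0$), and your fix (1) is exactly the shifted form $-\phi_i\bigl(\phi_i+(u_i-u_{iE})\bigr)\ge 0$ that the proof of Th.~\ref{Th:Local_CPL} actually uses, which is valid because $V_{ti}^{\min}\le u_{iE}\le V_{ti}^{\max}$.
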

\begin{proof}
The constraint is equivalent to $\phi_i(u_i)[\phi_i(u_i) + 
u_i] \leq 0$. Since $\phi_i(u_i) + u_i = \text{sat}(u_i)$, 
this reduces to $\text{sat}(u_i) \cdot \phi_i(u_i) \leq 0$, 
which holds since $\text{sat}(u_i)$ and $\phi_i(u_i)$ always 
have opposite signs or are both zero (see \eqref{Eq:Saturation}).
\end{proof}

The IF-OFP property of each DG error subsystem $\tilde{\Sigma}_i^{DG}$ \eqref{Eq:DG_error_dynamic} can now be certified by exploiting Lm. \ref{Lm:sector_bounded} and Lm. \ref{Lm:Saturation_Sector_condition}.

\begin{assumption}\label{As:deltaBar}
At the local design stage, the distributed controller 
contribution $u_{iG}$ is treated as a bounded unknown 
satisfying $|u_{iG}| \leq \bar{\delta}_i$ for some 
$\bar{\delta}_i > 0$. The maximum tolerable bound 
$\bar{\delta}_i$ is recovered as part of the design 
via the change of variables $\tilde{\delta}_i \triangleq 
\sigma_i\bar{\delta}_i^2$, and once $\sigma_i$ and 
$\tilde{\delta}_i$ are obtained from the LMI, the 
maximum tolerable bound is recovered as 
$\bar{\delta}_i = \sqrt{\tilde{\delta}_i/\sigma_i}$.
\end{assumption}

\begin{theorem}\label{Th:Local_CPL}
Under As. \ref{As:deltaBar}, the DG error subsystem 
$\tilde{\Sigma}_i^{DG}: \tilde{\eta}_i \rightarrow 
\tilde{x}_i$, $i\in\mathbb{N}_N$ 
\eqref{Eq:DG_error_dynamic} can be made 
IF-OFP$(\nu_i, \rho_i)$ by designing the local 
controller gain $K_{i0}$ \eqref{Controller} via 
the LMI problem:
\begin{equation}\label{Eq:Th:Local_CPL}
\begin{aligned}
&\text{Find: }\ \tilde{K}_{i0},\ \tilde{P}_i,\ 
  \sigma_i,\ \tilde{\delta}_i,\ \nu_i,\ \tilde{\rho}_i,\\
&\text{Sub. to: }\ \tilde{P}_i > 0,\  
  \sigma_i \geq 0,\ \tilde{\rho}_i > 0,\ 
  \tilde{\delta}_i \geq 0,\\
&\begin{bmatrix}
{-}\nu_i\I & \tfrac{1}{2}\tilde{P}_i{-}\I & \0 
  & \0 & 0 & 0 & \0 \\
\star & \hat{\Delta}_i
  & \hat{\Phi}_i 
  & {-}B_i^{aw}{+}\tfrac{\mu_i}{2}
  \tilde{K}_{i0}^\top
  & 0 & 0 & \tilde{P}_i \\
\star & \star & \lambda_i
  & 0 & 0 & 0 & \0 \\
\star & \star & \star & \mu_i
  & {-}\tfrac{\mu_i}{2} & 0 & \0 \\
\star & \star & \star & \star & \sigma_i & 0 & \0 \\
\star & \star & \star & \star & \star 
  & \tilde{\delta}_i & \0 \\
\star & \star & \star & \star & \star 
  & \star & \tilde{\rho}_i\I
\end{bmatrix} \geq \0,
\end{aligned}
\end{equation}
where $\hat{\Delta}_i \triangleq 
{-}\mathcal{H}(\hat{A}_i) + \tilde{P}_ie_1e_1^\top
+ e_1e_1^\top\tilde{P}_i - \I$,
$\hat{A}_i \triangleq A_i\tilde{P}_i + B_i\tilde{K}_{i0}$,
$\hat{\Phi}_i \triangleq {-}e_1
{-}\tfrac{\alpha_i+\beta_i}{2\alpha_i\beta_i}\tilde{P}_ie_1$,
$e_1 \triangleq [1\ 0\ 0]^\top$,
$\lambda_i \triangleq \tfrac{1}{\alpha_i\beta_i}$ 
is uniquely determined by the CPL sector bounds 
$\alpha_i$ and $\beta_i$ from 
Lm. \ref{Lm:sector_bounded}, $\mu_i > 0$ is a 
prespecified scalar that linearizes the 
$\mu_i \times K_{i0}$ products, 
$K_{i0} \triangleq \tilde{K}_{i0}\tilde{P}_i^{-1}$, 
and $\rho_i \triangleq \tilde{\rho}_i^{-1}$. The 
maximum tolerable bound on $|u_{iG}|$ is recovered 
as $\bar{\delta}_i = \sqrt{\tilde{\delta}_i/\sigma_i}$.
\end{theorem}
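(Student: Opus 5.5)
We work with the quadratic storage function $V_i(\tilde{x}_i)=\tilde{x}_i^\top P_i\tilde{x}_i$, $P_i>0$, for the DG error subsystem \eqref{Eq:DG_error_dynamic}. The goal is to show $\dot V_i\le s(\tilde\eta_i,\tilde x_i)=-\nu_i\|\tilde\eta_i\|^2+\tilde\eta_i^\top\tilde x_i-\rho_i\|\tilde x_i\|^2$ whenever the two sector constraints and the bound on $u_{iG}$ hold. The plan has three steps: write everything as a quadratic form in one stacked vector, invoke the S-procedure, and then convexify by congruence.

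\textbf{Step 1 (quadratic form).} Along \eqref{Eq:DG_error_dynamic} we have
\[
\dot V_i=\mathcal{H}\big(\tilde x_i^\top P_i[(A_i+B_iK_{i0})\tilde x_i+e_1g_i+B_i^{aw}\phi_i+\tilde\eta_i]\big).
\]
We collect $\zeta_i\triangleq[\tilde\eta_i^\top,\tilde x_i^\top,g_i,\phi_i,u_i,u_{iG},\ldots]^\top$, where $u_i$ enters the dead-zone constraint as $u_i=K_{i0}\tilde x_i+u_{iG}+(\text{equilibrium offset})$. Since $\phi_i$ vanishes near the equilibrium, we work with the deviation $K_{i0}\tilde x_i+u_{iG}$. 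The dissipation inequality then becomes $\zeta_i^\top\Pi_i\zeta_i\ge0$.

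\textbf{Step 2 (S-procedure).} Three side conditions are appended via Lemma~\ref{Lm:Sprocedure}, in its sufficient direction with nonnegative multipliers:
\begin{enumerate}
\item the CPL sector constraint of Lm.~\ref{Lm:sector_bounded} in the variables $(\tilde V_i,g_i)=(e_1^\top\tilde x_i,g_i)$;
\item the dead-zone sector constraint of Lm.~\ref{Lm:Saturation_Sector_condition} in the variables $(u_i,\phi_i)$;
\item the bound $\bar\delta_i^2-u_{iG}^2\ge0$ from As.~\ref{As:deltaBar}, with multiplier $\sigma_i$.
\end{enumerate}
The constant term in item 3 produces the product $\sigma_i\bar\delta_i^2$, which becomes the variable $\tilde\delta_i$. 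Because $\sigma_i\ge0$, the recovered bound is $\bar\delta_i=\sqrt{\tilde\delta_i/\sigma_i}$. Sufficiency is all we need: if $\Pi_i-\sum\lambda\Gamma\ge0$, then $\dot V_i\le s$ on the constrained set. The $g_i$-diagonal entry $\lambda_i=1/(\alpha_i\beta_i)$ indicates that the CPL multiplier has been normalized. Concretely, after scaling the sector matrix of Lm.~\ref{Lm:sector_bounded} by $1/(\alpha_i\beta_i)$, the multiplier is fixed and the cross term $\frac{\alpha_i+\beta_i}{2\alpha_i\beta_i}$ appears, matching $\hat\Phi_i$. The dead-zone multiplier is the prespecified $\mu_i$, which produces the $\mu_i$ and $-\mu_i/2$ entries and the $\frac{\mu_i}{2}K_{i0}^\top$ coupling.

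\textbf{Step 3 (congruence and linearization).} The resulting matrix is bilinear in $P_i$ and $K_{i0}$. We apply a congruence transformation with $\diag(\I,P_i^{-1},1,1,1,1)$ and set $\tilde P_i=P_i^{-1}$, $\tilde K_{i0}=K_{i0}\tilde P_i$. This turns $P_i(A_i+B_iK_{i0})$ into $\hat A_i$ and $\mu_iK_{i0}^\top$ into $\mu_i\tilde P_i^{-1}\tilde K_{i0}^\top$ scaled appropriately. The supply-rate term $-\rho_i\tilde P_i\tilde P_i$ and the remaining $\tilde P_i$-quadratic terms are handled in two ways:
\begin{itemize}
\item A Schur complement (Lm.~\ref{Lm:Schur_comp}) with the last block $\tilde\rho_i\I$, $\rho_i=\tilde\rho_i^{-1}$, handles $-\rho_i\tilde P_i\tilde P_i$; this explains the last row/column $\tilde P_i$.
\item The cross term $\tfrac12\tilde P_i$ in the $(\tilde\eta_i,\tilde x_i)$ block comes from the supply-rate coupling after congruence. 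The extra $-\I$ and $\tilde P_ie_1e_1^\top$ terms come from bounding the remaining products $\tilde P_i^{-1}$ through Lm.~\ref{Lm:InvSchur} (i.e., $\tilde P_i^{-1}\ge 2\I-\tilde P_i$, or its $Q^\top P^{-1}Q$ form). This makes the condition sufficient rather than exact.
\end{itemize}

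\textbf{Main obstacle.} The hardest step will be the bookkeeping in Step 3. The CPL constraint couples $\tilde V_i=e_1^\top\tilde x_i$, and the dead-zone constraint couples $K_{i0}\tilde x_i$. After congruence, both leave $\tilde P_i^{-1}$ or $\tilde P_i$-products that do not appear linearly. I would first try Young's inequality/Lm.~\ref{Lm:InvSchur} on each offending term separately and then verify entry by entry that the result matches $\hat\Delta_i$, $\hat\Phi_i$ and the $(2,4)$ block. Any mismatch would be absorbed by tightening with the fixed $\mu_i$.
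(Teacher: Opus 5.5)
\textbf{The gap is in Step~3.} Your overall route matches the paper's: quadratic storage, a stacked vector carrying $u_{iG}$ and a constant $1$, the S-procedure with $\lambda_i=1/(\alpha_i\beta_i)$ fixed, $\mu_i$ prespecified and $\sigma_i$ free, a congruence by $\diag(\I,\tilde{P}_i,1,1,1,1)$, and a Schur complement for $-\rho_i\tilde{P}_i^2$. But Step~3 misidentifies what remains non-affine after the congruence, and that is exactly the step that produces $\hat{\Delta}_i$.

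Since $\tilde{P}_i$ is symmetric and $K_{i0}=\tilde{K}_{i0}\tilde{P}_i^{-1}$, the congruence maps each block exactly:
\begin{itemize}
\item $-P_iB_i^{aw}+\tfrac{\mu_i}{2}K_{i0}^\top$ goes to $-B_i^{aw}+\tfrac{\mu_i}{2}\tilde{K}_{i0}^\top$;
\item $-P_ie_1-\tfrac{\alpha_i+\beta_i}{2\alpha_i\beta_i}e_1$ goes to $\hat{\Phi}_i$;
\item $-\mathcal{H}(P_i(A_i+B_iK_{i0}))$ goes to $-\mathcal{H}(\hat{A}_i)$.
\end{itemize}
No $\tilde{P}_i^{-1}$ survives, and the dead-zone coupling needs no bounding at all.

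Apart from $-\rho_i\tilde{P}_i^2$, the only non-affine remnant is $+\tilde{P}_ie_1e_1^\top\tilde{P}_i$. It is the image of the normalized CPL term $\lambda_i\alpha_i\beta_ie_1e_1^\top=e_1e_1^\top$ in the $(\tilde{x}_i,\tilde{x}_i)$ block. It enters with a plus sign in a matrix required to be positive semidefinite, so you may replace it by a lower bound. Lm.~\ref{Lm:InvSchur} with $P=\I$ and $Q=e_1e_1^\top\tilde{P}_i$ gives $\tilde{P}_ie_1e_1^\top\tilde{P}_i\geq\tilde{P}_ie_1e_1^\top+e_1e_1^\top\tilde{P}_i-\I$. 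This is precisely the source of the $-\I$ and $e_1$-terms in $\hat{\Delta}_i$.

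The inequality $\tilde{P}_i^{-1}\geq 2\I-\tilde{P}_i$ plays no role here. ``Absorbing mismatches by tightening $\mu_i$'' is not an argument either: $\mu_i$ only scales the dead-zone multiplier and cannot generate these terms.

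\textbf{Two points in Step~2 also need care.}

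First, replacing $u_i$ by $K_{i0}\tilde{x}_i+u_{iG}$ in the dead-zone constraint is not justified by $\phi_i$ vanishing near the equilibrium, because the constraint is needed globally. It holds because $v\mapsto\phi_i(u_{iE}+v)$ still satisfies $\phi_i(\phi_i+v)\leq 0$ for every $v$ whenever $V_{ti}^{\min}\leq u_{iE}\leq V_{ti}^{\max}$.

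Second, and more seriously, track the sign of the constant term. With your (correct) convention $\Pi_i-\sum_k\tau_k\Gamma_k\geq 0$, the constant-$1$ diagonal entry is $-\sigma_i\bar{\delta}_i^2=-\tilde{\delta}_i$. That row and column are otherwise zero, so positive semidefiniteness forces $\tilde{\delta}_i=0$; it does not become the $+\tilde{\delta}_i$ of the statement. The paper's proof simply writes $+\sigma_i\bar{\delta}_i^2$ in that slot, which is inconsistent with its own $\Gamma_i^\delta$. So neither argument actually supports the recovery $\bar{\delta}_i=\sqrt{\tilde{\delta}_i/\sigma_i}$. Because the $\tilde{\delta}_i$ row of the stated LMI is decoupled, what that LMI certifies is $\dot{V}_i\leq s(\tilde{\eta}_i,\tilde{x}_i)+\sigma_iu_{iG}^2$. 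That is, it gives the claimed IF-OFP$(\nu_i,\rho_i)$ property only when $u_{iG}\equiv 0$.
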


\begin{proof}
Consider the storage function $\mathrm{V}_i(\tilde{x}_i) = 
\tilde{x}_i^\top P_i\tilde{x}_i$, $P_i > 0$. The 
IF-OFP$(\nu_i,\rho_i)$ property requires 
$\dot{\mathrm{V}}_i - s_i \leq 0$, where 
$s_i \triangleq -\nu_i\|\tilde{\eta}_i\|^2 - 
\rho_i\|\tilde{x}_i\|^2 + \tilde{\eta}_i^\top\tilde{x}_i$ 
is the IF-OFP supply rate. The scalar VSC command error 
decomposes as $u_i - u_{iE} = K_{i0}\tilde{x}_i + u_{iG}$,
where $u_{iG} \in \mathbb{R}$ is the distributed 
contribution with $u_{iGE} = 0$ at equilibrium. To apply 
Lm. \ref{Lm:Sprocedure}, the vector $\zeta_i$ is augmented 
with a fixed scalar $1$ to make the bound 
$\bar{\delta}_i^2 - u_{iG}^2 \geq 0$ a homogeneous 
quadratic form:
\begin{equation}\label{Eq:proof_zeta}
  \zeta_i \triangleq
  \bigl[
    \tilde{\eta}_i^\top\ \tilde{x}_i^\top\
    g_i\ \phi_i\ u_{iG}\ 1
  \bigr]^\top \in \mathbb{R}^{10}.
\end{equation}
Expanding $\dot{\mathrm{V}}_i$ along 
\eqref{Eq:DG_error_dynamic}, the condition 
$\dot{\mathrm{V}}_i - s_i \leq 0$ is equivalently 
expressed as $\zeta_i^\top\Pi_i\zeta_i \leq 0$, where 
$\hat{A}_i \triangleq A_i + B_iK_{i0}$ and $\Pi_i$ is 
as defined in \eqref{Eq:proof_matrices}, where the $u_{iG}$ 
and constant-$1$ rows/cols are zero since neither appears 
in $\dot{\mathrm{V}}_i - s_i$ directly.

From Lm. \ref{Lm:sector_bounded}, $g_i$ satisfies 
$\zeta_i^\top\Gamma_i^g\zeta_i \geq 0$ with $\Gamma_i^g$ 
as in \eqref{Eq:proof_matrices}. From 
Lm. \ref{Lm:Saturation_Sector_condition}, $\phi_i$ 
satisfies $-\phi_i(\phi_i + (u_i - u_{iE})) \geq 0$. 
Substituting $u_i - u_{iE} = K_{i0}\tilde{x}_i + u_{iG}$ 
and embedding the $2\times2$ sector matrix of 
Lm. \ref{Lm:Saturation_Sector_condition} into the 
$(\phi_i, u_i{-}u_{iE})$ block positions of $\zeta_i$, 
one obtains $\zeta_i^\top\Gamma_i^\phi\zeta_i \geq 0$ 
with $\Gamma_i^\phi$ as in \eqref{Eq:proof_matrices}, where 
block $(1,4) = \mathbf{0}$ since $\tilde{\eta}_i$ does 
not appear in $u_i - u_{iE}$ (note that $\tilde{\eta}_i$ 
is the interconnection input, not the control error), 
and block $(2,2) = \mathbf{0}_3$ since the $(1,1)$ entry 
of the sector matrix in 
Lm. \ref{Lm:Saturation_Sector_condition} is zero. Under 
As. \ref{As:deltaBar}, the bound $|u_{iG}| \leq 
\bar{\delta}_i$ gives a third quadratic constraint 
$\zeta_i^\top\Gamma_i^\delta\zeta_i \geq 0$ with 
$\Gamma_i^\delta$ as in \eqref{Eq:proof_matrices}, so that 
$\zeta_i^\top\Gamma_i^\delta\zeta_i = \bar{\delta}_i^2 - 
u_{iG}^2 \geq 0$.

Applying Lm. \ref{Lm:Sprocedure} to \eqref{Eq:proof_matrices}, a 
sufficient condition for \eqref{Eq:proof_matrices} to hold 
for all $\zeta_i$ satisfying the three constraints is 
that there exist $\lambda_i, \mu_i > 0$ and 
$\sigma_i \geq 0$ such that:
\begin{equation}\label{Eq:proof_Slemma}
\Pi_i - \lambda_i \Gamma_i^g - \mu_i\Gamma_i^\phi 
- \sigma_i\Gamma_i^\delta \geq 0.
\end{equation}
Note that $\lambda_i$, $\mu_i$, and $K_{i0}$ appear 
multiplied in \eqref{Eq:proof_Slemma}, creating bilinear 
terms. To linearize, we fix $\mu_i > 0$ as a prespecified 
scalar, which eliminates all $\mu_i \times K_{i0}$ 
products. Regarding $\lambda_i$, note that it multiplies 
$\Gamma_i^g$ whose entries involve $\alpha_i\beta_i$. 
Setting $\lambda_i \triangleq \frac{1}{\alpha_i\beta_i}$, 
which is uniquely determined by the CPL sector bounds 
from Lm. \ref{Lm:sector_bounded}, simplifies 
$\lambda_i\alpha_i\beta_i = 1$ and eliminates the 
$\lambda_i \times K_{i0}$ bilinearity without introducing 
any conservatism. With these choices, $\sigma_i \geq 0$ 
remains a free decision variable. Substituting \eqref{Eq:proof_matrices} and collecting terms, \eqref{Eq:proof_Slemma} becomes:
\begin{equation}\label{Eq:proof_6x6}
\begin{bmatrix}
{-}\nu_i\I_3 & \tfrac{1}{2}\I_3{-}P_i
  & \0 & \0 & 0 & 0 \\
\star & \Delta_i
  & \hat{\Phi}_i
  & {-}P_iB_i^{aw}{+}\tfrac{\mu_i}{2}K_{i0}^\top
  & 0 & 0 \\
\star & \star & \lambda_i
  & 0 & 0 & 0 \\
\star & \star & \star & \mu_i
  & {-}\tfrac{\mu_i}{2} & 0 \\
\star & \star & \star & \star & \sigma_i & 0 \\
\star & \star & \star & \star & \star
  & \sigma_i\bar{\delta}_i^2
\end{bmatrix}
\geq \mathbf{0},
\end{equation}
where $\Delta_i \triangleq {-}\rho_i\I_3 
{-}\mathcal{H}(P_i\hat{A}_i) {+}e_1e_1^\top$ and 
$\hat{\Phi}_i \triangleq {-}P_ie_1 
{-}\tfrac{\alpha_i+\beta_i}{2\alpha_i\beta_i} 
e_1 \in\mathbb{R}^{3\times1}$.

To handle the $\bar{\delta}_i^2$ term, we apply the 
change of variables $\tilde{\delta}_i \triangleq 
\sigma_i\bar{\delta}_i^2$, which allows the maximum 
tolerable bound $\bar{\delta}_i$ to be recovered as 
part of the design rather than being prespecified. 
Specifically, once $\sigma_i$ and $\tilde{\delta}_i$ 
are obtained from the LMI, the maximum tolerable bound 
is recovered as $\bar{\delta}_i = 
\sqrt{\tilde{\delta}_i/\sigma_i}$.

Applying the congruence transformation 
$\mathrm{diag}(\I_3,\tilde{P}_i,1,1,1,1)$ to 
\eqref{Eq:proof_6x6} with $\tilde{P}_i \triangleq 
P_i^{-1}$ and $\tilde{K}_{i0} \triangleq K_{i0}\tilde{P}_i$ 
linearizes $P_iK_{i0}$, yielding:
\begin{equation}\label{Eq:proof_6x6_transformed}
\begin{bmatrix}
{-}\nu_i\I_3 & \tfrac{1}{2}\tilde{P}_i{-}\I_3
  & \0 & \0 & 0 & 0 \\
\star & \bar{\Delta}_i
  & \hat{\Phi}_i
  & {-}B_i^{aw}{+}\tfrac{\mu_i}{2}
  \tilde{K}_{i0}^\top
  & 0 & 0 \\
\star & \star & \lambda_i 
  & 0 & 0 & 0 \\
\star & \star & \star & \mu_i
  & {-}\tfrac{\mu_i}{2} & 0 \\
\star & \star & \star & \star & \sigma_i & 0 \\
\star & \star & \star & \star & \star
  & \tilde{\delta}_i
\end{bmatrix}
\geq \mathbf{0},
\end{equation}
where $\hat{A}_i \triangleq A_i\tilde{P}_i + 
B_i\tilde{K}_{i0}$, $\hat{\Phi}_i \triangleq 
{-}e_1{-}\tfrac{\alpha_i+\beta_i}{2\alpha_i\beta_i} 
\tilde{P}_ie_1\in\mathbb{R}^{3\times1}$, and
\begin{equation}\label{Eq:proof_Deltabar}
\bar{\Delta}_i \triangleq
\underbrace{{-}\rho_i\tilde{P}_i^2}_{\text{Term A}}
- \mathcal{H}(\hat{A}_i)
+\underbrace{
\tilde{P}_ie_1e_1^\top\tilde{P}_i
}_{\text{Term B}}.
\end{equation}

\textit{Term A:} Applying Lm. \ref{Lm:Schur_comp} 
with $P = \tilde{\rho}_i\I_3$ and 
$Q = \bigl[\mathbf{0}_{3\times3}\ \tilde{P}_i\ 
\mathbf{0}_{3\times4}\bigr]$, the condition 
$R - Q^\top P^{-1}Q \geq 0$ is equivalent to augmenting 
\eqref{Eq:proof_6x6_transformed} with block 
$(2,7) = \tilde{P}_i$ and block $(7,7) = \tilde{\rho}_i\I_3$, 
completing the elimination of Term A.

\textit{Term B:} Applying Lm. \ref{Lm:InvSchur} with 
$P = \I_3$ and $Q = \tilde{P}_ie_1$:
\begin{equation*}
\tilde{P}_ie_1e_1^\top\tilde{P}_i
= Q^\top\I_3^{-1}Q \geq Q^\top + Q - \I_3
= \tilde{P}_ie_1e_1^\top + e_1e_1^\top\tilde{P}_i - \I_3.
\end{equation*}
Replacing Term B by this lower bound yields 
$\hat{\Delta}_i = {-}\mathcal{H}(\hat{A}_i) 
+\tilde{P}_ie_1e_1^\top +e_1e_1^\top\tilde{P}_i - \I_3$ 
as in \eqref{Eq:Th:Local_CPL}, completing the proof.

\begin{figure*}[b]
\hrulefill
\normalsize
\begin{equation}\label{Eq:proof_matrices}
\begin{aligned}
\Pi_i &\triangleq
\begin{bmatrix}
\nu_i\I_3 & P_i{-}\tfrac{1}{2}\I_3 & \0 & \0 & 0 & 0 \\
\star & \rho_i\I_3{+}\mathcal{H}(P_i\hat{A}_i) & P_ie_1 & P_iB_i^{aw} & 0 & 0 \\
\star & \star & 0 & 0 & 0 & 0 \\
\star & \star & \star & 0 & 0 & 0 \\
\star & \star & \star & \star & 0 & 0 \\
\star & \star & \star & \star & \star & 0
\end{bmatrix}, \quad
\Gamma_i^g \triangleq
\begin{bmatrix}
\0_3 & \0_3 & \0 & \0 & 0 & 0 \\
\star & {-}\alpha_i\beta_ie_1e_1^\top & \tfrac{\alpha_i+\beta_i}{2}e_1 & \0 & 0 & 0 \\
\star & \star & {-}1 & 0 & 0 & 0 \\
\star & \star & \star & 0 & 0 & 0 \\
\star & \star & \star & \star & 0 & 0 \\
\star & \star & \star & \star & \star & 0
\end{bmatrix}, \\[6pt]
\Gamma_i^\phi &\triangleq
\begin{bmatrix}
\0_3 & \0_3 & \0 & \0 & 0 & 0 \\
\star & \0_3 & \0 & {-}\tfrac{1}{2}K_{i0}^\top & 0 & 0 \\
\star & \star & 0 & 0 & 0 & 0 \\
\star & \star & \star & {-}1 & {-}\tfrac{1}{2} & 0 \\
\star & \star & \star & \star & 0 & 0 \\
\star & \star & \star & \star & \star & 0
\end{bmatrix}, \quad
\Gamma_i^\delta \triangleq
\begin{bmatrix}
\0_3 & \0_3 & 0 & 0 & 0 & 0 \\
\star & \0_3 & 0 & 0 & 0 & 0 \\
\star & \star & 0 & 0 & 0 & 0 \\
\star & \star & \star & 0 & 0 & 0 \\
\star & \star & \star & \star & {-}1 & 0 \\
\star & \star & \star & \star & \star & \bar{\delta}_i^2
\end{bmatrix}.
\end{aligned}
\end{equation}
\end{figure*}

\end{proof}

\subsection{Global Control and Topology Co-Design}

The local controllers \eqref{Controller} regulate the voltage at each DG while ensuring that closed-loop DG dynamics satisfy the required dissipativity properties established in Sec. \ref{Sec:Diss_Property}. Given these subsystem properties, we now synthesize the interconnection matrix $M$ \eqref{Eq:NetErrSysMMat} (see Fig. \ref{Fig.DissNetError}), particularly its block $K$, using Prop. \ref{synthesizeM}. Note that, by synthesizing $K=[K_{ij}]_{i,j\in\N_N}$, we can uniquely determine the consensus-based distributed global controller gains $\{k_{ij}^c:i,j\in\mathbb{N}_N\}$ \eqref{k_ij} (required in \eqref{ControllerG} to ensure the current sharing goal), along with the required communication topology $\mathcal{G}^c$. Note also that, when designing $K$ via Prop. \ref{synthesizeM}, we particularly enforce the closed-loop DC MG error dynamics to be $\textbf{Y}$-dissipative with $\textbf{Y} \triangleq \scriptsize \bm{\gamma^2\I & 0 \\ 0 & -\I}$ (see Rm. \ref{Rm:X-DissipativityVersions}) to prevent/bound the amplification of disturbances affecting the performance (voltage regulation and current sharing). The following theorem formulates this distributed global controller and communication topology co-design problem. 

\begin{theorem}\label{Th:CentralizedTopologyDesign}
The closed-loop networked error dynamics of the DC MG (see in Fig. \ref{Fig.DissNetError}) can be made finite-gain $L_2$-stable with an $L_2$-gain $\gamma$ (where $\Tilde{\gamma} \triangleq \gamma^2<\bar{\gamma}$ and $\bar{\gamma}$ is prespecified) from unknown disturbances $w_c(t)$ to performance output $z_c(t)$, by synthesizing the interconnection matrix block $M_{\tilde{u}x}=K$ (\ref{Eq:MMatrix}) via solving the LMI problem:
\begin{equation}
\label{Eq:Th:CentralizedTopologyDesign0}
\begin{aligned}
\min_{\substack{Q,\{p_i: i\in\N_N\},\\
\{\bar{p}_l: l\in\N_L\}, \tilde{\gamma}, S}} &\sum_{i,j\in\N_N} c_{ij} \Vert Q_{ij} \Vert_1 +c_1 \tilde{\gamma} + \alpha\text{tr}(S), \\
\mbox{Sub. to:}\ &p_i > 0,\ \forall i\in\N_N,\ 
\bar{p}_l > 0,\ \forall l\in\N_L,\\   
\ &\mbox{\eqref{globalcontrollertheorem}: } W + S > \0, \ S \geq \0,\ \text{tr}(S) \leq \eta,\\
\ &0 < \tilde{\gamma} < \bar{\gamma},\\
& Q_I P_n \textbf{1}_N = \0,
\end{aligned}
\end{equation}
as $K = (\textbf{X}_p^{11})^{-1} Q$ and $Q_I = [Q_{ij}^{2,2}]_{i,j\in\N_N}$, where 
$\textbf{X}^{12} \triangleq 
\diag([-\frac{1}{2\nu_i}\I]_{i\in\N_N})$, 
$\textbf{X}^{21} \triangleq (\textbf{X}^{12})^\T$,
$\Bar{\textbf{X}}^{12} \triangleq 
\diag([-\frac{1}{2\Bar{\nu}_l}\I]_{l\in\N_L})$,
$\Bar{\textbf{X}}^{21} \triangleq (\Bar{\textbf{X}}^{12})^\T$, 
$\textbf{X}_p^{11} \triangleq 
\diag([-p_i\nu_i\I]_{i\in\N_N})$, 
$\textbf{X}_p^{22} \triangleq 
\diag([-p_i\rho_i\I]_{i\in\N_N})$, 
$\Bar{\textbf{X}}_{\bar{p}}^{11} 
\triangleq \diag([-\bar{p}_l\bar{\nu}_l\I]_{l\in\N_L})$, 
$\Bar{\textbf{X}}_{\bar{p}}^{22} 
\triangleq \diag([-\bar{p}_l\bar{\rho}_l\I]_{l\in\N_L})$, and $\tilde{\Gamma} \triangleq \tilde{\gamma}\I$. 
The structure of $Q\triangleq[Q_{ij}]_{i,j\in\N_N}$ mirrors that of $K\triangleq[K_{ij}]_{i,j\in\N_N}$ (i.e., only the middle element is non-zero in each block $Q_{ij}$, see \eqref{k_ij}). 
The coefficients $c_1>0$ and $c_{ij}>0,\forall i,j\in\N_N$ are predefined cost coefficients corresponding to the $L_2$-gain (control cost) and communication links (communication cost), respectively. The matrix $S$ is a slack matrix included for numerical stability of the used LMI solver, where the slack coefficients $\alpha \geq 0$ and $\eta \geq 0$ respectively impose soft and hard constraints on $S$.
\end{theorem}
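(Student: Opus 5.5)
The plan is to obtain Th.~\ref{Th:CentralizedTopologyDesign} as a direct specialization of Prop.~\ref{synthesizeM} to the networked error system of Fig.~\ref{Fig.DissNetError}. There are two families of subsystems. The DG error subsystems $\tilde{\Sigma}_i^{DG}$ are IF-OFP$(\nu_i,\rho_i)$ by Th.~\ref{Th:Local_CPL}, and the line error subsystems $\tilde{\Sigma}_l^{Line}$ are IF-OFP$(\bar{\nu}_l,\bar{\rho}_l)$ by Lm.~\ref{Lm:LineDissipativityStep}. They are interconnected through $M$ in \eqref{Eq:NetErrSysMMat}. The target specification is $\textbf{Y}=\diag(\gamma^2\I,-\I)$, so $\textbf{Y}^{22}=-\I<0$ and As.~\ref{As:NegativeDissipativity} holds automatically. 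As.~\ref{As:PositiveDissipativity} requires $X_i^{11}=-\nu_i\I>0$ and $\bar{X}_l^{11}=-\bar{\nu}_l\I>0$, i.e.\ $\nu_i<0$ and $\bar{\nu}_l<0$. I would enforce this in the local stage (Th.~\ref{Th:Local_CPL}) and take $\bar{\nu}_l$ slightly negative within the feasible set of Lm.~\ref{Lm:LineDissipativityStep}. I would state this explicitly as the standing hypothesis inherited from the local designs.

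Next I would substitute the IF-OFP matrices into the definitions of Prop.~\ref{synthesizeM}. This gives $\textbf{X}_p^{11}=\diag(-p_i\nu_i\I)$ and $\textbf{X}_p^{22}=\diag(-p_i\rho_i\I)$. It also gives $\textbf{X}^{12}=\diag((X_i^{11})^{-1}X_i^{12})=\diag(-\tfrac{1}{2\nu_i}\I)$, and the analogous barred quantities. Matching the blocks of \eqref{Eq:NetErrSysMMat} with \eqref{interconnectionMatrix} yields the following identifications:
\begin{itemize}
\item $M_{uy}=K$ is the only free block, with $L_{uy}=\textbf{X}_p^{11}K\triangleq Q$;
\item $M_{u\bar{y}}=\bar{C}$, $M_{uw}=E_c$, $M_{\bar{u}y}=C$, $M_{\bar{u}\bar{y}}=\0$ and $M_{\bar{u}w}=\bar{E}_c$ are fixed, so $L_{u\bar{y}}=\textbf{X}_p^{11}\bar{C}$ and so on;
\item $M_{zy}=H_c$, $M_{z\bar{y}}=\bar{H}_c$ and $M_{zw}=\0$;
\item $\textbf{Y}^{11}=\tilde{\gamma}\I$, $\textbf{Y}^{12}=\0$ and $\textbf{Y}^{22}=-\I$.
\end{itemize}
Inserting all of these into \eqref{NSC4YEID} produces the matrix $W$ of \eqref{globalcontrollertheorem}. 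It is affine in $(Q,p_i,\bar{p}_l,\tilde{\gamma})$, because each fixed block appears multiplied only by the scalars $p_i$ or $\bar{p}_l$. Hence $W>0$ is an LMI, and Prop.~\ref{synthesizeM} then guarantees $\textbf{Y}$-dissipativity. In other words, $\dot{V}\le\gamma^2\|w_c\|^2-\|z_c\|^2$ for the storage function $V=\sum p_i V_i+\sum\bar{p}_l\bar{V}_l$, and integrating gives finite $L_2$-gain $\gamma$ from $w_c$ to $z_c$.

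The remaining constraints are bookkeeping. The sparsity of $Q$ mirrors that of $K$ in \eqref{k_ij}, which is preserved because $\textbf{X}_p^{11}$ is block diagonal with scalar multiples of $\I$. For the same reason, the weighted-Laplacian requirement $K_IP_n\mathbf{1}_N=\0$ is equivalent to a linear condition on $Q_I$. To see this, note that $K_{ij}=-(p_i\nu_i)^{-1}Q_{ij}$ row-wise, so the row-scaled condition $Q_IP_n\mathbf{1}_N=\0$ holds if and only if $K_IP_n\mathbf{1}_N=\0$. The $\ell_1$ cost on $Q_{ij}$ promotes sparsity of the communication graph, and $\tilde{\gamma}<\bar{\gamma}$ bounds the gain. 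Both are convex objective or constraint terms.

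The main obstacle is the slack matrix $S$. The stated constraint is $W+S>\0$ rather than $W>0$, and that does not by itself imply \eqref{NSC4YEID}. I would handle it by arguing that the certificate holds whenever the optimum attains $S=\0$, which is what the penalty $\alpha\,\mathrm{tr}(S)$ drives toward. For the theorem as stated, I would interpret $S$ as a numerical regularization: the dissipativity conclusion is drawn from $W>0$, i.e.\ by taking $\eta$ small or checking $S=\0$ a posteriori. I would also check carefully that the $\tilde{\gamma}$ entry appears only linearly in the $(6,6)$ block.
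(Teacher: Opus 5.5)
Your proposal is correct and follows the same route as the paper, whose proof just invokes Prop.~\ref{synthesizeM} with the IF-OFP indices from Th.~\ref{Th:Local_CPL} and Lm.~\ref{Lm:LineDissipativityStep} and $\textbf{Y}=\diag(\gamma^2\I,-\I)$; the paper leaves implicit the details you supply, namely the block matching into \eqref{NSC4YEID}, the requirement $\nu_i<0$, $\bar{\nu}_l<0$ from As.~\ref{As:PositiveDissipativity}, and the row-scaling equivalence of $Q_IP_n\mathbf{1}_N=\0$ with $K_IP_n\mathbf{1}_N=\0$. One correction to your slack discussion: a small $\eta$ only gives $W>-S\geq-\eta\I$, which does not imply \eqref{NSC4YEID}, so the certificate genuinely needs $S=\0$ (e.g.\ $\eta=0$ or an a posteriori check)---a caveat the paper's proof never raises.
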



\begin{proof}
The proof follows by considering the closed-loop DC MG 
(shown in Fig. \ref{Fig.DissNetError}) as a networked 
system and applying the subsystem dissipativity properties 
assumed in \eqref{Eq:XEID_DG} and \eqref{Eq:XEID_Line} 
to the interconnection topology synthesis result given in 
Prop. \ref{synthesizeM}. 
The DG error subsystems are modeled as IF-OFP($\nu_i,\rho_i$) 
and line error subsystems as IF-OFP($\bar{\nu}_l,\bar{\rho}_l$), 
secured through local controller design and analysis in 
Th. \ref{Th:Local_CPL} and Lm. \ref{Lm:LineDissipativityStep}, 
respectively. Note that the IF-OFP($\nu_i,\rho_i$) certification 
in Th. \ref{Th:Local_CPL} accounts for both the CPL nonlinearity 
$g_i(\tilde{x}_i)$ and the VSC input saturation nonlinearity 
$\phi_i(u_i)$ simultaneously, via the S-procedure with 
prespecified multipliers $\lambda_i$ and $\mu_i$, respectively. 
The LMI problem \eqref{Eq:Th:CentralizedTopologyDesign0} is 
formulated to ensure the networked error system is 
$\mathbf{Y}$-dissipative, thereby ensuring finite-gain 
$L_2$-stability with gain $\gamma$ from disturbances $w_c$ 
to performance outputs $z_c$. The objective function in 
\eqref{Eq:Th:CentralizedTopologyDesign0} consists of three 
terms: communication cost 
($\sum_{i,j\in\mathbb{N}_N} c_{ij}\|Q_{ij}\|_1$), control 
cost $c_1\tilde{\gamma}$, and numerical stability term 
($\alpha\text{tr}(S)$). Minimizing this function while 
satisfying LMI constraints simultaneously optimizes the 
communication topology (by synthesizing 
$K = (\mathbf{X}_p^{11})^{-1}Q$) and robust stability 
(by minimizing $\tilde{\gamma}$) while ensuring 
$\gamma^2 < \bar{\gamma}$. The resulting controller and 
topology achieve voltage regulation and current sharing 
in the presence of ZIP loads and disturbances.    
\end{proof}

\begin{figure*}[!hb]
\vspace{-5mm}
\centering
\hrulefill
\begin{equation}\label{globalcontrollertheorem}
\scriptsize
	W \triangleq \bm{
		\textbf{X}_p^{11} & \0 & \0 & Q & \textbf{X}_p^{11}\Bar{C} &  \textbf{X}_p^{11}E_c \\
		\0 & \bar{\textbf{X}}_{\bar{p}}^{11} & \0 & \Bar{\textbf{X}}_{\Bar{p}}^{11}C & \0 & \bar{\textbf{X}}_{\bar{p}}^{11}\bar{E}_c \\
		\0 & \0 & \I & H_c & \bar{H}_c & \0 \\
		Q^\T & C^\T\Bar{\textbf{X}}_{\Bar{p}}^{11} & H_c^\T & -Q^\T\textbf{X}^{12}-\textbf{X}^{21}Q-\textbf{X}_p^{22} & -\textbf{X}^{21}\textbf{X}_{p}^{11}\bar{C}-C^\T\bar{\textbf{X}}_{\bar{p}}^{11}\bar{\textbf{X}}^{12} & -\textbf{X}^{21}\textbf{X}_p^{11}E_c \\
		\Bar{C}^\T\textbf{X}_p^{11} & \0 & \bar{H}_c^\T & -\Bar{C}^\T\textbf{X}_p^{11}\textbf{X}^{12}-\bar{\textbf{X}}^{21}\Bar{\textbf{X}}_{\Bar{p}}^{11}C & -\bar{\textbf{X}}_{\bar{p}}^{22} & -\bar{\textbf{X}}^{21}\Bar{\textbf{X}}_{\Bar{p}}^{11}\bar{E}_c \\ 
		E_c^\T\textbf{X}_p^{11} & \bar{E}_c^\T\Bar{\textbf{X}}_{\Bar{p}}^{11} & \0 & -E_c^\T\textbf{X}_p^{11}\textbf{X}^{12} & -\bar{E}_c^\T\Bar{\textbf{X}}_{\Bar{p}}^{11}\bar{\textbf{X}}^{12} & \tilde{\Gamma} \\
	}\normalsize 
 > \0 
\end{equation}
\end{figure*}

\begin{figure*}[!hb]
\vspace{-5mm}
\centering
\begin{equation}\label{Eq:Neccessary_condition}
\scriptsize
	\bm{
		-p_i\nu_i & 0 & 0 & 0 & -p_i\nu_i\bar{C}_{il} & -p_i\nu_i\\
		0 & -\bar{p}_l\bar{\nu}_l & 0 & -\bar{p}_l\bar{\nu}_lC_{il} & 0 & -\bar{p}_l\bar{\nu}_l \\
		0 & 0 & 1 & 1 & 1 & 0 \\
		0 & -C_{il}\bar{\nu}_l\bar{p}_l & 1 & p_i\rho_i & -\frac{1}{2}p_i\bar{C}_{il}-\frac{1}{2}C_{il}\bar{p}_l & -\frac{1}{2}p_i \\
		-\bar{C}_{il}\nu_ip_i & 0 & 1 & -\frac{1}{2}\bar{C}_{il}p_i-\frac{1}{2}\bar{p}_lC_{il} & \bar{p}_l\bar{\rho}_l & -\frac{1}{2}p_l \\ 
		-\nu_ip_i & -\bar{p}_l\bar{\nu}_l & 0 & -\frac{1}{2}p_i & -\frac{1}{2}\bar{p}_l & \tilde{\gamma}_i \\
	}\normalsize 
 > \0,\ \forall l\in \mathcal{E}_i, \forall i\in\N_N
\end{equation}
\end{figure*}

\begin{remark}
In the proposed co-design approach \eqref{Eq:Th:CentralizedTopologyDesign0}: (i) communication costs are minimized through sparse topology optimization, (ii) control performance is improved by minimizing the $L_2$-gain from disturbance inputs to performance outputs, and (iii) computational efficiency is not compromised through LMI formulation.    
\end{remark}

\subsection{Necessary Conditions on Subsystem Passivity Indices}

Based on the terms $\textbf{X}_p^{11}$, $\textbf{X}_p^{22}$, $\bar{\textbf{X}}_{\bar{p}}^{11}$, $\bar{\textbf{X}}_{\bar{p}}^{22}$, $\textbf{X}^{12}$, $\textbf{X}^{21}$, $\bar{\textbf{X}}^{12}$, and $\bar{\textbf{X}}^{21}$ appearing in \eqref{globalcontrollertheorem} included in the global co-design problem \eqref{Eq:Th:CentralizedTopologyDesign0}, it is clear that the feasibility and the effectiveness of the proposed global co-design technique (i.e., Th. \ref{Th:CentralizedTopologyDesign}) depends on the enforced passivity indices $\{(\nu_i,\rho_i):i\in\mathbb{N}_N\}$ \eqref{Eq:XEID_DG} and  $\{(\bar{\nu}_l,\bar{\rho}_l):l\in\mathbb{N}_L\}$ \eqref{Eq:XEID_Line} assumed for the DG error dynamics \eqref{Eq:DG_error_dynamic} and line error dynamics \eqref{Eq:Line_error_dynamic}, respectively.

However, using Th. \ref{Th:Local_CPL} for designing dissipativating local controllers in $\{u_{iL}:i\in\mathbb{N}_N\}$ \eqref{Controller}, we can obtain a specialized set of passivity indices for the DG error dynamics \eqref{Eq:DG_error_dynamic}. Similarly, using Lm. \ref{Lm:LineDissipativityStep} for dissipativity analyses, we can obtain a specialized set of passivity indices for the line error dynamics \eqref{Eq:Line_error_dynamic}. Hence, these local controller design and analysis processes have a great potential to impact the feasibility and effectiveness of the global co-design solution.

Therefore, when designing such local controllers (via Th. \ref{Th:Local_CPL}) and conducting such dissipativity analysis (via Lm. \ref{Lm:LineDissipativityStep}), one must also consider the specific conditions necessary for the feasibility and implications on the effectiveness of the eventual global co-design solution. The following lemma, inspired by \cite[Lm. 1]{WelikalaJ22022}, identifies local necessary conditions based on the global LMI conditions \eqref{Eq:Th:CentralizedTopologyDesign0} in the global co-design problem in Th. \ref{Th:CentralizedTopologyDesign}.

\begin{lemma}\label{Lm:CodesignConditions}
For the LMI conditions \eqref{Eq:Th:CentralizedTopologyDesign0} in Th. \ref{Th:CentralizedTopologyDesign} to hold, it is necessary that the passivity indices $\{\nu_i,\rho_i:i\in\mathbb{N}_N\}$ \eqref{Eq:XEID_DG} and  $\{\bar{\nu}_l,\bar{\rho}_l:l\in\mathbb{N}_L\}$ \eqref{Eq:XEID_Line} respectively enforced for the DG \eqref{Eq:DG_error_dynamic} and line \eqref{Eq:Line_error_dynamic} error dynamics are such that the LMI problem: 
\begin{equation}\label{Eq:Lm:CodesignConditions}
\begin{aligned}
\mbox{Find: }&\ \{(\nu_i,\rho_i,\tilde{\gamma}_i):i\in\N_N\},\{(\Bar{\nu}_l,\bar{\rho}_l):l\in\N_L\}\\
\mbox{Sub. to: }&\ 0 \leq \tilde{\gamma}_i \leq \bar{\gamma},\  \forall i\in\N_N,\ \eqref{Eq:Neccessary_condition},  
\end{aligned}
\end{equation} 
is feasible, where $p_i>0, \forall i\in\N_N$ and $\bar{p}_l>0, \forall l\in\N_L$ are some prespecified parameters. 
\end{lemma}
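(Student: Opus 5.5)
The plan is to obtain \eqref{Eq:Neccessary_condition} as a principal-submatrix (sub-block) condition of the global LMI $W+S>\0$ in \eqref{globalcontrollertheorem}, in the spirit of \cite[Lm. 1]{WelikalaJ22022}. A symmetric positive definite matrix has all principal submatrices positive definite, and the same holds for any congruence $T^\T W T$ with $T$ of full column rank. The LMI $W>\0$ (take $S=\0$ for the necessity argument, or note that with $\text{tr}(S)\le\eta$ small the slack only perturbs the bound on $\tilde{\gamma}$) therefore implies positivity of every ``local'' compression. For each DG $i$ and each line $l\in\mathcal{E}_i$, I will choose a selection matrix $T_{il}$. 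It picks, in each of the six block rows/columns of $W$, a scalar coordinate associated with DG $i$ or line $l$:
\begin{itemize}
\item in block rows 1 and 4, the voltage coordinate of DG $i$;
\item in block rows 2 and 5, line $l$;
\item in block row 3, one coordinate of $z_c$;
\item in block row 6, one scalar disturbance direction.
\end{itemize}
The voltage coordinate is chosen because $\bar{C}_{il}$ and $C_{il}$ act only on $V_i$.

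Next I will compute $T_{il}^\T W T_{il}$ entry by entry. The diagonal entries are as follows:
\begin{itemize}
\item $\textbf{X}_p^{11}$ gives $-p_i\nu_i$ and $\bar{\textbf{X}}_{\bar p}^{11}$ gives $-\bar p_l\bar\nu_l$.
\item The $\I$ block gives $1$.
\item The $(4,4)$ block gives $-2[\textbf{X}^{21}Q]_{ii}+p_i\rho_i$. Since $Q$ has only $(2,2)$ entries, its voltage-voltage entry vanishes, leaving $p_i\rho_i$.
\item The $(5,5)$ block gives $\bar p_l\bar\rho_l$ and the $(6,6)$ block gives $\tilde{\gamma}$, which is renamed $\tilde{\gamma}_i\le\bar\gamma$.
\end{itemize}
The off-diagonal entries come from the $\bar{C}$, $C$, $H_c$ and $E_c$ terms. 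Here I use $\textbf{X}^{12}=-\frac{1}{2\nu_i}$ and $\bar{\textbf{X}}^{12}=-\frac{1}{2\bar\nu_l}$, so that products like $\textbf{X}^{21}\textbf{X}_p^{11}$ collapse to $\frac{1}{2}p_i$. This reproduces the $-\frac12 p_i\bar C_{il}-\frac12 C_{il}\bar p_l$ and $-\frac12 p_i$ entries. The $Q$-dependent entries in the selected rows are zero by the sparsity structure of $Q$, so the resulting $6\times6$ matrix no longer depends on the global variable $Q$. It depends only on the local quantities $\nu_i,\rho_i,\bar\nu_l,\bar\rho_l$ and the prespecified $p_i,\bar p_l$.

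The main obstacle is bookkeeping. The selected entries must match \eqref{Eq:Neccessary_condition} exactly, including the signs and which disturbance and output coordinates are picked: $H_c$ and $\bar H_c$ give the $1$'s, and $E_c$ and $\bar E_c$ give the last column. Where the normalization of $E_i$ or $\bar E_l$ differs, I would absorb it by rescaling the chosen selection vector, which preserves definiteness. Finally I will argue that the $p_i,\bar p_l$ can be treated as prespecified. The lemma asks only for existence of some feasible indices given those $p$'s, and for the true global solution its $p$'s and indices furnish such a certificate. Feasibility of \eqref{Eq:Th:CentralizedTopologyDesign0} therefore implies feasibility of \eqref{Eq:Lm:CodesignConditions}.
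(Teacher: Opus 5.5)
Your route is the same as the paper's. The paper rearranges $W$ into block-elementwise form, which is a symmetric permutation and so preserves definiteness, and then keeps a diagonal block, i.e.\ a principal submatrix, just as your $T_{il}$ does. Your entry checks in block rows $1,2,4,5$ are correct, including the vanishing of every $Q$-entry at the voltage coordinate.

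\textbf{Gap: the compression does not reproduce \eqref{Eq:Neccessary_condition}.} You claim the compression matches \eqref{Eq:Neccessary_condition} entry for entry. That fails in block rows $3$ and $6$. The paper's proof asserts the same identification without checking it.
\begin{itemize}
\item \emph{Row 3.} $H_c=[H^\top\ \0]^\top$ and $\bar H_c=[\0\ \bar H^\top]^\top$ have disjoint row supports in $z_c$, so no single coordinate of $z_c$ produces both $1$'s. You need $t_3=e_{z,V_i}+e_{\bar z,l}$ (unit vectors of the $z$-coordinate of $V_i$ and the $\bar z$-coordinate of line $l$). The $(3,3)$ entry is then $\|t_3\|^2=2$, not $1$.
\item \emph{Row 6.} Matching $-p_i\nu_i$ and $-\bar p_l\bar\nu_l$ forces $t_6=C_{ti}e_{w_{vi}}+L_l e_{\bar w_l}$, because $[E_i]_{11}=C_{ti}^{-1}$ and $\bar E_l=L_l^{-1}$. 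This turns the $(6,6)$ entry into $(C_{ti}^2+L_l^2)\tilde\gamma$. So the rescaling preserves definiteness of a \emph{different} matrix, and $\tilde\gamma_i\le\bar\gamma$ follows only if $C_{ti}^2+L_l^2\le 1$.
\end{itemize}
The row-3 discrepancy is not cosmetic. Eliminating index $3$ by a Schur complement subtracts $(e_4+e_5)(e_4+e_5)^\top$ in \eqref{Eq:Neccessary_condition}. The actual compression subtracts only $\tfrac12(e_4+e_5)(e_4+e_5)^\top$. If you keep both $z_c$ coordinates it subtracts $e_4e_4^\top+e_5e_5^\top$, which differs from the first by an indefinite term. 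Hence \eqref{Eq:Neccessary_condition} as written is strictly more demanding than anything $W>\0$ yields this way. What your argument (and the paper's) actually proves is the version with $(3,3)=2$ and $(6,6)=(C_{ti}^2+L_l^2)\tilde\gamma$, or the corresponding $8\times 8$ principal submatrix.

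\textbf{Gap: the slack $S$.} Setting $S=\0$ is not admissible in a necessity argument. A feasible point of \eqref{Eq:Th:CentralizedTopologyDesign0} may have $S\neq\0$, and $W+S>\0$ with $S\geq\0$ does not give $W>\0$. Nor does $S$ perturb only the $\tilde\gamma$ entry: it is a full matrix. What you actually obtain is
\[
T_{il}^\top W T_{il}+\eta\|T_{il}\|_2^2\I>\0,
\]
which shifts every diagonal entry. The paper's proof makes the same simplification by writing $W>\0$. A rigorous version therefore needs either $\eta=0$ or must carry this slack into the local condition.

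Your treatment of $p_i,\bar p_l$ as those of the global solution is fine, and more explicit than the paper's.
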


\begin{proof}
For the feasibility of the global co-design problem 
\eqref{Eq:Th:CentralizedTopologyDesign0}, $W$ given in 
\eqref{globalcontrollertheorem} must satisfy $W > 0$. 
Let $W = [W_{rs}]_{r,s\in\mathbb{N}_6}$ where each block 
$W_{rs}$ can be a block matrix of block dimensions 
$(N\times N)$, $(N\times L)$, or $(L\times L)$ depending 
on its location in $W$ (e.g., see blocks $W_{11}$, $W_{15}$, 
and $W_{22}$, respectively). Without loss of generality, 
let $W_{rs} \triangleq [W_{rs}^{jm}]_{j\in\bar{J}(r),
m\in\bar{M}(s)}$ where $\bar{J}(r),\bar{M}(s) \in \{N,L\}$. 
Inspired by \cite[Lm. 1]{WelikalaJ22022}, an equivalent 
condition for $W > 0$ is $\bar{W} \triangleq \text{BEW}(W) 
> 0$, where $\text{BEW}(W)$ is the ``block-elementwise'' 
form of $W$, created by combining appropriate inner-block 
elements of each block $W_{rs}$ into a block-block matrix: 
$\bar{W} = [[W_{rs}^{j,m}]_{r,s\in\mathbb{N}_6}]_{j\in
\mathbb{N}_{\bar{J}},m\in\mathbb{N}_{\bar{M}}}$. 
Considering only the diagonal blocks in $\bar{W}$ and the 
implication $\bar{W} > 0 \implies [[W_{rs}^{j,m}]_{r,s\in
\mathbb{N}_6}]_{j\in\mathbb{N}_{\bar{J}(r)},m\in\mathbb{N}_{
\bar{M}(r)}} > 0 \iff$ \eqref{Eq:Lm:CodesignConditions} 
(also recall $C_{il} \triangleq -C_{ti}\bar{C}_{il}^\top$ 
and $\bar{C}_{il} \triangleq -C_{ti}^{-1}[B_{il}\ 0\ 0]^\top$), 
we have \eqref{globalcontrollertheorem} $\implies$ 
\eqref{Eq:Lm:CodesignConditions}. In other words, 
\eqref{Eq:Lm:CodesignConditions} is a set of necessary 
conditions for the feasibility of 
\eqref{globalcontrollertheorem}.

Beyond supporting feasibility, the LMI problem 
\eqref{Eq:Lm:CodesignConditions}, through its inclusion 
of the constraint $0 \leq \tilde{\gamma}_i \leq \bar{\gamma}$, 
also improves the effectiveness of the global co-design 
\eqref{globalcontrollertheorem} by bounding the local 
$L_2$-gain at each DG subsystem.
\end{proof}

In conclusion, here we used the LMI problem \eqref{Eq:Th:CentralizedTopologyDesign0} to derive a set of necessary LMI conditions consolidated as a single LMI problem \eqref{Eq:Lm:CodesignConditions}. Ensuring the feasibility of this consolidated LMI problem \eqref{Eq:Lm:CodesignConditions} increases the feasibility and effectiveness of the LMI problem \eqref{Eq:Th:CentralizedTopologyDesign0} solution, i.e., of the global co-design. Finally, we also point out that the necessary conditions given in the LMI problem \eqref{Eq:Lm:CodesignConditions} are much stronger and complete than those given in our prior work \cite{ACCNajafi}.

\subsection{Local Controller Synthesis}\label{Sec:Local_Synth}

We conclude our proposed solution by providing the following theorem that integrates all the necessary LMI conditions for the global co-design of the DC MG (i.e., Th. \ref{Th:CentralizedTopologyDesign}), identified in Lm. \ref{Lm:CodesignConditions}, and use them simultaneously to design the local controllers for DG error dynamics and analyze local line error dynamics. In all, the following result removes the necessity of implementing/evaluating the LMI problems in Th. \ref{Th:Local_CPL}, Lm. \ref{Lm:LineDissipativityStep} and Lm. \ref{Lm:CodesignConditions} separately, and instead provides a unified LMI problem to lay the foundation required to execute the global control and topology co-design of the DC MG using the established Th. \ref{Th:CentralizedTopologyDesign}. 

\begin{theorem}\label{Th:LocalControllerDesign}
Under the predefined DG parameters 
\eqref{Eq:DGCompact}, line parameters 
\eqref{Eq:LineCompact}, and design parameters 
$\{p_i: i\in\mathbb{N}_N\}$, 
$\{\bar{p}_l: l\in\mathbb{N}_L\}$, the necessary 
conditions in 
\eqref{Eq:Th:CentralizedTopologyDesign0} hold 
if the local controller gains 
$\{K_{i0}, i\in\mathbb{N}_N\}$ \eqref{Controller} 
and DG and line passivity indices 
$\{\nu_i, \tilde{\rho}_i: i\in\mathbb{N}_N\}$ 
\eqref{Eq:XEID_DG} and 
$\{\bar{\nu}_l, \bar{\rho}_l: l\in\mathbb{N}_L\}$ 
\eqref{Eq:XEID_Line} are determined by solving 
the LMI problem:
\begin{equation}\label{Eq:Th:LocalControllerDesign}
\begin{aligned}
&\min \sum_{i=1}^{N}\alpha_{\gamma}\tilde{\gamma}_i,\\
&\text{Find: }\ 
  \{(\tilde{K}_{i0}, \tilde{P}_i, 
  \sigma_i, \tilde{\delta}_i, \nu_i, \tilde{\rho}_i, 
  \tilde{\gamma}_i): i\in\mathbb{N}_N\},\\ 
&\quad\quad\quad\ 
  \{(\bar{P}_l, \bar{\nu}_l, \bar{\rho}_l): 
  l\in\mathbb{N}_L\},\ 
  \{(\xi_{il}, s_1, s_2): 
  l\in\mathcal{E}_i, i\in\mathbb{N}_N\},\\
&\text{Sub. to: }\ 
  \tilde{P}_i > 0,\  
  \sigma_i \geq 0,\ \tilde{\rho}_i > 0,\ 
  \tilde{\delta}_i \geq 0,\
  \bar{P}_l > 0,\ 
  \eqref{Eq:Transformed_Necessary_condition},\\
&{\scriptsize
\begin{bmatrix}
{-}\nu_i\I & \tfrac{1}{2}\tilde{P}_i{-}\I & \0 
  & \0 & 0 & 0 & \0 \\
\star & \hat{\Delta}_i
  & \hat{\Phi}_i 
  & {-}B_i^{aw}{+}\tfrac{\mu_i}{2}
  \tilde{K}_{i0}^\top
  & 0 & 0 & \tilde{P}_i \\
\star & \star & \lambda_i
  & 0 & 0 & 0 & \0 \\
\star & \star & \star & \mu_i
  & {-}\tfrac{\mu_i}{2} & 0 & \0 \\
\star & \star & \star & \star & \sigma_i & 0 & \0 \\
\star & \star & \star & \star & \star 
  & \tilde{\delta}_i & \0 \\
\star & \star & \star & \star & \star 
  & \star & \tilde{\rho}_i\I
\end{bmatrix}} \geq \0,\\
&{\scriptsize\begin{bmatrix} 
  \frac{2\bar{P}_lR_l}{L_l}{-}\bar{\rho}_l 
  & {-}\frac{\bar{P}_l}{L_l}{+}\frac{1}{2}\\
  \star & -\bar{\nu}_l
\end{bmatrix}} \geq \0,\ \forall l\in\mathbb{N}_L,\\
&{\scriptsize\begin{bmatrix} 
  1 & \bar{\nu}_l & \tilde{\rho}_i \\
  \bar{\nu}_l & s_1 & \xi_{il} \\
  \tilde{\rho}_i & \xi_{il} & s_2
\end{bmatrix}} \geq \0,\ 
  \forall l\in\mathcal{E}_i,\ 
  \forall i\in\mathbb{N}_N,
\end{aligned}
\end{equation}
where
$\hat{A}_i \triangleq A_i\tilde{P}_i 
  + B_i\tilde{K}_{i0}$,
$\hat{\Delta}_i \triangleq 
  {-}\mathcal{H}(\hat{A}_i)
  + \tilde{P}_ie_1e_1^\top
  + e_1e_1^\top\tilde{P}_i - \I$,
$\hat{\Phi}_i \triangleq 
  {-}e_1
  {-}\tfrac{\alpha_i+\beta_i}{2\alpha_i\beta_i}
  \tilde{P}_ie_1$,
$e_1 \triangleq [1\ 0\ 0]^\top$,
$\lambda_i \triangleq \tfrac{1}{\alpha_i\beta_i}$ 
is uniquely determined by the CPL sector bounds 
$\alpha_i$ and $\beta_i$ from 
Lm. \ref{Lm:sector_bounded},
$\mu_i > 0$ is a prespecified scalar,
$\alpha_\gamma > 0$ is a weighting coefficient,
$K_{i0} \triangleq \tilde{K}_{i0}\tilde{P}_i^{-1}$,
$\rho_i \triangleq \tilde{\rho}_i^{-1}$,
and $\xi_{il}$ is an auxiliary variable 
enforcing $\xi_{il} \geq \bar{\nu}_l\tilde{\rho}_i$.
The maximum tolerable bound on $|u_{iG}|$ is 
recovered as 
$\bar{\delta}_i = \sqrt{\tilde{\delta}_i/\sigma_i}$.
\end{theorem}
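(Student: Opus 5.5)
The plan is to show that the unified problem \eqref{Eq:Th:LocalControllerDesign} is the conjunction of three groups of constraints already established separately: the local dissipativating design of Th.~\ref{Th:Local_CPL}, the line analysis of Lm.~\ref{Lm:LineDissipativityStep}, and the necessary co-design conditions of Lm.~\ref{Lm:CodesignConditions}. Any feasible point therefore certifies all three at once. Three steps are needed.

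\emph{Step 1: DG block.} The first $7\times7$ matrix inequality, together with $\tilde{P}_i>0$, $\sigma_i\geq0$, $\tilde{\rho}_i>0$ and $\tilde{\delta}_i\geq0$, coincides verbatim with \eqref{Eq:Th:Local_CPL}. By Th.~\ref{Th:Local_CPL} (under As.~\ref{As:deltaBar}), the gain $K_{i0}=\tilde{K}_{i0}\tilde{P}_i^{-1}$ then renders $\tilde{\Sigma}_i^{DG}$ IF-OFP$(\nu_i,\rho_i)$ with $\rho_i=\tilde{\rho}_i^{-1}$ and tolerance $\bar{\delta}_i=\sqrt{\tilde{\delta}_i/\sigma_i}$.

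\emph{Step 2: line block.} The $2\times2$ inequality with $\bar{P}_l>0$ is exactly \eqref{Eq:Lm:LineDissipativityStep1}. Lm.~\ref{Lm:LineDissipativityStep} therefore gives that each $\tilde{\Sigma}_l^{Line}$ is IF-OFP$(\bar{\nu}_l,\bar{\rho}_l)$. This step and Step 1 justify the subsystem dissipativity assumptions \eqref{Eq:XEID_DG}--\eqref{Eq:XEID_Line} used in Th.~\ref{Th:CentralizedTopologyDesign}.

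\emph{Step 3: necessary conditions.} Show that \eqref{Eq:Transformed_Necessary_condition}, together with the $3\times3$ auxiliary LMIs, implies \eqref{Eq:Neccessary_condition} for every $l\in\mathcal{E}_i$. By Lm.~\ref{Lm:CodesignConditions}, this is the necessary condition for feasibility of \eqref{Eq:Th:CentralizedTopologyDesign0}.

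Step 3 is the main obstacle. In \eqref{Eq:Neccessary_condition}, $\rho_i$ appears through $p_i\rho_i=p_i\tilde{\rho}_i^{-1}$, which is non-convex in the local decision variable $\tilde{\rho}_i$. Products such as $\bar{\nu}_l\tilde{\rho}_i$ also arise once the condition is rewritten in the $\tilde{\rho}_i$ coordinates.

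I will first apply Lm.~\ref{Lm:Schur_comp} to the row/column carrying $p_i\rho_i$. Concretely, I will move the $\tilde{\rho}_i^{-1}$ term into an extra bordered block with diagonal entry $\tilde{\rho}_i/p_i$, mirroring the Term~A treatment in the proof of Th.~\ref{Th:Local_CPL}. I expect this is how \eqref{Eq:Transformed_Necessary_condition} is obtained.

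Any remaining bilinear term $\bar{\nu}_l\tilde{\rho}_i$ is replaced by the auxiliary variable $\xi_{il}$. The $3\times3$ matrix $\left[\begin{smallmatrix}1&\bar{\nu}_l&\tilde{\rho}_i\\ \bar{\nu}_l&s_1&\xi_{il}\\ \tilde{\rho}_i&\xi_{il}&s_2\end{smallmatrix}\right]\geq0$ is, via Lm.~\ref{Lm:Schur_comp} on the leading $1$, equivalent to
\[
\begin{bmatrix} s_1-\bar{\nu}_l^2 & \xi_{il}-\bar{\nu}_l\tilde{\rho}_i\\ \star & s_2-\tilde{\rho}_i^2\end{bmatrix}\geq0 .
\]
This is a convex lifting that couples $\xi_{il}$ to the product $\bar{\nu}_l\tilde{\rho}_i$. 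I will argue that the product enters \eqref{Eq:Transformed_Necessary_condition} monotonically, so that replacing it by the bound $\xi_{il}\geq\bar{\nu}_l\tilde{\rho}_i$ stated in the theorem yields a sufficient, i.e.\ more conservative, condition. If needed, I will use Lm.~\ref{Lm:InvSchur} to lower-bound any leftover inverse term linearly. The sign bookkeeping here is where care is required: I must confirm that each relaxation tightens rather than loosens the implied condition.

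Finally, the objective $\sum_i\alpha_\gamma\tilde{\gamma}_i$ only selects among feasible points and does not affect the implications. Combining Steps 1--3 shows that any solution yields local gains and passivity indices for which the necessary conditions of \eqref{Eq:Th:CentralizedTopologyDesign0} hold, as claimed.
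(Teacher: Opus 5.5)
Your Steps 1 and 2 are exactly the paper's steps (i)--(ii), but Step 3 fails at its first move. In \eqref{Eq:Neccessary_condition}, the entry $p_i\rho_i=p_i\tilde\rho_i^{-1}$ comes from $-\mathbf{X}_p^{22}$. It sits with a plus sign on the diagonal of a matrix that must be positive definite. Lm.~\ref{Lm:Schur_comp} can only absorb an inverse term that is subtracted, as in Term A, where $-\rho_i\tilde P_i^2=-\tilde P_i\tilde\rho_i^{-1}\tilde P_i$. So $+p_i\tilde\rho_i^{-1}$ cannot be traded for a bordered block with diagonal $\tilde\rho_i/p_i$. Such a bordering would also produce a $7\times7$ matrix, whereas \eqref{Eq:Transformed_Necessary_condition} is $6\times6$ with the entries of row/column 4 rescaled by $\tilde\rho_i$ (e.g.\ $(3,4)$: $1\mapsto\tilde\rho_i$; $(4,6)$: $-\tfrac12p_i\mapsto-\tfrac12p_i\tilde\rho_i$). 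That pattern points to the idea you are missing. The paper applies the congruence $T^\top(\cdot)T$ with $T=\mathrm{diag}(1,1,1,\rho_i^{-1},1,1)$, which is nonsingular since $\tilde\rho_i>0$. It maps $p_i\rho_i$ to $p_i\rho_i\tilde\rho_i^2=p_i\tilde\rho_i$ and is an exact equivalence: \eqref{Eq:Neccessary_condition} holds iff \eqref{Eq:Transformed_Necessary_condition} holds with $\xi_{il}$ replaced by $\bar\nu_l\tilde\rho_i$. Your fallback via Lm.~\ref{Lm:InvSchur}, $p_i\tilde\rho_i^{-1}\geq p_i(2-\tilde\rho_i)$, is actually sound: it lowers a positive diagonal entry and leaves no bilinear term at all. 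However, it proves a variant of the theorem with a different constraint, not the stated one.

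The $\xi_{il}$ step also does not go through as you plan. $\xi_{il}$ enters \eqref{Eq:Transformed_Necessary_condition} only through the off-diagonal $(2,4)$/$(4,2)$ entries $-\bar p_l\xi_{il}C_{il}$. Positive definiteness is not monotone in an off-diagonal entry, so a one-sided bound $\xi_{il}\geq\bar\nu_l\tilde\rho_i$ would not yield sufficiency. Moreover, that bound is not even enforced. By your own Schur reduction, with $s_1,s_2$ free the $3\times3$ constraint admits any $\xi_{il}$ once $s_1,s_2$ are large. To conclude \eqref{Eq:Neccessary_condition} you need $\xi_{il}=\bar\nu_l\tilde\rho_i$ exactly. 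The paper's step (iii) asserts this replacement without further argument, so there is no monotonicity argument to find there. A sound repair is to remove the bilinearity, for example by fixing $\bar\nu_l$ first. The line LMI does not involve DG variables, so Lm.~\ref{Lm:LineDissipativityStep} can be solved beforehand. After that, $\bar\nu_l\tilde\rho_i$ is linear in $\tilde\rho_i$ and the congruence argument gives the implication directly.
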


\begin{proof}
The proof proceeds in four steps.

(i) The dynamic models of $\tilde{\Sigma}_i^{DG}$ 
and $\tilde{\Sigma}_l^{Line}$ described in 
\eqref{Eq:DGCompact} and \eqref{Eq:LineCompact} 
are considered, and the LMI-based synthesis and 
analysis techniques from Th. \ref{Th:Local_CPL} 
and Lm. \ref{Lm:LineDissipativityStep} are applied 
to enforce and identify the subsystem passivity 
indices assumed in \eqref{Eq:XEID_DG} and 
\eqref{Eq:XEID_Line}, respectively.

(ii) Th. \ref{Th:Local_CPL} is applied to obtain 
the local controller gains $K_{i0}$ and passivity 
indices $(\nu_i, \rho_i)$ for $\tilde{\Sigma}_i^{DG}$, 
leading to the LMI in 
\eqref{Eq:Th:LocalControllerDesign}. This LMI 
simultaneously certifies the IF-OFP$(\nu_i,\rho_i)$ 
property of $\tilde{\Sigma}_i^{DG}$ under both the 
CPL nonlinearity $g_i(\tilde{x}_i)$ and the VSC 
input saturation nonlinearity $\phi_i(u_i)$, 
including the anti-windup correction via $B_i^{aw}$, 
through the S-procedure with prespecified multiplier 
$\mu_i > 0$ and multiplier $\lambda_i \triangleq 
\tfrac{1}{\alpha_i\beta_i}$ uniquely determined by 
the CPL sector bounds, respectively. The change of 
variables $\tilde{\delta}_i \triangleq 
\sigma_i\bar{\delta}_i^2$ from 
As. \ref{As:deltaBar} is incorporated, replacing 
$\sigma_i\bar{\delta}_i^2$ in the LMI with the 
free variable $\tilde{\delta}_i$, from which the 
maximum tolerable bound is recovered as 
$\bar{\delta}_i = \sqrt{\tilde{\delta}_i/\sigma_i}$. 
Similarly, Lm. \ref{Lm:LineDissipativityStep} is 
applied to identify the passivity indices 
$(\bar{\nu}_l, \bar{\rho}_l)$ for 
$\tilde{\Sigma}_l^{Line}$, leading to the LMI in 
\eqref{Eq:Th:LocalControllerDesign}.

(iii) To handle the transformation from $\rho_i$ 
to $\tilde{\rho}_i$ where $\tilde{\rho}_i = 
\rho_i^{-1}$, a congruence transformation is 
applied to \eqref{Eq:Neccessary_condition} using 
$T = \mathrm{diag}(1, 1, 1, \rho_i^{-1}, 1, 1)$, 
upon which the term $p_i\rho_i$ at position $(4,4)$ 
of \eqref{Eq:Neccessary_condition} becomes 
$p_i\tilde{\rho}_i$, resulting in 
\eqref{Eq:Transformed_Necessary_condition}. 
The bilinear terms $\bar{\nu}_l\tilde{\rho}_i$ 
appearing in 
\eqref{Eq:Transformed_Necessary_condition} are 
handled by introducing auxiliary variables 
$\xi_{il}$ and enforcing $\xi_{il} \geq 
\bar{\nu}_l\tilde{\rho}_i$ via the Schur complement 
constraint:
\begin{equation*}
\begin{bmatrix} 
1 & \bar{\nu}_l & \tilde{\rho}_i \\
\bar{\nu}_l & s_1 & \xi_{il} \\
\tilde{\rho}_i & \xi_{il} & s_2
\end{bmatrix} \geq \mathbf{0},
\end{equation*}
where $s_1$ and $s_2$ are semidefinite 
optimization variables. This replaces each 
bilinear term $\bar{\nu}_l\tilde{\rho}_i$ by 
$\xi_{il}$ in 
\eqref{Eq:Transformed_Necessary_condition}, 
rendering the constraint LMI-compatible.

(iv) The necessary conditions identified in 
Lm. \ref{Lm:CodesignConditions} are imposed via 
\eqref{Eq:Transformed_Necessary_condition} to 
support the feasibility and effectiveness of the 
global co-design in 
Th. \ref{Th:CentralizedTopologyDesign}. 
This unified LMI problem yields a one-shot 
procedure that simultaneously designs the local 
controllers, determines the passivity indices, 
minimizes the local $L_2$-gain bounds 
$\tilde{\gamma}_i$, and recovers the maximum 
tolerable bound $\bar{\delta}_i$ on the distributed 
controller contribution, guaranteeing the 
feasibility of the global co-design problem.
\end{proof}

\begin{figure*}[!hb]
\vspace{-5mm}
\centering
\hrulefill
\begin{equation}\label{Eq:Transformed_Necessary_condition}
\scriptsize
	\bm{
		-p_i\nu_i & 0 & 0 & 0 & -p_i\nu_i\bar{C}_{il} & -p_i\nu_i\\
		0 & -\bar{p}_l\bar{\nu}_l & 0 & -\bar{p}_l\xi_{il}C_{il} & 0 & -\bar{p}_l\bar{\nu}_l \\
		0 & 0 & 1 & \tilde{\rho}_i & 1 & 0 \\
		0 & -C_{il}\xi_{il}\bar{p}_l & \tilde{\rho}_i & p_i\tilde{\rho}_i & -\frac{1}{2}p_i\bar{C}_{il}\tilde{\rho}_i-\frac{1}{2}C_{il}\bar{p}_l\tilde{\rho}_i & -\frac{1}{2}p_i\tilde{\rho}_i \\
		-\bar{C}_{il}\nu_ip_i & 0 & 1 & -\frac{1}{2}\bar{C}_{il}p_i\tilde{\rho}_i-\frac{1}{2}\bar{p}_lC_{il}\tilde{\rho}_i & \bar{p}_l\bar{\rho}_l & -\frac{1}{2}p_l \\ 
		-\nu_ip_i & -\bar{p}_l\bar{\nu}_l & 0 & -\frac{1}{2}p_i\tilde{\rho}_i & -\frac{1}{2}\bar{p}_l & \tilde{\gamma}_i \\
	}\normalsize 
 > \0,\ \forall l\in \mathcal{E}_i, \forall i\in\N_N
\end{equation}
\end{figure*}

\vspace{-2mm}
\section{Simulation Results}\label{Sec:Simulation}
\vspace{-1mm}
We evaluated the proposed framework on an islanded DC MG with 4 DGs and ZIP loads interconnected through 4 transmission lines, implemented in MATLAB/Simulink. The converters have a nominal voltage of 120 V with reference voltage $V_r = 48$ V, and VSC saturation limits $V_{ti}^{\min} = 0$ V and $V_{ti}^{\max} = 60$ V, $\forall i \in \mathbb{N}_N$, reflecting realistic hardware constraints. The reference voltages $V_r$ and current sharing coefficient $I_s$ are obtained as a feasible solution of Th. \ref{Th:Equilibrium}. The local and global LMI problems in Th. \ref{Th:LocalControllerDesign} and Th. \ref{Th:CentralizedTopologyDesign} were solved using YALMIP/MOSEK with design parameters $p_i = 0.1$, $\forall i \in \mathbb{N}_N$ and $\bar{p}_l = 0.01$, $\forall l \in \mathbb{N}_L$, confirming LMI feasibility and yielding well-defined passivity indices $\{\nu_i, \rho_i\}$ and controller gains $\{K_{i0}, K_{aw,i}\}$ for all DGs.

\textbf{\textit{Voltage Regulation and Current Sharing}}: Sequential load variations were applied to stress-test the framework: constant current loads at $t=1$ s, disconnection and reconnection of constant impedance loads at $t=4$ s and $t=7$ s, and addition of CPLs at $t=5$ s, which actively pushes VSC inputs toward their saturation limits. Fig. \ref{fig.voltagecurrent}(a) shows that all DG voltages successfully track the 48 V reference throughout all scenarios, with transient deviations rapidly attenuated, notably including the CPL-induced oscillations at $t=5$ s where saturation becomes active. Fig. \ref{fig.voltagecurrent}(b) confirms that proportional current sharing is maintained across all operating conditions, with per-unit currents converging to a common value despite load changes and saturation events. The anti-windup compensator prevents integrator windup during saturation-active intervals, ensuring smooth recovery without steady-state offset. Without the anti-windup compensator, integrator windup during saturation-active intervals would cause steady-state offset and potential instability.

\begin{figure}[t]
    \centering
    \includegraphics[width=0.98\columnwidth]{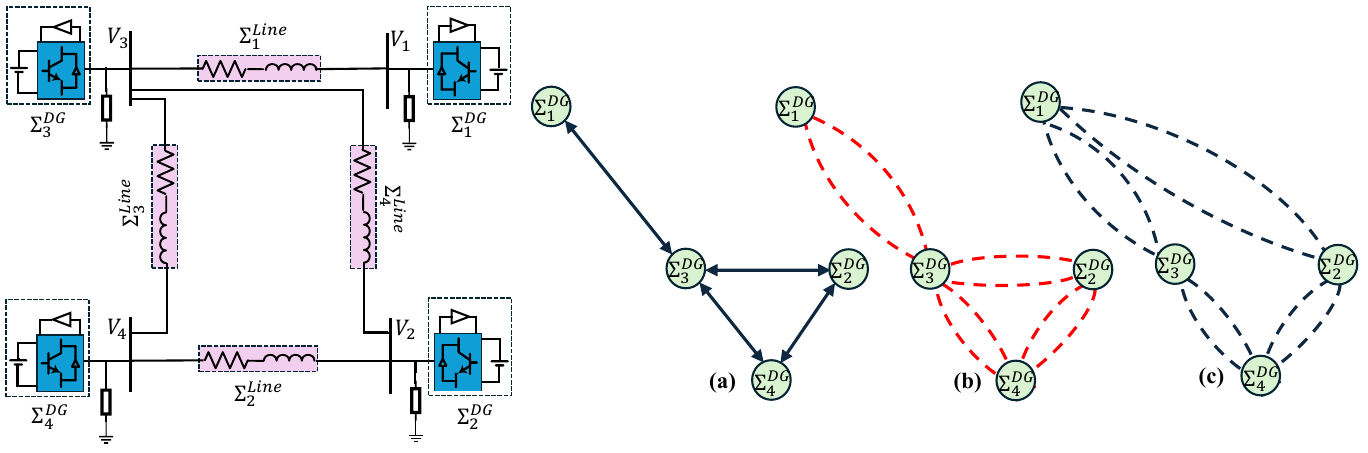}
    \vspace{-3mm}
    \caption{The islanded DC MG with 4 DGs and 4 lines: (a) physical topology, co-designed communication topology with anti-windup compensation under (b) hard graph and (c) soft graph constraints.}
    \label{fig.physicalcommunicationtopology}
\end{figure}

\begin{figure}[t]
    \centering
    \includegraphics[width=0.8\columnwidth]{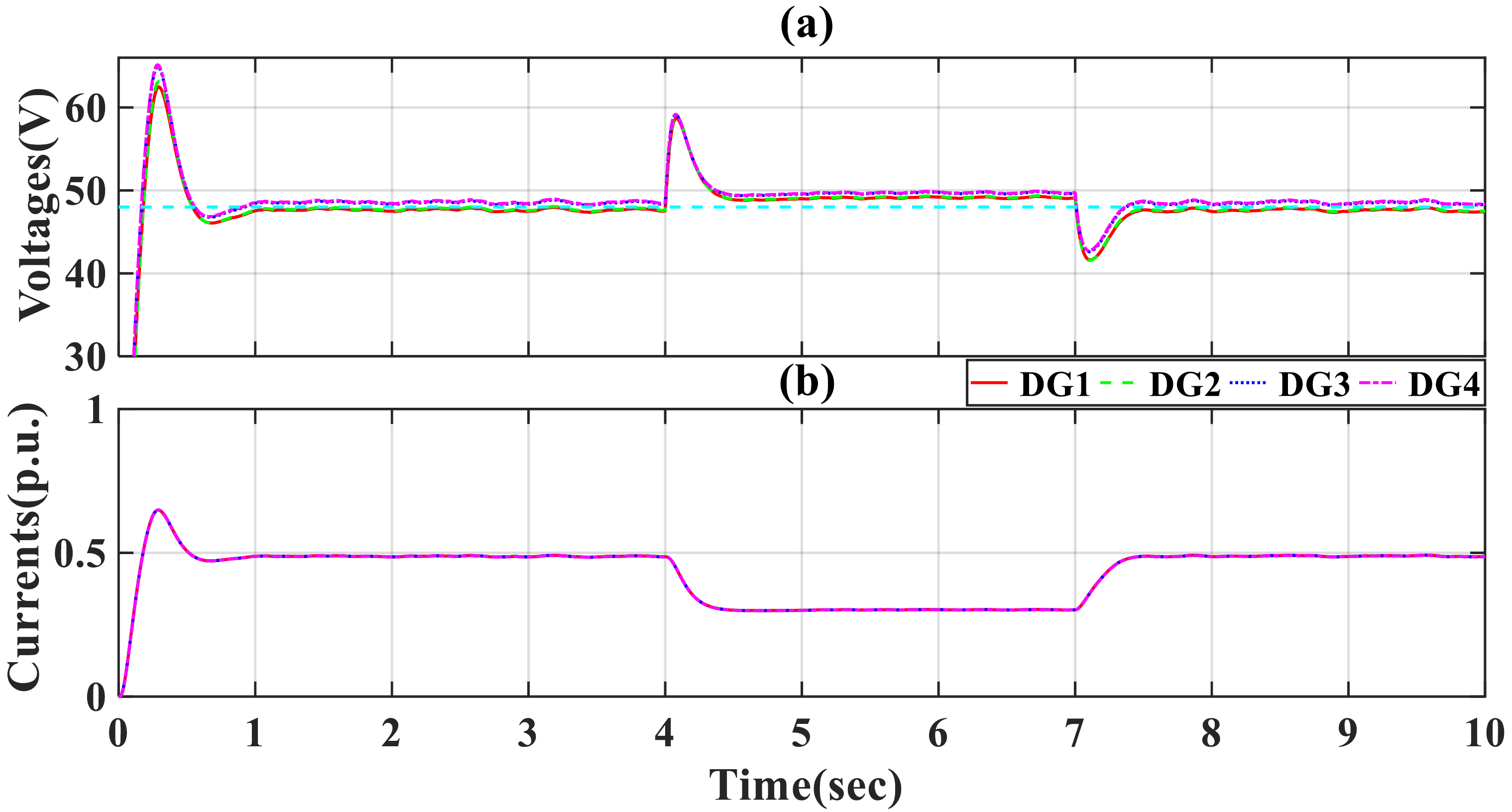}
    \vspace{-3mm}
    \caption{DG outputs: (a) voltages and (b) per-unit currents under the proposed dissipativity-based controller under ZIP loads and VSC-IS.}
    \label{fig.voltagecurrent}
\end{figure}

\textbf{\textit{Communication Topology Co-Design}}: Fig. \ref{fig.physicalcommunicationtopology} shows the physical topology and the two co-designed communication topologies obtained from Th. \ref{Th:CentralizedTopologyDesign}. Under the hard graph constraint, $\mathcal{G}^c$ is restricted to match $\mathcal{G}^p$, while the soft graph constraint allows topology optimization by penalizing deviations from $\mathcal{G}^p$. As shown in Fig. \ref{fig.physicalcommunicationtopology}(c), the soft constraint yields additional communication links that enhance robustness, particularly for $\Sigma_1^{DG}$ which benefits from additional coordination due to its physical isolation. Both topologies are obtained as a direct output of the global LMI \eqref{Eq:Th:CentralizedTopologyDesign0}, demonstrating that the co-design framework automatically trades off communication cost against closed-loop performance, a capability unavailable in sequential design methods where the topology is fixed prior to controller synthesis.

\vspace{-2mm}
\section{Conclusion}\label{Conclusion}
\vspace{-1mm}
This paper presents a dissipativity-based distributed control and communication topology co-design framework for DC MGs subject to ZIP loads and VSC-IS. A unified, non-iterative LMI-based framework is developed to simultaneously co-design local and distributed controllers, including an anti-windup compensator, and a communication topology, ensuring robust dissipativity from disturbance inputs to voltage regulation and current sharing performance outputs. The CPL nonlinearity and VSC-IS are each characterized via sector-boundedness, where the latter is handled through dead-zone decomposition, and both are simultaneously absorbed into the dissipativity analysis. The proposed approach eliminates droop coefficient tuning while maintaining proportional current sharing and can be efficiently evaluated using standard convex optimization tools. Future work will focus on extending the framework to AC and hybrid AC/DC microgrid configurations and experimental validation.

\section{Conclusion}\label{Conclusion}
This paper presents a dissipativity-based distributed droop-free control and communication topology co-design framework for DC microgrids subject to ZIP loads and VSC input saturation. By leveraging dissipativity theory and sector-boundedness concepts, a unified framework is developed to co-design local steady-state controllers, local feedback controllers, global distributed controllers, and a communication topology, so as to ensure robust dissipativity of the closed-loop DC MG from generic disturbance inputs to voltage regulation and current sharing performance outputs. The CPL nonlinearity and the VSC input saturation are each characterized via sector-boundedness, where the latter is handled through a dead-zone decomposition, and both are simultaneously absorbed into the dissipativity analysis through the S-procedure and Young's inequality. Unlike conventional droop-based methods, the proposed approach eliminates the need for precise droop coefficient tuning, enhancing voltage regulation accuracy while maintaining proportional current sharing. Moreover, the proposed framework is LMI-based and thus can be conveniently implemented and efficiently evaluated using existing standard convex optimization tools. Simulation results demonstrate the effectiveness and superior performance of the proposed framework compared to conventional droop control methods, particularly when handling CPLs and saturation-active operating conditions. Future work will focus on developing plug-and-play capabilities and extending the framework to AC microgrids and hybrid AC/DC microgrid configurations.

\bibliographystyle{IEEEtran}
\bibliography{References}

\end{document}